\definecolor{linkblue}{rgb}{0,0,0.8}
\definecolor{linkgreen}{rgb}{0,0.5,0}
\newcommand\eea{\end{eqnarray}}
\newcommand\bea{\begin{eqnarray}}
\def\be{\begin{equation}}
\def\ee{\end{equation}}
\newtheorem{theorem}{Theorem}
\newtheorem{lemma}{Lemma}
\newtheorem{claim}{Claim}
\theoremstyle{definition}
\newtheorem{definition}{Definition}
\newtheorem{assumption}{Assumption}
\newtheorem{example}{Example}
\newtheorem{remark}{Remark}
\newcommand{\eps}{\varepsilon}
\begin{document}

\begin{center}

{\Large  \bf

A {de Sitter} no-hair theorem for 3+1d Cosmologies \\[0.1cm] {with isometry group forming 2-dimensional orbits}  
\\[0.7cm]}
{\large  Paolo Creminelli${}^{1}$, Or Hershkovits${}^{2}$, 
Leonardo Senatore${}^{3}$, Andr\'as Vasy${}^{2}$ }
\\[0.7cm]
%\vspace{.7cm}
%\vspace{.3cm}

{\normalsize { \sl $^{1}$Abdus Salam International Centre for Theoretical Physics\\ Strada Costiera 11, 34151, Trieste, Italy\\[0.1cm]
IFPU - Institute for Fundamental Physics of the Universe,\\
Via Beirut 2, 34014, Trieste, Italy}}\\
\vspace{.4cm}

{\normalsize { \sl $^{2}$ Department of Mathematics,\\ Stanford University, Stanford, CA 94306}}\\
\vspace{.4cm}

{\normalsize { \sl $^{3}$ Stanford Institute for Theoretical Physics,\\Department of Physics, Stanford University, Stanford, CA 94306}}\\
\vspace{.2cm}

{\normalsize { \sl %$^{3}$ 
Kavli Institute for Particle Astrophysics and Cosmology, \\
Department of Physics and SLAC, Stanford University, Menlo Park, CA 94025}}\\
\vspace{.3cm}

\vspace{.3cm}

\end{center}

\vspace{.8cm}

\hrule \vspace{0.3cm}
{\small  \noindent \textbf{Abstract}} \\[0.3cm]
\noindent 
We study, using Mean Curvature Flow methods, 3+1 dimensional cosmologies with a positive cosmological constant, matter satisfying the dominant and the strong energy conditions, and with spatial slices that can be foliated by {2-dimensional} surfaces that are the closed orbits of a symmetry group. If these surfaces have non-positive Euler characteristic (or in the case of 2-spheres, if the initial 2-spheres are large enough) and also if the initial spatial slice is expanding everywhere, then we prove that asymptotically the spacetime becomes  physically indistinguishable from de Sitter space on arbitrarily large regions of spacetime. This holds true notwithstanding the presence of initial arbitrarily-large density fluctuations. 
 
 \vspace{0.3cm}
\hrule

%\newpage
%\tableofcontents
%\newpage

 \vspace{0.3cm}
\newpage
\tableofcontents

\section{Introduction}

Inflation is widely believed to be a cosmological epoch that occurred before the epoch  of radiation dominance (the hot big bang). Typically, it is driven by a scalar field that runs down its flat potential, homogeneously and slowly, and leads to an exponential expansion of the universe. Inflation seems to be required to produce an approximately flat homogeneous and isotropic universe endowed with small perturbations that, in the theory of inflation, are due to the quantum fluctuations of  the scalar field while it rolls down. Inflation has been extraordinarily successful when compared with observational data from the Cosmic Microwave Background (see for example~\cite{Aghanim:2018eyx,Akrami:2018odb,Akrami:2019izv}) or from the Large-Scale Structure of the universe (see for example~\cite{DAmico:2019fhj,Ivanov:2019pdj,Colas:2019ret,Abbott:2017wau,Abbott:2018wzc}). Despite all these observational successes, the onset of inflation has been a source of heated debate
for a long time. If a region of space somewhat larger than the Hubble length during inflation is homogeneously filled with the inflationary scalar field at the top of its potential, then inflation starts, but the debate is about how likely it is for the universe to have such a homogenous initial condition. This is the so-called `initial patch problem' (see for example~\cite{Ijjas:2015hcc}). 

Solid progress on this matter was hard to achieve because the presence of large inhomogeneities and the formation of singularities made it hard to attack the problem both numerically and analytically, at least without imposing symmetries. Recently, however, there  has been significant progress on both fronts. Initially, on the numerical side, the codes that can handle singularities and that are normally used in the prediction of the templates of gravitational waves from black-hole mergers~\cite{Pretorius:2005gq} have been applied to simulate the early universe. Ref.~\cite{East:2015ggf}, and subsequently~\cite{Clough:2016ymm,Clough:2017efm}, have found {numerical} evidence that, on an extremely large set of inhomogenous initial conditions, inflation always starts. On the analytical side, a combination of Mean Curvature Flow techniques (see for example~\cite{gerhardtbook}) and the now-proven Thurston Geometrization Classification (see \cite{besse1987einstein} Theorem 4.35 and \cite{thurston1997three,10.2307/2152760}) allowed to prove some partial results in the general case, without imposing extra symmetries. In particular, approximating the inflationary potential as a positive cosmological constant, and assuming that matter satisfies the weak energy condition and that all singularities are of the  so-called  crushing kind, Ref.~\cite{Kleban:2016sqm} has shown that, for almost all topologies of the spatial slices of a cosmological spacetime, the volume of these slices (assumed to be, initially, expanding everywhere) will grow with time (see also~\cite{barrow1985closed}); moreover, there is always an open neighborhood that expands at least as fast as the flat of de Sitter space. This suggests, though does not prove, that the volume will go to  infinity, matter will dilute away, and the universe will resemble de Sitter space  in arbitrarily large regions of spacetime. This statement was recently proven  in 2+1 dimensions (with the additional assumption that matter satisfies the strong and the dominant energy condition~\cite{Creminelli:2019pdh}; see~\cite{Barrow:2006cw} for proofs with stronger assumptions on the matter content and on the initial conditions). Historically, it has been conjectured for many years and with different level of refinment (see for instance \cite{Gibbons:1977mu,Hawking:1981fz,Wald:1983ky,Kleban:2016sqm}) that in the presence of a positive cosmological constant, cosmologies that are initially ``sufficiently expanding" should asymptote to de Sitter space. This is usually dubbed the de Sitter no-hair conjecture.

In this paper we focus on 3+1 dimensions, and we assume that the spatial slices can be foliated by {2-dimensional} surfaces that are the closed orbits of a symmetry group (in addition to the assumptions just discussed for the theorem in 2+1 dimensions). We will find that asymptotically in the future, the spacetime appears physically indistinguishable from de Sitter space, { in the following sense}. {Any future observers will have at their disposal a vanishing amount of energy and momentum to make any experiment. Furthermore,} the length of any future-directed timelike or null curve approaches the one computed with the de Sitter metric {(see Theorem \ref{main_thm_intro} for the full statement, and Section \ref{equiv_de_sit_sec} for a physical explanation of why the mathematical results imply that, asymptotically, the spacetime is physically indistinguishable from de Sitter, {in a low energy sense})}. In the context of 3+1 dimensions stronger convergence results were obtained in~\cite{TchapndaN.:2003bv,BlaiseTchapndaN.:2003fv,Tchapnda:2007uy,LeFloch:2010jg,Andreasson:2013pga}, assuming more symmetries and prescribing specific PDEs which govern the matter stress tensor (from point particles to stiff fluids). Assuming homogeneity of the \textit{entire} spatial slices (while here we assume homogeneity only on 2-dimensional slices), Wald proved pointwise convergence to de Sitter assuming the strong and the dominant energy condition for matter~\cite{Wald:1983ky} for all Bianchi universe except type IX. 

It is important to stress that a de Sitter no-hair theorem is also a statement about the asymptotic future of the present universe, assuming that the present acceleration is due to a cosmological constant. 

{Let us mention that from the geometric standpoint,  our result fits into an extensive body of literature of studying the structure of spaces satisfying some curvature conditions, using special submanifolds. Such special submanifolds could be geodesics (as in the Bonnet-Myers theorem~\cite{BonnetMyers}), minimal surfaces (as in the proof of the positive mass theorem \cite{SchoenYau}) or submanifolds produced by some curvature flows (as in the proofs of the Riemannian Penrose inequality~\cite{Hui_Ilm} and of the high co-dimensional isoperimetric inequality for surfaces \cite{Schulze}). In our setting, the curvature conditions imposed by the Einstein equation and the energy conditions are reminiscent of a lower Ricci curvature bound - a topic which has been studied in depth in the works of Cheeger, Colding, Naber and others (c.f. \cite{CheegerGromoll,CheegerColding, ColdingNaber, CheegerNaber}).}  \\      

{We have tried to write this paper in a way that would be approachable to both the cosmology and the geometric analysis communities. We have therefore decided to spell out many derivations which are ``standard'' in one discipline, for the benefit of the other community.}

\section{General assumptions and known results}\label{general_ass_sec}

We will prove a theorem that uses some properties of the topology of 3-dimensional manifolds, as well as of the mean curvature flow. It requires the following assumptions~\cite{Creminelli:2019pdh}, on top of others that we will specify next:

\begin{enumerate}[(A)]

\item There is  a ``cosmology'', which is defined as a connected $3+1$ dimensional spacetime $M^{(3+1)}$ with a compact Cauchy surface.  This implies that the spacetime is topologically  $ M^{(3)}\times \mathbb{R}$ where $M^{(3)}$ is a compact $3$-manifold, and that it can be foliated by a family of topologically identical Cauchy surfaces $M_t$ \cite{Geroch}. We fix one such foliation,  {\it {\it i.e.}}~such a time function $t$, with $t\in[t_0,+\infty)$, and with associated lapse function $N$: $N^{-2} := -\partial_\mu t \partial^\mu t$, $N >0$. We consider manifolds that are initially expanding everywhere, {\it {\it i.e.}}~there is an initial slice,  $M_{0}$, where $K>0$ everywhere, with $K$ being the mean curvature with respect to the future pointing normal to $M_0$. For example, $K>0$ holds if one has a global crushing singularity in the past.

{\item $M^{(3+1)}$ satisfies Einstein's field equation 
\be\label{eq:Einstein_cond}
R_{\mu\nu}-\frac12 g_{\mu\nu} R = 8 \pi G_N (T_{\mu\nu}-\Lambda g_{\mu\nu}) \;,
\ee
where, $R_{\mu\nu}$ is the Ricci curvature tensor, $R$ is the scalar curvature,  $\Lambda$ is the cosmological constant and $T_{\mu\nu}$ is the stress-energy tensor of all the other forms of matter.  
}
\item There is a positive cosmological constant and matter that satisfies the Dominant Energy Condition (DEC) and the Strong Energy Condition (SEC). The DEC states that $-T^\mu{}_\nu k^\nu$ is a future-directed timelike or null vector for any future-directed timelike vector $k^\mu$. The DEC implies the Weak Energy Condition (WEC),  $T_{\mu \nu}k^{\mu}k^{\nu} \geq 0$ for all time-like vectors~$k^{\mu}$. The SEC, in $3+1$ dimensions, reads: $(T_{\mu\nu}- \frac{1}{2}g_{\mu\nu} T) k^\mu k^\nu  \geq 0$ for any future-directed timelike vector $k^\mu$.

\item\label{only_crush}  We will also need a technical assumption, see Definition~\ref{def:crushing}: the only spacetime singularities are of the crushing kind~\cite{Eardley} (thus singularities that have zero spatial volume). Physically, these are the only singularities that are believed to be relevant.

\end{enumerate}

Let us comment on the physical restrictions implied by the above hypotheses. The SEC and the DEC are satisfied by non-relativistic matter, radiation and the gradient energy of a scalar field \footnote{For SEC indeed
\be
T_{\mu\nu} = \partial_\mu\phi \partial_\nu\phi - \frac12 g_{\mu\nu} (\partial\phi)^2 
\quad\Rightarrow\quad  \left(T_{\mu\nu}- \frac{g_{\mu\nu}}{2} T\right) k^\mu k^\nu = (\partial \phi \cdot k)^2 \geq 0\;.
\ee
}. The Inflationary potential violates SEC and if the potential is negative somewhere also DEC is violated. However, in our setup the Inflationary potential is represented by the positive cosmological constant, which is a good approximation in the inflationary region of the potential. 

We also comment on the definition of a crushing singularity, as we follow~\cite{Creminelli:2019pdh} in adopting a slight generalization of the Definitions 2.10 and 2.11 in~\cite{Eardley}. Our definition will agree with theirs in the case of asymptotically flat spacetimes.

\begin{definition}\label{def:crushing} Analogously to Definition 2.9 of~\cite{Eardley}, a future crushing function $\tilde t$ is a globally defined function on $M^{(3+1)}$ such that on a globally hyperbolic neighborhood ${\cal{N}}\cap\{\tilde t>c_0\}$, $\tilde t$ is a Cauchy time function with range $c_0<\tilde t<+\infty$ ($c_0 \ge 0$ is a constant), and such that the level sets~$S_c=\{\tilde t=c\}$, with $c>c_0$, have mean curvature $\tilde K<-c$.~\footnote{For example in a Schwarzschild-de Sitter spacetime in the standard coordinates, one could take $\tilde t$ to be a function of $r$ for $r$ close to 0, so the level sets $S_c$ would be $r={\rm const}$.}  We shall say that a Cosmology has potential singularities only of the crushing kind if there is an open set ${\cal{N}}$ such that, outside ${\cal{N}}$, the inverse of the lapse of the $t$ foliation, $N^{-1}$, is bounded, and such that ${\cal{N}}$ contains a Cauchy slice and admits a future crushing function $\tilde t$ and, for any given $c$, in $\{\tilde t \leq c\}$, $N^{-1}$ is bounded. 
\end{definition}

\noindent In physical terms,  this ${\cal{N}}$ corresponds to a subset of the interior of black holes, and we are requiring that any possible pathology takes place only for $\tilde t\to \infty$. \\

{Choosing any $c_1\geq c_0$ that we later specify, we define a new time function on $M^{(3+1)}$, which we call $t$ from now on, such that the lapse $N$ is set to 1 in the region where $\tilde t\leq c_1$. In this region, the new time function $t$ now satisfies $\partial_\mu t\partial^\mu t=-1$.}\\

We will use the Mean Curvature Flow (MCF) of codimension-one spacelike surfaces in Lorentzian manifolds. This is defined as the deformation of a slice as follows: $y^\mu(\cdot,\lambda):= y_\lambda$ is,  at each $\lambda$, a mapping between the initial spatial manifold ${{M}}_0$ (which is parametrized by $x$), and the global spacetime, ${{M}}_0 \times [\lambda_{\rm in},\lambda_{\rm max}) \to M^{(3+1)}$. We take $\lambda_{\rm in}=0$. The evolution under the change of $\lambda$ is given by (see for instance \cite{EH})
\be
\frac{d}{d\lambda}y^\mu(x,\lambda)=K n^\mu (y^\alpha)\ ,
\ee 
where $n^\mu$ is the future-oriented vector orthonormal to the surface of constant $\lambda$. {We denote by $\mathcal{M}_\lambda$ the geometric image of $y(\cdot,\lambda)$}.  

 \begin{figure}
\begin{center}
\includegraphics[width=11cm,draft=false]{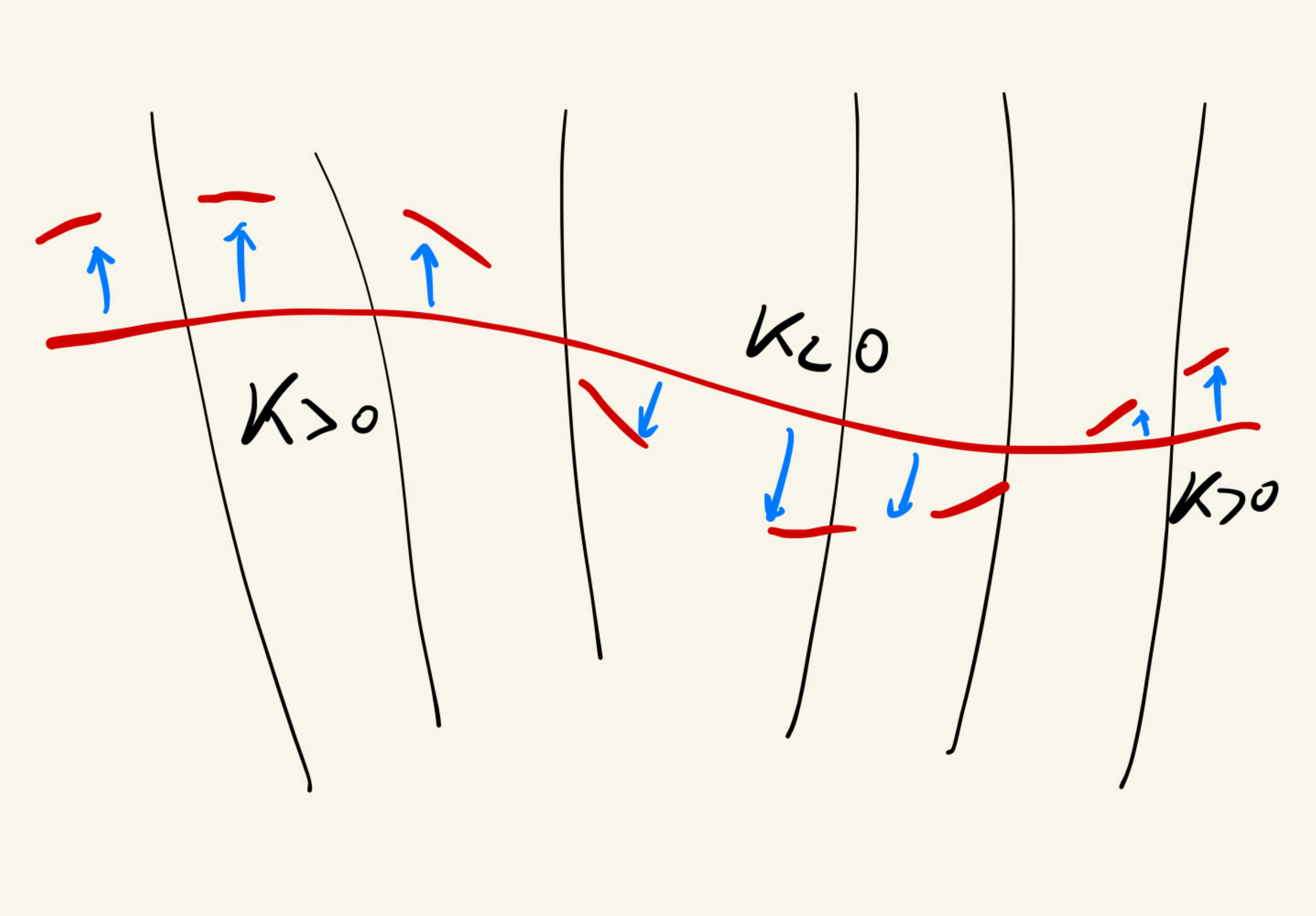}
\end{center}
\caption{\small A depiction of Mean Curvature Flow. The new surface has larger or equal volume than the previous one. \label{fig:MKF} }
\end{figure}

Using the first variation of area formula
  \be\label{normdeform}
{\cal{L}}_{n} \log\sqrt{h}=K\ ,
 \ee
one gets the variation of {the volume element} $\sqrt{h}$ under the flow: ${d \over d \lambda} \sqrt{h} = K^2 \sqrt{h}$.  Therefore  the total spatial volume 
$V{(\lambda)} := \int_{{ \mathcal{M}_\lambda}} d^4 x \sqrt{h}$ satisfies
\be \label{volgrowth}
{d V \over d \lambda} = \int_{{ \mathcal{M}_{\lambda}}} d^4 x \sqrt{h}\,K^2 \geq 0 \; .
\ee

Hence after the deformation, the new surface has either strictly larger or equal volume (see Fig.~\ref{fig:MKF}).  MCF has been very much studied in the context of Riemannian manifolds, but there is quite a large literature also for the Lorentzian (or semi-Riemannian) one, see~\cite{EH, gerhardtbook}.

We will assume that $M^{(3+1)}$ satisfies Einstein equations, and we will use MCF to probe the geometry of $M^{(3+1)}$. This is possible because the flow  is endowed by many regularity  properties as we review below. {Importantly, in the Lorentzian  cosmological context, the flow is globally graphical, which is rarely a natural assumption in the Riemannian setting}.

{
The evolution of $K$  under MCF reads
\begin{equation}\label{K_evolve}
0=\frac{dK}{d\lambda}-\Delta K+\frac{1}{3}K^3+(\sigma^2+\mathrm{Ric}(n,n))K\ ,
\end{equation}
where $\Delta$ is the Laplacian operator on the {three dimensional evolving} surface, ${\cal M}_\lambda$, where we remind that $\sigma^2$ is the norm squared of the traceless part of the second fundamental form{, and where $\mathrm{Ric}$ is the Ricci tensor} (See \cite[Proposition 3.3]{EH}). Substituting $(n,n)$ into  the Einstein equation \eqref{eq:Einstein_cond}, we get
\begin{equation}\label{play_ein1}
\mathrm{Ric}(n,n)+\frac{1}{2}R=8\pi G_{N}(T(n,n)+\Lambda)\ ,
\end{equation}   
while tracing \eqref{eq:Einstein_cond} yields
\begin{equation}\label{play_ein2}
-R=8\pi G_{N}(T-4\Lambda)\ ,
\end{equation}
where $T$ is the trace of $T_{\mu\nu}$. Combining \eqref{play_ein1} with \eqref{play_ein2} gives
\begin{equation}
\mathrm{Ric}(n,n)=-8\pi G_{N}\Lambda +8\pi  G_N\left(T(n,n)-\frac{1}{2}Tg(n,n)\right)\ ,
\end{equation}
 which, after substituting into \eqref{K_evolve} gives
\be\label{eq:MCF}
\frac{d K}{d \lambda} -\Delta K + \frac{1}{3} K\left(K^2-K_\Lambda^2\right) + \sigma^2 K + R^{(m)}_{\mu\nu} n^\mu n^\nu K =0 \;,
\ee
}
where 
\begin{equation}\label{K_lambda_def}
K_\Lambda^2 := 24\pi G_N \Lambda>0\ ,
\end{equation}  
and
\be
R^{(m)}_{\mu\nu} := 8\pi G_N \left(T_{\mu\nu}-\frac{g_{\mu\nu}}{2} T\right) \ .
\ee
The SEC gives
\be\label{eq:SEC}
R^{(m)}_{\mu\nu} n^\mu n^\nu \geq 0 \;.
\ee
Two properties of the evolution under MCF are worthwhile mentioning. First, if a surface is spacelike, it remains so: in fact the local volume form is non-decreasing under MCF, but it would vanish if the surface became null anywhere~(see for example~\cite{Kleban:2016sqm}).  Second, it also preserves the property that $K>0$ everywhere  (see \emph{e.g.} \cite{gerhardtbook}, Proposition 2.7.1).  Intuitively, this is because the flow stops in any region where $K$ approaches zero.

Our stated assumptions were used in~\cite{Creminelli:2019pdh} to prove the following useful statements about  the maximum of $K$ and the existence of the flow. We reproduce them here for convenience, referring to~\cite{Creminelli:2019pdh} for their proofs.

\begin{theorem}[Bound on the Maximum of $K$]\label{th:boundonmax}~\cite{Creminelli:2019pdh}
Let ${\cal M}_\lambda$ be smooth compact spacelike hypersurfaces satisfying the MCF equations, in an interval $[0,\lambda_1]$, inside the smooth $(3 + 1)$-dimensional Lorentzian manifold  $M^{(3+1)}$ {satisfying \eqref{eq:Einstein_cond} and SEC}. Suppose also there exists a point $(x,\lambda)$, with $0\leq\lambda\leq\lambda_1$, such that $K(x,\lambda)>K_\Lambda$, then  we have
\be\label{eq:maxbound}
K_m(\lambda_1)\leq K_\Lambda+ e^{- \frac{2}{3}K_\Lambda^2\lambda_1}(K_m(0)-K_\Lambda) \le K_\Lambda\left(1+C_1 e^{- \frac{2}{3}K_\Lambda^2\lambda_1}\right)\ .
\ee
with $C_1=\max(K_m(0)/K_\Lambda-1, 0)$. So the maximum, if larger than $K_\Lambda$, decays exponentially fast towards $K_\Lambda$ with a rate given by the cosmological constant.
\end{theorem}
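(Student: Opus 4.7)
The plan is to reduce the PDE evolution \eqref{eq:MCF} satisfied by $K$ along the flow to a scalar ODE via the parabolic maximum principle. Let $K_m(\lambda) := \max_{x \in \mathcal{M}_\lambda} K(x,\lambda)$; compactness of the spacelike slices ensures the max is attained. At a spatial maximizer, the intrinsic Laplacian satisfies $\Delta K \leq 0$, and by Hamilton's trick (applied to the smooth flow on a compact evolving family), the upper right derivative of $K_m$ is bounded by $\partial_\lambda K$ at that point. Combining this with \eqref{eq:MCF} and discarding the nonnegative terms $\sigma^2 K$ and $R^{(m)}_{\mu\nu} n^\mu n^\nu K$ --- nonnegative because $K > 0$ is preserved along the flow, $\sigma^2 \geq 0$, and SEC gives $R^{(m)}_{\mu\nu} n^\mu n^\nu \geq 0$ --- one obtains the differential inequality
\begin{equation*}
\frac{d K_m}{d\lambda} \;\leq\; -\frac{1}{3}\, K_m \left(K_m^2 - K_\Lambda^2\right).
\end{equation*}

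The next step is scalar ODE comparison. Introduce $\dot g = -\tfrac{1}{3} g(g^2 - K_\Lambda^2)$ with $g(0) = K_m(0)$; standard sub/supersolution comparison yields $K_m(\lambda) \leq g(\lambda)$. The equilibrium $g \equiv K_\Lambda$ is globally attracting: in the regime $g > K_\Lambda$, writing $f = g - K_\Lambda$ and using $g(g + K_\Lambda) \geq K_\Lambda \cdot 2K_\Lambda = 2K_\Lambda^2$, we get $\dot f \leq -\tfrac{2}{3} K_\Lambda^2 f$, so Gr\"onwall gives $f(\lambda_1) \leq e^{-(2/3) K_\Lambda^2 \lambda_1} f(0)$, which is the advertised decay. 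In the remaining case $K_m(0) \leq K_\Lambda$, the equilibrium acts as a barrier from above and $g$ is monotone nondecreasing toward $K_\Lambda$ without ever crossing it, so $K_m(\lambda_1) \leq K_\Lambda$ trivially; the definition $C_1 = \max(K_m(0)/K_\Lambda - 1,\, 0)$ simply packages both cases into a single uniform bound.

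The main technical care lies in the rigorous application of Hamilton's trick in the Lorentzian MCF setting: the function $\lambda \mapsto K_m(\lambda)$ is only Lipschitz a priori, and one must verify that at each $\lambda$ the upper Dini derivative is indeed controlled by the PDE's right-hand side evaluated at a spatial maximizer. This step is standard for a smooth flow of smooth compact hypersurfaces on $[0,\lambda_1]$, but it is the one place where the regularity of the MCF must be invoked rather than purely algebraic manipulation. A secondary point, already used implicitly, is the preservation of the positivity $K > 0$ along the flow under our assumptions; without it, the nonnegativity of $\sigma^2 K$ and $R^{(m)}_{\mu\nu} n^\mu n^\nu K$ would not be automatic, and the maximum-principle reduction would fail.
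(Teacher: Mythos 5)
Your proof is correct and follows what is clearly the intended (standard) argument: apply the maximum principle/Hamilton's trick to the evolution equation \eqref{eq:MCF} at the spatial maximizer, discard $\Delta K \leq 0$ and the nonnegative terms $\sigma^2 K$ and $R^{(m)}_{\mu\nu}n^\mu n^\nu K$ (using preservation of $K>0$, $\sigma^2\geq 0$, and the SEC bound \eqref{eq:SEC}), then compare against the scalar ODE $\dot g = -\tfrac13 g(g^2-K_\Lambda^2)$ and linearize near $g=K_\Lambda$ using $g(g+K_\Lambda)\geq 2K_\Lambda^2$ to get the $e^{-\frac23 K_\Lambda^2 \lambda}$ rate. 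The paper states this result by reference to~\cite{Creminelli:2019pdh} rather than reproducing the proof, but your argument matches the ingredients the paper itself supplies (equations \eqref{eq:MCF}, \eqref{eq:SEC}, and the discussion of $K>0$ being preserved under the flow), and there are no gaps.
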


Notice that if no  point $(x,\lambda)$ as in the hypotheses of the theorem exists, then the maximum $K_m(\lambda)$, with $0\leq\lambda\leq\lambda_1$, is automatically $\leq K_\Lambda$.\\

{We also have the following  long time existence theorem, which follows from~\cite{EH}, Assumption (\ref{only_crush}), and Theorem \ref{th:boundonmax}:}

\begin{theorem}[Existence of the flow]\label{theorem:existence}~\cite{Creminelli:2019pdh}
Let $M^{(3+1)}$ be a Cosmology satisfying the SEC and DEC, having potential singularities only of the crushing kind. Let $M_{0}$ be a compact smooth spacelike
hypersurface in $M^{(3+1)}$. Then there exists a unique family (${\cal M}_\lambda$) of smooth compact spacelike hypersurfaces satisfying the MCF equations with initial condition $M_{0}$, in the semi-interval $[0,+\infty)$.
\end{theorem}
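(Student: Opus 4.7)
The plan is to apply a standard continuation argument for parabolic flows, combining short-time existence from Ecker--Huisken~\cite{EH}, the a priori bound on $K$ from Theorem~\ref{th:boundonmax}, and the crushing-barrier structure of Assumption~(\ref{only_crush}). By~\cite{EH}, the smooth compact spacelike initial datum $M_0$ gives rise to a unique smooth solution $({\cal M}_\lambda)_{\lambda\in [0,\lambda_{\max})}$ on some maximal semi-interval, and $\lambda_{\max}=+\infty$ unless either (i) the second fundamental form $|A|$ blows up, or (ii) the flow escapes every compact subset of $M^{(3+1)}$ as $\lambda\to\lambda_{\max}$.

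I would rule out (ii) first. Outside the neighborhood ${\cal N}$ of Assumption~(\ref{only_crush}) the inverse lapse $N^{-1}$ is bounded, and inside ${\cal N}$ it is bounded on every sublevel set $\{\tilde t\leq c\}$. Since $K>0$ is preserved along MCF, the level sets $S_c=\{\tilde t=c\}$ --- which have mean curvature $\tilde K<-c$ --- should serve as barriers: monitoring $c(\lambda):=\max_{{\cal M}_\lambda}\tilde t$ and applying the Lorentzian maximum principle at a tangential contact of ${\cal M}_\lambda$ with $S_{c(\lambda)}$ from its past side would force $K\leq\tilde K<-c$ at that point, contradicting $K>0$ whenever $c>\max_{M_0}\tilde t$. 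Hence ${\cal M}_\lambda$ stays in $\{\tilde t\leq c_*\}$ for some $c_*$ depending only on $M_0$, where $N^{-1}$ is uniformly bounded. Combined with the $K$-bound and the identity $dt/d\lambda=K/N$, this keeps the $t$-coordinate bounded on any finite $\lambda$-interval, so the flow stays in a fixed compact subset of $M^{(3+1)}$ up to any finite time.

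To rule out (i), it suffices to bound $|A|^2=\sigma^2+\tfrac{1}{3}K^2$ uniformly on this compact region. Theorem~\ref{th:boundonmax} already gives $0<K\leq C$, so it remains to control $\sigma^2$. Writing the evolution equation for $\sigma^2$ (derived analogously to~(\ref{eq:MCF})), the ambient curvature terms become bounded on the compact region, and the parabolic maximum principle produces a uniform $L^\infty$ bound on $\sigma^2$. A standard bootstrap on the linearized MCF equation then controls all higher derivatives of $A$, so the continuation criterion forces $\lambda_{\max}=+\infty$. Uniqueness is inherited from the short-time theory of~\cite{EH}.

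The step I expect to be the main obstacle is the Lorentzian barrier argument in (ii): one must carefully match orientations of the future-pointing normals at the tangential contact, fix sign conventions so that $K$ and $\tilde K$ are directly comparable, and invoke a Hopf-type maximum principle valid for spacelike hypersurfaces. Once this is in place, the curvature bounds in (i) and the continuation argument are fairly routine adaptations of the Riemannian MCF theory to the Lorentzian setting.
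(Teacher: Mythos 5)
Your approach matches the paper's: the text itself states that this theorem ``follows from~\cite{EH}, Assumption (D), and Theorem~\ref{th:boundonmax}'' and defers the details to~\cite{Creminelli:2019pdh}, and these are exactly the three ingredients you combine --- Ecker--Huisken short-time existence and continuation, the barrier argument against the crushing level sets $S_c$ (using $K>0$ preserved vs.\ $\tilde K<-c$ to confine the flow to $\{\tilde t\le c_*\}$), and the $K$-bound of Theorem~\ref{th:boundonmax}. Two small cautions on the details, not gaps in the strategy: the pointwise relation $dt/d\lambda=K/N$ actually carries a gradient factor $\gamma=-g^{(4)}(n,\hat T)\ge1$, so the clean identity holds only at the extremum of $t$ on $\mathcal{M}_\lambda$ (where $\nabla t\parallel n$), and you should invoke Lemma~\ref{Ham_trick} there --- exactly as the paper does when computing $t_{\min}'(\lambda)$ in Section~\ref{space_time_sec}; and the uniform bound on $|A|^2$ is not a bare maximum-principle consequence of the $\sigma^2$-evolution (which has quartic reaction terms) but comes from the Ecker--Huisken curvature and gradient estimates (their Theorem~4.4), which become applicable once the barrier confinement and $K$-bound are in place.
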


\section{Symmetry assumptions}\label{symmetry_assumption_sec}

For some of the most interesting settings it is sufficient to make the following {simple symmetry assumption}.

\begin{assumption}[Simplified symmetry assumption]\label{simp_ass}
There is a Lie group $G$ which acts  on $M^{(3)}$ such that the induced action on $M^{(3+1)}$ is by isometries, and {such} that the orbits under $G$ are closed surfaces. Assume that the orbits of $G$ are two-sided ({\it i.e.}, with trivial normal bundle).  
\end{assumption}

Taking $M_0$ and considering its mean curvature flow ${\cal M}_{\lambda}$ starting from  $M_0$, we see that the isometries in $G$ preserve the level sets of $\lambda$ as well.

\begin{example}\label{torus_ex}
Consider $M^{(3)}$ being the three torus $\mathbb{T}^3=S^1\times S^1\times S^1$ such that given a point $(x,t)=(\theta_1,\theta_2,\theta_3,t)\in \mathbb{T}^3\times \mathbb{R}=M^{(3+1)}$, the metric at $(x,t)$ is independent of $\theta_1,\theta_2$. Taking $G=S^1\times S^1$, it acts on $M^{(3+1)}$ by isometries, as for every $\phi=(a,b)\in S^1\times S^1$ we can set  $\phi(x,t)=(\theta_1+a,\theta_2+b,  \theta_3,t)$.
\end{example}

\begin{example}\label{sphere_ex}
Consider $M^{(3)}$ being the product $S^2\times S^1$  and $M^{(3+1)}=S^2\times S^1\times  \mathbb{R}$. Letting $G=SO(3)$ be the group of orientation preserving orthogonal transformations of the three Euclidean space,  $G$ acts on  $M^{(3+1)}$ by $\phi(x,\theta,t)=(\phi(x),\theta,t)$, where $(x,\theta,t)\in S^2\times S^1\times \mathbb{R}$. If this action is by isometries, then assumption \ref{simp_ass} is satisfied . One can construct such a metric as follows: Let $P^+$ be the north-pole of $S^2$, and  choose a metric $h_0$ for $M^{(3+1)}$ \textit{along}  the surface $\{P^{+}\}\times S^1\times \mathbb{R}$, with signature $(+,+,+,-)$, with $\partial_t$ timelike, and such that for each $(\theta, t)$ all rotations of $S^2$ across the axis from the north to south pole are isometries of $T_{(P^+,\theta,t)}M^{(3+1)}$ (in the linear algebra sense). Then for every $x\in S^2$ choose any $\phi\in G$ s.t. $\phi(x)=P^{+}$ and let $h|_{(x,\theta,t)}=\phi^{*}h|_{(P^{+},\theta,t)}$.  
\end{example}

The drawback of working only under the simplified assumption above is that it imposes that the compact $2$-dimensional orbits of $G$ have a transitive isometry group acting on them.  {Compact surfaces of negative curvature however have only discrete isometry groups}. Thus, hyperbolic surfaces, which are ``most surfaces'' in some sense, can not arise under the above assumption. To overcome this we assume:

\begin{assumption}[General symmetry assumption]\label{full_ass}
There is a Lie group $\tilde{G}$ which acts  on some cover $\pi:\tilde{M}^{(3)}\rightarrow M^{(3)}$, such that the induced action on $\tilde{M}^{(4)}:=\tilde{M}^{(3)}\times \mathbb{R}$ is by isometries. Assume further that the orbits of $\tilde{G}$ are two dimensional complete {({\it i.e.}~with no edges)} surfaces, and that for each such orbit $\tilde{\Sigma}$, its projection $\Sigma:=\pi(\tilde{\Sigma})$ is a two-sided {surface}.
\end{assumption}

Now, letting $\tilde{{M_0}}=\pi^{-1}({M_0})$, and $\tilde{{\cal M}}_{\lambda}=\pi^{-1}({\cal M}_{\lambda})$, we see that $\tilde{{\cal{M}}}_{\lambda}$ is a MCF with bounded curvatures and height over finite intervals,  {emanating} from $\tilde{{M_0}}$. {As every $\phi\in \tilde{G}$ is an isometry of $\tilde{M}^{(4)}$, and as $\phi(\tilde{{M_0}})=\tilde{{M_0}}$,  we have that $\phi(\tilde{{\cal M}}_{\lambda})$ is also a MCF, with bounded curvatures and height, emanating from $\tilde{{M_0}}$}. Standard  uniqueness theory (see \cite{uniqueness_ref}\footnote{{The results in \cite{uniqueness_ref} are about MCF in an ambient \textit{Riemannian manifold} with bounded $\|\nabla^k\mathrm{Rm}\|$ for $k=0,1,2$, and we are unaware of a reference where such a uniqueness result is stated in the Lorentzian setting. In our setting we already have a bound on the motion by Theorem \ref{th:boundonmax}, and so everything occurs inside a covering preimage of compact set. This, combined with \cite[Theorem 4.4]{EH} (which is valid in our non-compact setting because of periodicity) implies that all the geometric quantities in the relevant analysis will be bounded. Arguing similarly to \cite{uniqueness_ref} we will get uniqueness in our setting. See also \cite{uniqueness_ref_ric}.}}) gives that $\phi(\tilde{{\cal M}}_{\lambda})=\tilde{{\cal M}}_{\lambda}$. 

In particular, if $(\tilde{x},t)\in \tilde{{\cal M}}_\lambda$ then for every $\phi\in \tilde{G}$, $(\phi(\tilde{x}),t)\in\tilde{{\cal M}}_\lambda$. Letting $\tilde{\Sigma}$ be the orbit of such $(\tilde{x},t)$, we see that along $\tilde{\Sigma}$, all intrinsic and extrinsic geometric quantities are invariant under the action of $\tilde{G}$. Thus, all intrinsic and extrinsic scalar quantities on $\Sigma=\pi(\tilde{\Sigma})$ (such as $K,{}^{(2)}\!R,{}^{(3)}\!R,H_{\mu\nu}H^{\mu\nu},|K_{ij}|^2,|\nabla K|^2, |\sigma_{ij}|^2$ in Section \ref{sec:asymptotics_geometrical}) are constant along it.  

{
\begin{example}\label{hyper_ex}
Let $\Gamma$ be a discrete co-compact subgroup of $\tilde{G}:=O(2,1)$- the group of isometries of the hyperbolic plane $\mathbb{H}^2$, and consider  $M^{(3)}=\left(\mathbb{H}^2/\Gamma\right) \times S^1$. $\tilde{G}$ acts on  $\tilde{M}^{(4)}$ by $\phi(x,\theta,t)=(\phi(x),\theta,t)$, so any metric on $M^{(3+1)}$ such that  $\tilde{G}$ acts by isometries on its pull-back to $\tilde{M}^{(4)}$ will satisfy Assumption~\ref{full_ass}. To obtain such a metric, we can use a similar construction to the one in Example~\ref{sphere_ex}.
\end{example} 
}
 In addition to Examples \ref{torus_ex}, \ref{sphere_ex} and \ref{hyper_ex}, other examples include the topologies   $\mathbb{T}^{2}/\Gamma\times S^{1}$ (with $\Gamma$ being a freely acting finite subgroup of isometries of two-torus) and  $S^{2}\times_{\mathbb{Z}_2} S^{1}$. The case of $\mathbb{H}^{3}/\Gamma$ (with $\Gamma$ being a discrete, co-compact subgroup of isometries of $\mathbb{H}^{3}$) does not fit into our setting (but it does fit into the one in \cite{Wald:1983ky}). {A pictorial representation of an example of the geometry of the spatial slices allowed by our assumptions is given in Fig.~\ref{fig:example_manifold}.}

\begin{remark}
It is interesting to compare our symmetry assumption in the  torus case of example \ref{torus_ex} with the symmetry assumptions of previous results on the no-hair conjecture for a coupled evolution rule of the metric and the stress energy tensor (such as the Einstein-Vlasov system) (c.f.~\cite{TchapndaN.:2003bv,BlaiseTchapndaN.:2003fv,Tchapnda:2007uy,LeFloch:2010jg,Andreasson:2013pga}).  Prior to \cite{Andreasson:2013pga}, all results assumed a full $3$-dimensional group of symmetries (corresponding either to completely homogeneous 3-dimensional spaces \cite{Wald:1983ky} or to homogeneous \textit{and isotropic} cross-sectional surfaces). In \cite{Andreasson:2013pga}, a so called $\mathbb{T}^3$-Gowdy symmetry was imposed, {and in fact, results indicating some asymptotic resemblance to de Sitter were obtained there  for \textit{general} matter satisfying some energy conditions.}  The group of $\mathbb{T}^3$-Gowdy symmetries imposes a few additional discrete symmetries on top of the $S^1\times S^1$ symmetries we impose. 
\end{remark}

 \begin{figure}
\begin{center}
\includegraphics[width=11cm,draft=false]{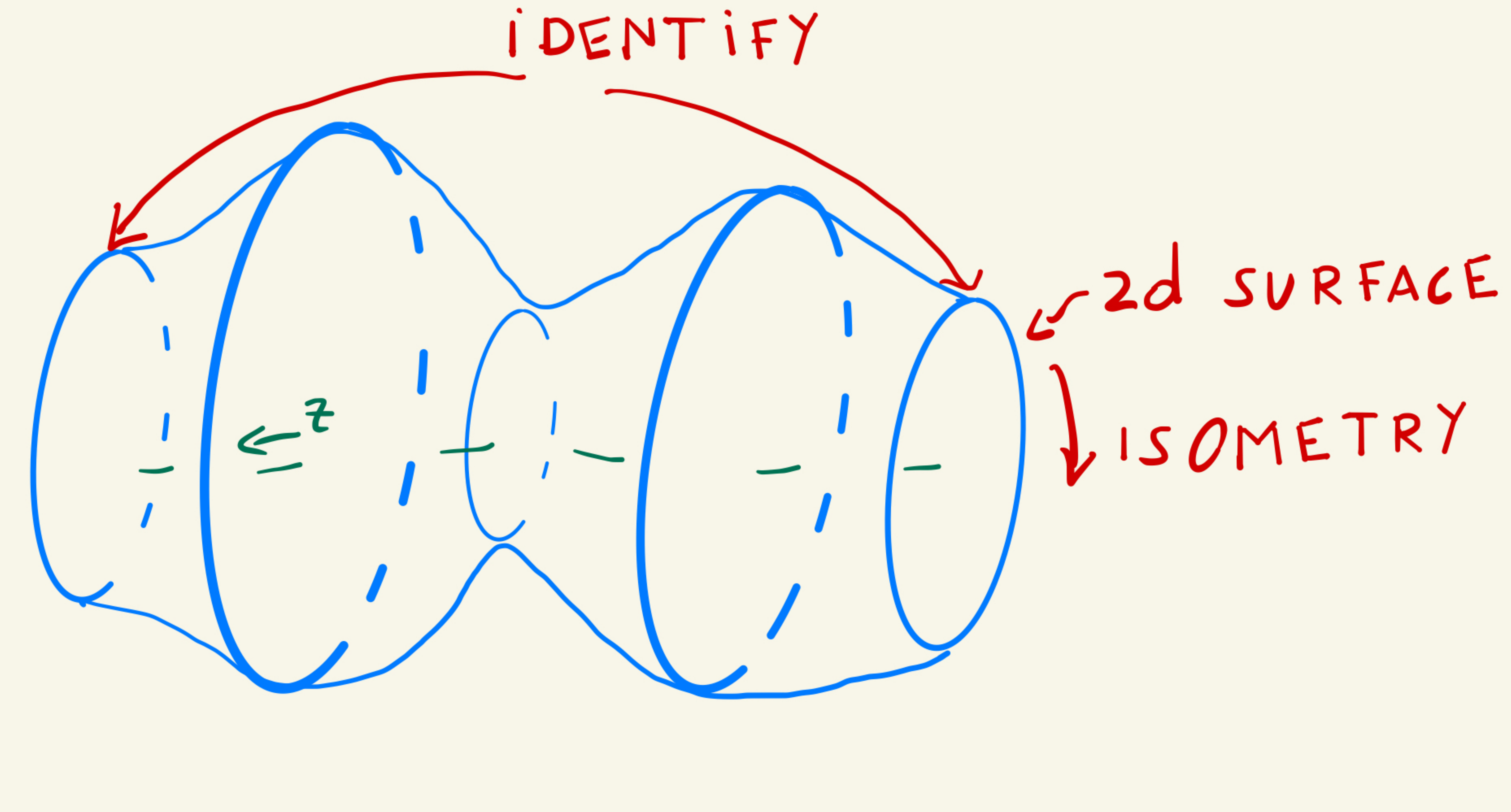}
\end{center}
\caption{\small
{A pictorial representation of an example of the geometry of the spatial slices allowed by our assumptions.} \label{fig:example_manifold} }
\end{figure}

\section{Notations and statement of main results}   

\paragraph{Notation and conventions.} The Riemann tensor is defined through $(\nabla_\mu\nabla_\nu - \nabla_\nu\nabla_\mu) \omega_\rho = R_{\mu\nu\rho}^{\quad\;\sigma} \omega_\sigma$, the Ricci tensor by $R_{\mu\nu} := R_{\mu\sigma\nu}^{\quad\;\sigma}$ (we also use the notation $\mathrm{Ric}(a,b)$, with $a,b$ being two vectors), the Ricci scalar {(also known as scalar curvature)} by $R := R_\mu^{\;\mu}$.

A time slice ${\cal M}_{\lambda}$ has an induced {Riemannian metric $g_{\mu\nu}$, and we can write  $g^{(4)}_{\mu\nu}=g_{\mu\nu}-n_\mu n_\nu$, where $g^{(4)}_{\mu\nu}$ is the spacetime metric (we use the mostly-plus convention) and  $n^\mu$} is orthonormal to ${\cal M}_\lambda$, $n_{\mu}n^{\mu} = -1$, and future-directed. The extrinsic curvature {(also known as second fundamental form)} of these slices is defined as $K_{\mu\nu} := g_\mu^{\;\alpha} \nabla_\alpha n_\nu$, satisfying $n^{\mu} K_{\mu \nu} = 0$ and with trace (also known as mean curvature) $K := g^{\mu\nu}K_{\mu\nu} = g^{(4)}{}^{\mu \nu} K_{\mu \nu}$, and traceless part $\sigma_{\mu \nu} := K_{\mu \nu} - {1 \over 3} K h_{\mu \nu}$ (with our sign convention $K>0$ corresponds to expansion). We also define $\sigma^2 := \sigma_{\mu\nu} \sigma^{\mu\nu}$; notice that $\sigma^2 \geq 0$, since $\sigma_{\mu\nu}$ is a tensor projected on the spatial hypersurfaces.  The Ricci tensor and Ricci scalar (scalar curvature) associated with the induced metric $g_{\mu\nu}$ on the $3$-dimensional slices are denoted, respectively, by  $^{(3)}R_{\mu\nu}$ and $^{(3)}R$.

Similarly, {each 2-dimensional symmetric orbit (or covering image of a symmetric orbit) $\Sigma$ (see Section \ref{symmetry_assumption_sec})} within ${\cal M}_\lambda$ has induced metric {$h_{\mu\nu}$ satisfying $g_{\mu\nu}=h_{\mu\nu}+t_\mu t_\nu$, where $t^\mu$ is orthogonal to $\Sigma$ and to $n_{\mu}$ and $t_{\mu}t^{\mu} = 1$}. The extrinsic curvature (second fundamental form) of this slice within ${\cal M}_\lambda$  is defined as $A_{\mu\nu} := h_\mu^{\;\alpha} \nabla_\alpha t_\nu$, satisfying $t^{\mu}A_{\mu \nu} = 0$ and with trace (mean curvature) $H := h^{\mu\nu}A_{\mu\nu}$. The Ricci tensor and Ricci scalar (scalar curvature) associated with the induced metric $h_{\mu\nu}$ on a $2$-dimensional slices are denoted by $^{(2)}R_{\mu\nu}$ and $^{(2)}R$ respectively.

We denote by the capital or lower case letters  $C_i$, and $D_i$, with $i=1,2,3,\ldots$, non-negative constants that depend only on the intrinsic and extrinsic properties of the initial 3-manifold of the flow: $M_{0}$.  We refer to such constant as universal.

\paragraph{Statement of main result.} We can now state the main theorem of this paper, the proof of which is spread in the following sections.

\begin{theorem}\label{main_thm_intro}
Let $M^{(3+1)}$ be a spacetime satisfying assumptions $(A)-(D)$ of Section \ref{general_ass_sec}, in addition to  the symmetry assumptions of Section  \ref{symmetry_assumption_sec}. If the orbit surfaces are spheres, one needs to further assume that the minimal area of an orbit surface in $M_0$ satisfies 
\begin{equation}
S_{\min} \geq S_{\mathrm{lower}},
\end{equation}  
where $S_{\mathrm{lower}}$ depends only on $\max_{x\in M_0} K$ and $K_{\Lambda}$  (see \eqref{eq:Slower}). Then there exists some  ${ 0\leq} \tilde{\lambda}<\infty$ and universal constants {$0< {d_1}<\infty$, $0\leq {d_2,d_3,d_4,d_5}<\infty$} such that
\begin{enumerate}[I.]
\item\label{foli_item} (The flow probes the entire future) The $\{\mathcal{M}_{\lambda}\}_{\lambda \geq 0}$ foliate $M^{(3+1)}\cap \{t\geq 0\}$.
\item\label{geocomp_item} {(Geodesic completeness and lack of singularities) $M^{(3+1)}\cap \{t\geq 0\}$ is future complete for timelike and null geodesics. There are no crushing singularities.} {
\item\label{flat_item} (Flatness of slices) For every  $\lambda \geq \tilde{\lambda}$ and every $p\in \mathcal{M}_{\lambda}$, the ball of radius $\frac{{ d_1}}{K_{\Lambda}}e^{\frac{1}{{12}}K_{\Lambda}^2(\lambda-\tilde{\lambda})}$ around $p$ in $\mathcal{M}_{\lambda}$ is $(1+ \frac{1}{10}e^{-\frac{1}{12}K_{\Lambda}^2(\lambda-\tilde{\lambda})})$-bi-Lipschitz equivalent to a Euclidean ball. In fact,}
\item\label{FLRW_item} (FLRW-expansion of slices) Taking any flow time $\lambda_0 \geq \tilde{\lambda}$, for any $\lambda>\lambda_0$ we can define the FLRW-expanding comparison metric on $\mathcal{M}_{\lambda}$
\begin{equation}
\mathbf{g}=e^{\frac{2}{3}K_{\Lambda}^2(\lambda-\lambda_0)} g_{\lambda_0}
\end{equation}
where the point identification is done by the MCF. Then 
\begin{equation}
||g(\lambda)-\mathbf{g}(\lambda)||_{g(\lambda)} \leq 2||g(\lambda)-\mathbf{g}(\lambda)||_{\mathbf{g}(\lambda)}\leq { d_2} e^{-\frac{1}{6}K_{\Lambda}^2(\lambda_0-\tilde{\lambda})}.
\end{equation}
\item\label{len_conv_item} (Length convergence to de Sitter of timelike and null curves) Let $\gamma:[0,a]\rightarrow M^{(3+1)}$ be a future-pointing timelike or null curve, with $\lambda(\gamma(0)) \geq \tilde{\lambda}$. Setting $\lambda_0=\lambda(\gamma(0))$ and $\lambda_a=\lambda(\gamma(a))$, we have
\begin{equation}
\left|L^{g^{(3+1)}}[\gamma]-L^{{\mathbf{g}_{\mathrm{dS}}^{(4)}}}[\gamma]\right| \leq \frac{{ d_3}}{K_{\Lambda}} e^{-\frac{1}{18}K_{\Lambda}^2(\lambda_0-\tilde{\lambda})}+ { d_3}K_\Lambda e^{-\frac{1}{{24}}K_{\Lambda}^2(\lambda_0-\tilde{\lambda})} \left(\lambda_a-\lambda_0\right),
\end{equation}  
where {$\mathbf{g}_{\mathrm{dS}}^{(4)}$ is a de Sitter metric 
\begin{equation}
\mathbf{g}_{\mathrm{dS}}^{(4)}=-K_{\Lambda}^2d\lambda^2+e^{\frac{2}{3}K_{\Lambda}^2(\lambda-\lambda_0)} g_{\mathrm{Euc}},
\end{equation} 
with $g_{\mathrm{Euc}}$ is some Euclidean metric on (part of) $\mathcal{M}_{\lambda_0}$, and where the point identification is done by the MCF. }
\item\label{matter_item} ($\mathrm{L}^1$ Dilution of matter) While $\frac{1}{2} \leq \frac{\mathrm{Vol}(\mathcal{M}_\lambda)}{\mathrm{Vol}(\mathcal{M}_ {\tilde{\lambda}})e^{K_{\Lambda}^2(\lambda-\tilde{\lambda})}} \leq 2$, 
\begin{equation}
\int_{\mathcal{M}_{\lambda}}\|T\|\;d\mathrm{Vol} \leq \frac{{ d_4}}{K_{\Lambda}}e^{\frac{1}{3}K_{\Lambda}^2(\lambda-\tilde{\lambda})}\ ,
\end{equation}
which is a slower rate than the volume by $e^{-\frac{2}{3}K_{\Lambda}^2\lambda}$. { The norm in this statement is the maximum of the components of the stress tensor, $T$, in an orthonormal frame whose time direction is orthogonal to the surfaces of mean curvature flow, ${\cal M}_\lambda$, or equivalently, the norm of $T$ with respect to a Riemannian metric associated to the Lorentzian metric $g^{(4)}$ via the flow.} Furthermore, letting $\gamma_{\lambda}$ be a geodesic in $\mathcal{M}_{\lambda}$, orthogonal to the orbit surfaces, and passing through each orbit surface once, we get that $\frac{1}{2} \leq \frac{L(\gamma_{\lambda})}{L(\gamma_{\tilde{\lambda}})e^{\frac{1}{3}K_{\Lambda}^2(\lambda-\tilde{\lambda})}} \leq 2$, but
\begin{equation}
\int_{\gamma_{\lambda}}\|T\|\;d\ell \leq { d_5}K_{\Lambda}e^{-\frac{1}{3}K_{\Lambda}^2(\lambda-\tilde{\lambda})}\ ,
\end{equation}
which is a slower rate than the length $L$ by $e^{-\frac{2}{3}K_{\Lambda}^2\lambda}$.
\end{enumerate}
\end{theorem}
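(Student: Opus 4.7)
}

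The overall strategy is to exploit the 2-dimensional symmetry to reduce the analytical problem to one that is essentially one-dimensional (transverse to the orbits), then use the MCF machinery together with the bound $K\to K_\Lambda$ from Theorem \ref{th:boundonmax} to pin down all the remaining geometric invariants. Concretely, since all scalar intrinsic/extrinsic quantities of an orbit $\Sigma$ are invariant under $\tilde G$, the evolution of the induced metric on $\mathcal{M}_\lambda$ along the MCF can be decomposed into a ``transverse'' part (parametrized by the orthogonal direction to the orbits) and a ``parallel'' part (the conformal factor on $\Sigma$). I would first establish the auxiliary bounds $\sigma^2 \to 0$, ${}^{(2)}\!R \to 0$ (for the torus/hyperbolic case automatically, for the sphere case via the $S_{\min}$ assumption), and $|\nabla K|\to 0$, each decaying exponentially with rate proportional to $K_\Lambda^2$. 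The key evolution inequality is the one obtained by combining \eqref{eq:MCF} with the Simons-type identity for $|\sigma|^2$ and with the Gauss equation ${}^{(3)}\!R = {}^{(2)}\!R + 2\,\text{(curvatures)}$: using SEC to discard the matter Ricci term and the maximum principle on $\sigma^2$ and $K-K_\Lambda$, one concludes $\sigma^2 \le C\,e^{-\frac{2}{3}K_\Lambda^2\lambda}$ (the exponent $\tfrac{1}{12}$ appearing in item \ref{flat_item} comes from the square root that enters once one passes from metric components to length scales, cf.\ the Jacobi-field computation in the warped-product ansatz).

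The sphere case is the main obstacle and must be handled first. Here $S_{\mathrm{lower}}$ is chosen so that for each orbit sphere $\Sigma_0 \subset M_0$ with area $\ge S_{\mathrm{lower}}$, the ODE satisfied by its area $S(\lambda)$ under MCF (obtained from $\frac{d}{d\lambda}S = \int_\Sigma K H$ combined with $\frac{d}{d\lambda} H$ and the Gauss-Bonnet identity $\int {}^{(2)}\!R = 8\pi$) stays bounded below by an expanding solution. Concretely, the competition is between the exponential expansion driven by $K\sim K_\Lambda$ and the contraction driven by the positive intrinsic curvature $\sim 1/S$; requiring $S_{\min} K_\Lambda^2$ larger than an explicit multiple of $\max K_m(0)^2$ guarantees the expansion wins uniformly, and one can read off \eqref{eq:Slower} from this computation.

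Granted the decay of $\sigma^2$ and ${}^{(2)}\!R$ and the convergence $K\to K_\Lambda$, item \ref{flat_item} follows by a Riemannian comparison argument on $\mathcal{M}_\lambda$: the intrinsic Ricci curvature is controlled in terms of $\sigma$, ${}^{(2)}\!R$, the matter Ricci (which is itself controlled via the Einstein equation from the same quantities), so ${}^{(3)}\!R_{ij}$ is small on balls whose radius is comparable to $e^{K_\Lambda^2(\lambda-\tilde\lambda)/12}/K_\Lambda$, and a Bishop-Gromov / harmonic-coordinate argument gives bi-Lipschitz control against Euclidean balls. Item \ref{FLRW_item} is then obtained by computing $\frac{d}{d\lambda} g_{ij} = 2 K K_{ij} = \tfrac{2}{3}K^2 g_{ij} + 2K\sigma_{ij}$ and showing that the $2K\sigma_{ij}$ contribution, together with the deviation of $\tfrac{2}{3}K^2$ from $\tfrac{2}{3}K_\Lambda^2$, is integrable in $\lambda$ with the stated rate. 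Item \ref{len_conv_item} follows from integrating along $\gamma$: the pull-back of $g^{(3+1)}$ to $\gamma$ is compared to the pull-back of $\mathbf{g}_{\mathrm{dS}}^{(4)}$ by splitting the tangent $\dot\gamma = \alpha\, n + v$ with $v$ spatial and using the bound on $g_\lambda - \mathbf{g}(\lambda)$; the ``non-integrated'' term $d_3/K_\Lambda$ and the linearly growing term come respectively from the initial transient and from the residual metric error integrated over $[\lambda_0,\lambda_a]$.

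Items \ref{foli_item} and \ref{geocomp_item} are corollaries of Theorem \ref{theorem:existence} and the lapse normalization: since $K$ remains bounded away from $0$ and $\infty$ (Theorem \ref{th:boundonmax} and preservation of $K>0$), the MCF foliates $\{t\ge 0\}$ and no crushing singularity can form ($N^{-1}$ cannot blow up without $K$ doing so), and any future-directed causal geodesic must have infinite affine parameter because the metric on $\mathcal{M}_\lambda$ expands exponentially in $\lambda$ while the proper time along MCF trajectories is $\int K\,d\lambda \sim K_\Lambda \lambda \to \infty$. Finally, item \ref{matter_item} is extracted by tracing \eqref{eq:Einstein_cond} and using \eqref{play_ein2} together with DEC: $R^{(m)}_{\mu\nu} n^\mu n^\nu$ is controlled by $\tfrac{1}{3}(K_\Lambda^2 - K^2) - \sigma^2 - \tfrac{d K}{d\lambda} + \Delta K$, each term of which has been shown to decay like $e^{-\frac{2}{3}K_\Lambda^2 \lambda}$; DEC then bounds the full $\|T\|$ by $T(n,n)$ plus controlled remainders, and integrating over $\mathcal{M}_\lambda$ (or along the transverse geodesic $\gamma_\lambda$) and comparing to the volume (length) growth rate $e^{K_\Lambda^2\lambda}$ (respectively $e^{\frac{1}{3}K_\Lambda^2\lambda}$) yields the stated dilution with the extra factor $e^{-\frac{2}{3}K_\Lambda^2\lambda}$.
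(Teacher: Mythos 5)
The central step of your plan—establishing a \emph{pointwise} decay $\sigma^2 \le C\,e^{-\frac{2}{3}K_\Lambda^2\lambda}$ via a Simons-type evolution identity and the maximum principle—fails and is, in fact, incompatible with the setting of the paper. Two reasons. First, the evolution equation for $|\sigma|^2$ under Lorentzian MCF contains full ambient Riemann (Weyl) contractions such as $R_{\mu\alpha\nu\beta}\sigma^{\mu\nu}\sigma^{\alpha\beta}$; the energy conditions constrain only the Ricci tensor, not the Weyl tensor, so the reaction term cannot be signed. Second, pointwise decay of $\sigma$ is simply false for the class of spacetimes considered: the ``alien spaceship'' discussion in Section~\ref{dilution_sec} (thin walls saturating SEC, Israel junction) shows that the stress tensor---and hence the Riemann and extrinsic curvature---need not vanish pointwise anywhere, only on average. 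Everything downstream that you base on pointwise shear control (Ricci curvature bounds and Bishop--Gromov/harmonic coordinates for item~\ref{flat_item}, pointwise control of $K-K_\Lambda$ for item~\ref{FLRW_item}) inherits the gap.

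What the paper actually does is fundamentally weaker and more structured. It only establishes \emph{integrated} bounds $\int_{\mathcal{M}_\lambda}\bigl(|K^2-K_\Lambda^2|+\sigma^2\bigr)\,dV\lesssim e^{\frac{1}{3}K_\Lambda^2\lambda}$ (eq.~\eqref{quant_est}); the pointwise decay that is proved concerns the mean curvature $H$ of the orbit surfaces \emph{inside} $\mathcal{M}_\lambda$ (eq.~\eqref{H_bound}), not $\sigma$. The passage to pointwise \emph{metric} control (not curvature control) is done by combining: the warped-product form $g=dz^2+h_z$ enforced by the symmetry, the pointwise bound on $H$ to propagate the metric along $z$ over a distance $d^\delta_\lambda$, and an $L^2$-in-$\lambda$ bootstrap (Lemma~\ref{E_grow_lemma}) to rule out a thick region of bad metric deviation. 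Flatness of slices is then elementary: orbit surfaces have sectional curvature $\lesssim S_{\min}^{-1}\to 0$, and the $z$-direction is controlled by $H$; there is no Bishop--Gromov argument. Similarly, your claim that ``$K$ remains bounded away from $0$'' is not available pointwise; the paper uses a good/bad time decomposition (Lemma~\ref{lemma:Kuniform} and eq.~\eqref{bad_est}) to show $K\ge K_\Lambda/2$ only at a full-measure set of flow times, which suffices for eq.~\eqref{eq:tmin} but is strictly weaker than your pointwise claim. Your treatment of items~\ref{len_conv_item} and~\ref{matter_item} is directionally correct, but both currently rely on the unavailable pointwise bounds and would need to be recast in terms of the integrated estimates plus the good/bad time partition.
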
 

The proof of Theorem \ref{main_thm_intro} (and more) occupies the upcoming four sections. In Section \ref{sec:asymptotics_geometrical} we study the asymptotic behavior of the volume, length of the {transverse} geodesics $\gamma_{\lambda}$ of Theorem \ref{main_thm_intro},  and the minimal area of orbit surfaces. \ref{flat_item} of Theorem \ref{main_thm_intro} is proved in Section \ref{prox_glob_flat_sec}. \ref{FLRW_item} of Theorem \ref{main_thm_intro} is proved in Section \ref{uncond_point}. \ref{foli_item},\ref{geocomp_item} and \ref{len_conv_item} are proved in Section \ref{space_time_sec}.  \ref{matter_item} is proved in Section \ref{dilution_sec}. Section \ref{equiv_de_sit_sec} includes a discussion of why the results, as summarized in Theorem \ref{main_thm_intro}, imply asymptotic physical equivalence to de Sitter space.

\section{Asymptotic behavior of minimal surfaces, transverse length and spatial volume \label{sec:asymptotics_geometrical}}\label{ass_behave_sec}

\paragraph{Easy consequences.} {Contracting the Gauss equation for space-like hypersurface in a Lorentzian manifold \footnote{{Note that the second fundamental form term appears with an opposite sign compared to the Riemannian Gauss equation.}} twice, we get
\begin{equation}
R+2\mathrm{Ric}(n,n)={^{(3)}\!R}-K_{\mu\nu}K^{\mu\nu}+K^2={^{(3)}\!R}-\sigma^2+\frac{2}{3}K^2\ ,
\end{equation}
(see for instance \cite{Wald84}, eq.~(E.2.27)) which, combined with \eqref{play_ein1} and \eqref{K_lambda_def} gives
\begin{equation}
{^{(3)}\!R}-\sigma^2+\frac{2}{3}K^2=16\pi G_{N}(T(n,n)+\Lambda)=16\pi G_{N}T(n,n)+\frac{2}{3}K_{\Lambda}^2\ ,
\end{equation}
or in coordinate form:
\be
\label{eq0}
{^{(3)}\!R} + \frac{2}{3} K^2 - \sigma^2 = \frac23 K_\Lambda^2 + 16\pi G_N T_{\mu\nu} n^\mu n^\nu \;.
\ee
By WEC, we have
\begin{equation}\label{dom_en}
{^{(3)}\!R}+\frac{2}{3}K^2-\sigma^2\geq \frac{2}{3}K_{\Lambda}^2\ ,
\end{equation}
so by Theorem~\ref{th:boundonmax}, we have the following pointwise bound on ${^{(3)}\!R}$:
\begin{equation}\label{pointwise_scalar_low}
{^{(3)}\!R}\geq -C_2 K_\Lambda^2 e^{-\frac{2}{3}K_{\Lambda}^2\lambda} \ ,
\end{equation}
where $C_2=\frac{2}{3}C_1(2+C_1)$ and $C_1=\max(K_m(0)/K_\Lambda-1, 0)$, as in Theorem \ref{th:boundonmax}.

\paragraph{Growth of geometric quantities.} We are now going to establish the growth of some geometric quantities defined  {along the mean curvature flow hypersurfaces ${\cal M}_\lambda$}. Fix some time $\lambda\geq 0$, ${\cal M}_{\lambda}$,  and consider the foliation of ${\cal M}_{\lambda}$ by the orbits of $G$ (or more generally, by the {projections} of the orbits of $\tilde{G}$).  By our two-sidedness assumption, there exists a global unit normal vector $E$ to this foliation.  Let $z$ be the parameter along the flow lines of $E$, thus it is a signed distance function; and, due to the isometries of $G$ (or $\tilde{G}$),  the metric on ${\cal M}_{\lambda}$ has the warped product form 
\begin{equation}\label{eq:spatial_metric}
g=dz^2+h_z\ ,
\end{equation}
where $h_z$ is a two-dimensional metric of constant curvature. By passing to a double cover, we can assume without loss of generality that the orbit surfaces $\Sigma$ are orientable. Thus, each such orbit is a two-dimensional orientable surface, with Euler characteristic $\chi = 2,0,-2,-4, \ldots$.

 We will start {by proving} that {the minimal area of a surface orbit contained in $\mathcal{M}_{\lambda}$, which we denote by $S_{\min}(\lambda)$, grows as two-dimensional} spatial slices of de Sitter space in the FLRW slicing. In order to study the time evolution of $S_{\min}(\lambda)$, we would like to find a differential equation for $S_{\min}(\lambda)$ and solve for it. However, since the area of the minimal surface can be {non-differentiable}   as the flow evolves, it is unclear that this can be done. Therefore, we first need to show that $S_{\min}(\lambda)$ has well defined derivatives almost everywhere and that the fundamental theorem of calculus applies to them.  We do this by proving that they are Lipschitz. This is true because of the following {standard} lemma which applies to all minimizers:

\begin{lemma}[Hamilton's trick (c.f \cite{Manta} Lemma 2.1.3)]\label{Ham_trick}
Let $f:K\times [a,b]\rightarrow \mathbb{R}$ be a smooth function {with $K$ being compact}, and set $g$ to be the minimizer of $f$ on $K$:
\begin{equation}
g(t)=\min_{x\in K}f(x,t).
\end{equation}
Then $g$ is a Lipschitz function, and thus, differentiable almost everywhere and obeying the fundamental theorem of calculus. Moreover, if $t_0$ is a point of differentiability of $g$, and if $x_0$ is such that $f(x_0,t_0)=g(t_0)$ then 
\begin{equation}
g'(t_0)=\left.\frac{\partial f}{\partial t}\right|_{(x_0,t_0)}\ .
\end{equation}
\end{lemma}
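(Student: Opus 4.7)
The plan is first to establish Lipschitz continuity of $g$ by a standard minimizer-comparison argument, then invoke Rademacher's theorem (in one dimension, this is just the absolutely-continuous-implies-differentiable-a.e. fact) to get differentiability almost everywhere and the fundamental theorem of calculus, and finally identify the derivative at a differentiability point $t_0$ by one-sided comparison with the value of $f$ at a fixed minimizer $x_0$.

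\textbf{Step 1 (Lipschitz bound).} Since $f$ is smooth on the compact set $K \times [a,b]$, there is a constant $L := \sup_{K \times [a,b]} |\partial_t f| < \infty$. Given $t_1, t_2 \in [a,b]$, by compactness of $K$ choose minimizers $x_i \in K$ with $f(x_i, t_i) = g(t_i)$. Then
\begin{equation}
g(t_1) - g(t_2) = f(x_1, t_1) - f(x_2, t_2) \le f(x_2, t_1) - f(x_2, t_2) \le L\, |t_1 - t_2|,
\end{equation}
and by symmetry $|g(t_1) - g(t_2)| \le L\, |t_1 - t_2|$. Hence $g$ is Lipschitz, so absolutely continuous, hence differentiable almost everywhere and satisfying the fundamental theorem of calculus.

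\textbf{Step 2 (Identification of $g'(t_0)$).} Suppose $t_0$ is a point of differentiability and $x_0 \in K$ realizes $f(x_0, t_0) = g(t_0)$. For any $h$ small enough that $t_0 + h \in [a,b]$, the definition of $g$ gives
\begin{equation}
g(t_0 + h) \le f(x_0, t_0 + h), \qquad g(t_0) = f(x_0, t_0),
\end{equation}
so $g(t_0 + h) - g(t_0) \le f(x_0, t_0 + h) - f(x_0, t_0)$. Dividing by $h > 0$ and letting $h \to 0^+$ yields $g'(t_0) \le \partial_t f(x_0, t_0)$; dividing by $h < 0$ flips the inequality, giving $g'(t_0) \ge \partial_t f(x_0, t_0)$ as $h \to 0^-$. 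Combining, $g'(t_0) = \partial_t f(x_0, t_0)$.

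\textbf{Expected difficulty.} There is no serious obstacle: the argument is entirely elementary once one has compactness of $K$ (to secure both the uniform bound $L$ and the existence of a minimizer $x_0$ at $t_0$) and smoothness of $f$ in $t$. The only subtle point worth highlighting is that $g$ need not be differentiable everywhere — it can have corners where the minimizer switches between different $x \in K$ — which is precisely why the statement settles for differentiability a.e. and identifies the derivative only at such points. The use of any particular minimizer $x_0$ at $t_0$ in the final formula is justified because at a differentiability point the bound $g'(t_0) = \partial_t f(x_0, t_0)$ is forced to be the same regardless of the chosen minimizer.
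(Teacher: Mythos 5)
Your proof is correct and takes essentially the same approach as the paper: establish the Lipschitz bound via a minimizer-comparison argument using $\sup|\partial_t f|$ (the paper writes $C_3=\max \partial_t f$, arguably a typo for the sup of the absolute value, which you handle more carefully), then identify $g'(t_0)$ at a differentiability point by one-sided comparisons against $f(x_0,\cdot)$ for a fixed minimizer $x_0$.
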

\begin{proof}
First, we show that $g$ is Lipschitz. For every $t$, let $x_t$ be a point  such that $g(t)=f(x_t,t)$. Then for every $t,s\in[a,b]$, we have
\begin{equation}
g(s)-g(t) \leq f(x_t,s)-f(x_t,t)=\int_t^s dt' \; \frac{\partial f}{\partial t}(x_t,t')\leq C_3 |t-s|, 
\end{equation}
where $C_3=\max_{(x_1,t_1)\in K\times [a,b]}\frac{\partial f}{\partial t}(x_1,t_1)$.
Similarly, $g(t)-g(s) \leq C_3|t-s|$, so $g$ is indeed Lipschitz, hence differentiable almost everywhere  and obeying the fundamental theorem of calculus. 

Let $t_0$ be a point of differentiability of $g$. In particular 
\begin{equation}
\lim_{t\searrow t_0} \frac{g(t)-g(t_0)}{t-t_0}=g'(t_0)=\lim_{t\nearrow t_0} \frac{g(t_0)-g(t)}{t_0-t} \ .
\end{equation}
For $t<t_0$, we have that
\begin{equation}
g(t_0)-g(t)\geq f(x_{t_0},t_0)-f(x_{t_0},t),
\end{equation}
so dividing both sides by $t_0-t$ and taking the limit as $t\nearrow t_0$, we obtain
\begin{equation}
g'(t_0) \geq \left.\frac{\partial f}{\partial t}\right|_{(x_0,t_0)}.
\end{equation}
Similarly, for $t>t_0$, we have 
\begin{equation}
g(t)-g(t_0)\leq f(x_{t_0},t)-f(x_{t_0},t_0) \ ,
\end{equation}
so dividing both sides by $t-t_0$ and taking the limit as $t\searrow t_0$ we also get
\begin{equation}
g'(t_0) \leq \left.\frac{\partial f}{\partial t}\right|_{(x_0,t_0)}.
\end{equation} 
This proves the claim.
\end{proof}

We can now prove the following theorem on the area growth of { the minimal orbit surface}:

\begin{theorem}\label{surface_growth_lemma}
Denote by $S_{\min}(\lambda)$  the minimal  area of a $z$-cross section and $\chi$ its Euler characteristic. Then if either $\chi\leq0$, or, if {$\chi=2$}, if also $S_{\min}(0)\geq S_{\rm lower}$, then there exists $\lambda_{ 0,1}$ such that for all $\lambda_{ 0,1}<\lambda_1<\lambda_2$:
\begin{equation}
\frac{1}{2} \leq \frac{S_{\min}(\lambda_2)}{S_{\min}(\lambda_1)e^{\frac{2}{3}K_{\Lambda}^2(\lambda_2-\lambda_1)}} \leq 2\ ,
\end{equation}
where 
\be\label{eq:Slower}
S_{\rm lower}= \frac{8 \pi}{K_\Lambda^2} \frac{1}{\left(\sqrt{(1+C_1)^2+2/9}-(1+C_1)\right)^2}\, (9C_4)^{3 C_4-4/3}\ ,
\ee
with $C_1=\max(K_m(0)/K_\Lambda-1, 0)$ as in Theorem \ref{th:boundonmax} and $C_4= \frac{4}{\sqrt{3}}\sqrt{C_1
   (C_1+2)} (C_1+1)+\frac{4}{3} C_1
   (C_1+2)$.
\end{theorem}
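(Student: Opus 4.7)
The plan is to apply Hamilton's trick (Lemma \ref{Ham_trick}) to $S_{\min}(\lambda)$ and reduce everything to an ODE for the area of a single tracked orbit. By equivariance of MCF under the $\tilde{G}$-action, the flow carries orbits to orbits, so each orbit can be labeled by its initial position $s$. Viewing an orbit $\Sigma \subset M^{(3+1)}$ as a codimension-2 surface moving with ambient velocity $V = Kn$, the first variation of area, together with $n\perp \Sigma$ (which gives $\mathrm{div}_\Sigma(Kn) = K\,\mathrm{tr}_{T\Sigma}K = K(K-K_{tt})$, where $K_{tt}:=K_{\mu\nu}t^\mu t^\nu$ and $t$ is the unit normal to $\Sigma$ in $\mathcal{M}_\lambda$), yields
\begin{equation}\label{eq:dtlogS}
\frac{d}{d\lambda}\log S(s,\lambda) = K(K - K_{tt}) = \tfrac{2}{3}K^2 - K\sigma_{tt},
\end{equation}
after decomposing $K_{tt}=K/3+\sigma_{tt}$. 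Hamilton's trick then identifies $\frac{d}{d\lambda}\log S_{\min}$, at a.e.\ $\lambda$, with the right-hand side of (\ref{eq:dtlogS}) evaluated at any minimizing orbit.

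At a minimizing orbit, $\partial_z S = HS = 0$ forces $H = 0$, while $\partial_z^2 \log S \geq 0$ yields $\partial_z H \geq 0$. A direct calculation in the warped product $g = dz^2 + h_z$ gives ${}^{(3)}\mathrm{Ric}(t,t) = -\partial_z H - |A|^2 \leq 0$ at the minimum. Combining this with the doubly contracted Gauss equation ${}^{(2)}R = {}^{(3)}R - 2\,{}^{(3)}\mathrm{Ric}(t,t) + H^2 - |A|^2$ and Gauss--Bonnet ${}^{(2)}R = 4\pi\chi/S_{\min}$, one obtains ${}^{(3)}R|_{\min}\leq 4\pi\chi/S_{\min}$. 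Substituting into (\ref{eq0}), invoking the WEC, and applying Theorem \ref{th:boundonmax} (which gives $K^2 - K_\Lambda^2 \lesssim K_\Lambda^2 e^{-\frac{2}{3}K_\Lambda^2\lambda}$), we arrive at
\begin{equation}\label{eq:sigmaMinBd}
\sigma^2\big|_{\min} \leq \frac{4\pi\chi}{S_{\min}} + C\,K_\Lambda^2\, e^{-\frac{2}{3}K_\Lambda^2 \lambda}.
\end{equation}
The elementary linear-algebra bound $|\sigma_{tt}|\leq \sqrt{2/3}\,|\sigma|$, valid for any traceless symmetric $3\times 3$ tensor, combined with (\ref{eq:dtlogS}) and (\ref{eq:sigmaMinBd}), turns the problem into a two-sided differential inequality for $\log S_{\min}$ whose right-hand side converges to $\frac{2}{3}K_\Lambda^2$.

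When $\chi \leq 0$, the $4\pi\chi/S_{\min}$ term in (\ref{eq:sigmaMinBd}) is nonpositive, so the error decays as $e^{-\frac{2}{3}K_\Lambda^2 \lambda}$ and is integrable in $\lambda$; choosing $\lambda_{0,1}$ so that the tail integral is at most $\log 2$, integration of the differential inequality (using the fact that Lipschitz functions satisfy the fundamental theorem of calculus) yields the claimed ratio bound in $[1/2,2]$. The main obstacle is the sphere case $\chi = 2$: here $8\pi/S_{\min}$ could a priori drive $S_{\min}$ downward before the asymptotic regime sets in, destroying the bootstrap. One therefore chooses $S_{\mathrm{lower}}$ as the smallest initial area such that the worst-case comparison ODE---obtained from (\ref{eq:dtlogS})--(\ref{eq:sigmaMinBd}) by replacing $K$ with its maximum allowed value $K_\Lambda(1+C_1 e^{-\frac{2}{3}K_\Lambda^2\lambda})$ from Theorem \ref{th:boundonmax}---keeps $S_{\min}(\lambda)$ bounded below throughout the transient. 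Solving this barrier ODE, whose coefficients are controlled by $C_1$ and by the time-integrated excess $C_4$ of $K$ over $K_\Lambda$, gives exactly the expression (\ref{eq:Slower}); once $S_{\min}$ is thus controlled, the sphere case reduces to the $\chi\leq 0$ analysis.
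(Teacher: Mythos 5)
Your proof takes essentially the same route as the paper's: Hamilton's trick to get a.e.\ differentiability of $S_{\min}$; the Riccati equation plus the doubly-traced Gauss equation at a minimizing slice to deduce ${}^{(3)}R|_{\min}\leq{}^{(2)}R=4\pi\chi/S_{\min}$; the WEC and Theorem~\ref{th:boundonmax} to bound $\sigma^2$ at the minimizing slice; and then integration of a two-sided differential inequality for $\log S_{\min}$, with a separate lower-bound bootstrap in the sphere case to control the transient. The one place where you are genuinely cleaner is in the derivation of the base ODE: you use the codimension-2 first-variation formula together with $\mathrm{tr}_{T\Sigma}\sigma=-\sigma_{tt}$ and the sharp tensor inequality $|\sigma_{tt}|\leq\sqrt{2/3}\,|\sigma|$, whereas the paper bounds each metric component $|E_{S,ij}|$ crudely and then traces, incurring a slack factor. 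Consequently your method gives a \emph{smaller} $S_{\mathrm{lower}}$ and a \emph{smaller} $C_4$ than the stated \eqref{eq:Slower}, which is still logically sufficient for the theorem (the hypothesis $S_{\min}(0)\geq S_{\mathrm{lower}}$ only becomes easier to use), but your claim that your barrier ODE ``gives exactly'' \eqref{eq:Slower} should be amended. Two other minor slips worth fixing: (i) once you take the square root of \eqref{eq:sigmaMinBd} to bound $K|\sigma_{tt}|$, the residual decay in the $\chi\leq0$ case is $e^{-\frac{1}{3}K_\Lambda^2\lambda}$, not $e^{-\frac{2}{3}K_\Lambda^2\lambda}$ (still integrable, so the conclusion stands); (ii) the step $\frac{d}{d\lambda}\log S(s,\lambda)=K(K-K_{tt})$ silently uses that $K$ and $K_{tt}$ are constant on each orbit — this follows from the $\tilde G$-invariance established in Section~\ref{symmetry_assumption_sec} and should be cited explicitly.
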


\begin{proof}
Recall that the function $S_{\min}(\lambda)$ are (locally) Lipschitz functions, and hence differentiable almost everywhere. Also, recall that at differentiable times $\lambda$ for $S_{\min}$  the derivative will be identical to the derivative of the area of the section where the minimum  is obtained (see Lemma \ref{Ham_trick}).

By the Riccati equation (primes indicate derivatives w.r.t.~$z$)
\be\label{Rica_eq}
H'+A_{\mu\nu}A^{\mu\nu}=-{}^{(3)}\!R_{zz}\ .
\ee
Now, the traced Gauss equations imply
\begin{equation}
{}^{(3)}\!R={}^{(2)}\!R+2{}^{(3)}\!R_{zz}+A_{\mu\nu}A^{\mu\nu}-H^2\ ,
\end{equation}
so 
\be\label{Gauss_eq}
-{}^{(3)}\!R_{zz}=\frac{{}^{(2)}\!R-{}^{(3)}\!R+A_{\mu\nu}A^{\mu\nu}-H^2}{2}\;.
\ee
Combining \eqref{Rica_eq} and \eqref{Gauss_eq}, we obtain

\begin{equation}\label{basic_eq}
H'+A_{\mu\nu}A^{\mu\nu}=-{}^{(3)}\!R_{zz}=\frac{{}^{(2)}\!R-{}^{(3)}\!R+A_{\mu\nu}A^{\mu\nu}-H^2}{2} \ .
\end{equation}

Consider a $z$ slice with minimal area. On this $z$ slice we have $H=0$ and $H'\geq 0$, so \eqref{basic_eq} gives
\begin{equation}
{}^{(3)}\!R = {}^{(2)}\!R-A_{\mu\nu}A^{\mu\nu}-{H^2}-2H' \leq {}^{(2)}\!R\ ,
\end{equation}  
which, combined with \eqref{dom_en} gives 
\begin{equation}\label{eq:temp}
\frac{2}{3}K_{\Lambda}^2-\frac{2}{3}K^2+\sigma^2\leq {}^{(2)}\!R\ ,
\end{equation}
on such a slice. 
Notice that by our isometries, if $S(z,\lambda)$ is the area of a fixed $z$ surface at time $\lambda$:
\bea\label{eq:R2ineq}
{}^{(2)}\!R(z,\lambda){\leq} \frac{4\pi\chi_0}{S(z,\lambda)}\leq \frac{4\pi\chi_0}{S_{\rm min}(\lambda)}\ ,
\eea
where $\chi_0$ is $2$ if $\tilde{\Sigma}$ is the sphere and $0$ otherwise. 
Eq.~(\ref{eq:temp}) and  Theorem~\ref{th:boundonmax} imply that, considering either cases in which $K\leq K_\Lambda$ or $K> K_\Lambda$, we have
\begin{equation}\label{abs_est}
\frac{2}{3}\left|K_{\Lambda}^2-K^2\right|+\sigma^2\leq {}^{(2)}\!R+2C_2K_\Lambda^2 e^{-\frac{2}{3}K_{\Lambda}^2\lambda}\leq \frac{4\pi\chi_0}{S_{\rm min}}+2C_2K_\Lambda^2 e^{-\frac{2}{3}K_{\Lambda}^2\lambda} \ .
\end{equation}

The evolution equation for the metric under MCF {(see \cite[Prop. 3.1]{EH})} is{  
\begin{equation}
\frac{dg_{ij}}{d\lambda}=2KK_{ij}=\frac{2}{3}K^2 g_{ij}+2K\sigma_{ij}=\frac{2}{3}K_{\Lambda}^2g_{ij}+E_{S,ij}\ ,
\end{equation} 
where $E_{S,ij}=\frac{2}{3}(K^2-K_\Lambda^2)g_{ij}+2K \sigma_{ij}$. 
We want now to bound this equation using the previous inequalities. Note that, using a more abstract notation, we can write, with no summation over repeated indexes,
\be
\frac{2}{3}|(K^2-K_\Lambda^2)g_{ij}|=\frac{2}{3}|K^2-K_\Lambda^2|\left|g\left(\frac{\partial_i}{||\partial_i||},\frac{\partial_j}{||\partial_j||}\right)\right|\sqrt{g_{ii}}\sqrt{g_{jj}}\leq \frac{2}{3}|K^2-K_\Lambda^2|\sqrt{g_{ii}}\sqrt{g_{jj}}\  ,  
\ee
where in the last step we used that, for two unit vectors $\hat n_1$ and $\hat n_2$, 
\be
g(\hat n_1,\hat n_2)\leq \sqrt{g(\hat n_1,\hat n_1)}\sqrt{g(\hat n_2,\hat n_2)}=1\ .
\ee
Similarly, we can write 
\begin{equation}
\sigma_{ij}^2=\sigma\left(\frac{\partial_i}{||\partial_i||},\frac{\partial_j}{||\partial_j||}\right)^2g_{ii}g_{jj}\ .
\end{equation}
So, putting this together with the inequalities \eqref{abs_est} and Theorem~\ref{th:boundonmax}, we get 
\begin{align}
|E_{S,ij}|\leq  \left[\left(\frac{4\pi\chi_0}{S_{\rm min}}+2C_2 K_\Lambda^2 e^{-\frac{2}{3}K_{\Lambda}^2\lambda}\right)+2K_\Lambda(1+C_1) \left(\sqrt{\frac{4\pi\chi_0}{S_{\rm min}}}+\sqrt{2C_2} K_\Lambda e^{-\frac{1}{3}K_{\Lambda}^2\lambda}\right)\right]\sqrt{g_{ii}}\sqrt{g_{jj}}\ ,
\end{align}
where we used that $\sqrt{a+b}\leq \sqrt{a}+\sqrt{b}$.
Choosing the coordinates on the surface, $x_1,x_2$, to be orthonormal at the point at time $\lambda$, the area form of the cross section surface at that point, at varying times, is given by $\sqrt{\mathrm{det}^{12}g_{ij}}dx^1dx^2$,  and we obtain: 
\begin{equation}
\frac{1}{\sqrt{\mathrm{det}^{12}g_{ij}}}\frac{d}{d\lambda}\sqrt{\mathrm{det}^{12}g_{ij}}|_{\lambda}=\frac{1}{2}\mathrm{tr}^{12}\left(\frac{dg_{ij}}{d\lambda}|_{\lambda}\right)=\frac{2}{3}K_{\Lambda}^2+ { \frac12} E_S\ ,
\end{equation}
where $E_S$ satisfies 
\begin{equation}
|E_S|\leq { 2}\left(\left[\frac{4\pi\chi_0}{S_{\rm min}}+2C_2 K_\Lambda^2 e^{-\frac{2}{3}K_{\Lambda}^2\lambda}\right]+2K_\Lambda (1+C_1) \left[\sqrt{\frac{4\pi\chi_0}{S_{\rm min}}}+\sqrt{2C_2} K_\Lambda e^{-\frac{1}{3}K_{\Lambda}^2\lambda}\right]\right)\ .
\end{equation}
Thus, at such a slice 
\begin{equation}
\frac{d}{d\lambda}dS=\left(\frac{2}{3}K_{\Lambda}^2+ { \frac12} E_S\right)dS\ .
\end{equation}
Integrating over that slice and using Lemma \ref{Ham_trick}, we see that at every $\lambda$ where $S_{\min}(\lambda)$ is differentiable, 
\begin{align}
&K_{\Lambda}^2\left(\frac{2}{3}-C_4 e^{-\frac{1}{3}K_{\Lambda}^2\lambda}\right)S_{\min}(\lambda)-4\pi\chi_0-2(1+C_1)K_\Lambda\sqrt{4\pi\chi_0 S_{\min}(\lambda)} \leq \frac{d}{d\lambda}S_{\min}(\lambda)\\ \nonumber
&\qquad\qquad\qquad\leq K_{\Lambda}^2\left(\frac{2}{3}+C_4\, e^{-\frac{1}{3}K_{\Lambda}^2\lambda}\right)S_{\min}(\lambda)+4\pi\chi_0+2(1+C_1)K_\Lambda\sqrt{4\pi\chi_0 S_{\min}(\lambda)}\ ,
\end{align} 
where $C_4=2(1+C_1)\sqrt{2C_2}+2C_2$.
Thus, at such point of differentiability,
\begin{equation}\label{log_growth}
\left|\frac{d}{d\lambda}\log(S_{\min})-\frac{2}{3}K_{\Lambda}^2\right| \leq \frac{4\pi\chi_0}{S_{\min}}+2(1+C_1)K_\Lambda\frac{\sqrt{4\pi\chi_0}}{\sqrt{S_{\min}}}+K_{\Lambda}^2 C_4\, e^{-\frac{1}{3}K_{\Lambda}^2\lambda} \ .
\end{equation}
Now, if $\chi\leq 0$, i.e.~$\chi_0 =0$
\begin{equation}\label{log_growthneg}
\left|\frac{d}{d\lambda}\log(S_{\min})-\frac{2}{3}K_{\Lambda}^2\right| \leq K_{\Lambda}^2 C_4\, e^{-\frac{1}{3}K_{\Lambda}^2\lambda} 
\end{equation}{
and there exists a time $\lambda_{0,1}'$ such that 
\begin{equation}\label{eq:C4bound}
K_{\Lambda}^2 C_4\, e^{-\frac{1}{3}K_{\Lambda}^2\lambda_{0,1}'}=\frac{1}{9}K_{\Lambda}^2\ .
\end{equation}
If instead $\chi=2$, note that as long as $S_{\min}\geq \hat S:= \frac{8 \pi}{K_\Lambda^2} \frac{1}{\left(\sqrt{(1+C_1)^2+2/9}-(1+C_1)\right)^2}$ (ensuring that the first two terms in the right hand side of \eqref{log_growth} contribute at most $\frac{2}{9}K_{\Lambda}^2$)
\begin{equation}
\frac{d}{d\lambda}\log(S_{\min}) \geq \left(\frac{4}{9}-C_4\right)K_{\Lambda}^2:=-C_4'K_{\Lambda}^2\ .
\end{equation}
We therefore get that  if $S_{\min}(0) \geq S_{\rm lower}$, where
\be
S_{\rm lower}:=\hat S\, e^{C_4'K_{\Lambda}^2\lambda_{0,1}'}=\hat S\,(9C_4)^{3C_4'}
\ee
 then $S_{\min}(\lambda) \geq  \hat S$ on $[0,\lambda_{0,1}']$ with  $\lambda_{0,1}'$ defined by (\ref{eq:C4bound}), and at $\lambda_{0,1}'$, 

\begin{equation}\label{eq:closedbound}
\frac{{8\pi}}{S_{\min}(\lambda_{0,1}')}+\frac{2 (1+C_1) {\sqrt{8\pi}}K_\Lambda}{\sqrt{S_{\min}(\lambda_{0,1}')}}+K_{\Lambda}^2 C_4\, e^{-\frac{1}{3}K_{\Lambda}^2\lambda_{0,1}'}\leq \frac{1}{3}K_{\Lambda}^2\ .
\end{equation}}

Assuming, for $\chi=2$, that $S_{\rm min}(0)>S_{\rm lower}$, we can integrate \eqref{log_growth} for any $\chi_0$,  and get for every $\lambda\geq \lambda_{0,1}'$ the non-optimal estimate 
\begin{equation}
S_{\min}(\lambda)\geq S_{\min}(\lambda_{0,1}')e^{\frac{1}{3}K_{\Lambda}^2(\lambda-\lambda_{0,1}')}\ .
\end{equation}
Substituting back to \eqref{log_growth}, we obtain   
\begin{align}\label{log_growth_2}
&\left|\frac{d}{d\lambda}\log(S_{\min})-\frac{2}{3}K_{\Lambda}^2\right| \leq\\\nonumber
&\qquad \leq \frac{4\pi\chi_0}{S_{\min}(\lambda_{0,1}')}e^{-\frac{1}{3}K_{\Lambda}^2(\lambda-\lambda_{0,1}')}+2(1+C_1)K_\Lambda\frac{\sqrt{4\pi\chi_0}}{\sqrt{S_{\min}(\lambda_{0,1}')}}e^{-\frac{1}{6}K_{\Lambda}^2(\lambda-\lambda_{0,1}')}+K_\Lambda^2C_4\, e^{-\frac{{1}}{3}K_{\Lambda}^2\lambda}\ .
\end{align}
Now, let $\lambda_{0,1}\geq \lambda_{0,1}'$ be such that 
\begin{align}\label{bound_int_inf}\nonumber
&\int_{\lambda_{0,1}}^{\infty}d\lambda\;\left(\frac{4\pi\chi_0}{S_{\min}(\lambda_{0,1}')}e^{-\frac{1}{3}K_{\Lambda}^2(\lambda-\lambda_{0,1}')}+2(1+C_1)K_\Lambda \frac{\sqrt{4\pi\chi_0}}{\sqrt{S_{\min}(\lambda_{0,1}')}}e^{-\frac{1}{6}K_{\Lambda}^2(\lambda-\lambda_{0,1}')}+K_\Lambda^2 C_4\, e^{-\frac{{1}}{3}K_{\Lambda}^2\lambda}\right)\\ 
&\qquad \leq \log 2\ .
\end{align}
Then integrating \eqref{log_growth_2} from $\lambda_1$ to $\lambda_2$, where $\lambda_{0,1} \leq \lambda_1<\lambda_2$, and using \eqref{bound_int_inf}, we obtain
\begin{equation}
\left| \log\left(\frac{S_{\min}(\lambda_2)}{S_{\min}(\lambda_1)}\right)-\frac{2}{3}K_{\Lambda}^2(\lambda_2-\lambda_1)\right| \leq \log 2\ ,
\end{equation}
so exponentiating both sides yields the desired result.}
\end{proof}
{Notice that the additional requirement in the case of the sphere depends exponentially on the initial conditions. This is different from what happens in the case of complete homogeneity where, for Bianchi-IX universes, one has to impose a lower bound on $^{(3)} R$ \cite{Wald:1983ky}. This bound however does not depend exponentially on the initial conditions. }

By the form of the metric in (\ref{eq:spatial_metric}), it is straightforward to check that if a geodesic is at a point tangent to the vector $E$, it is tangent to $E$ everywhere.  Denote therefore by $L(\lambda)$ the length, at time $\lambda$, of any geodesic $\gamma$  that is parallel to the $z$-direction, from an initial slice to itself. Additionally, denote by $V(\lambda)$ the volume of ${\cal M}_\lambda$ at time $\lambda$.

\begin{theorem}\label{length_growth_lemma}
Under the conditions of Theorem \ref{surface_growth_lemma}, for every $\delta >0$, there exists $\lambda_{0,2}{\geq}\lambda_{0,1}$ such that for every $\lambda>\lambda_{0,2}$ 
\begin{equation}
(1+\delta)^{-1} \leq  \frac{L(\lambda)}{L(\lambda_{0,2})e^{\frac{1}{3}K_{\Lambda}^2(\lambda-\lambda_{0,2})}}\leq 1+\delta\ .
\end{equation}
and
\begin{equation}\label{vol_growth_temp}
(1+\delta)^{-1} \leq\frac{V(\lambda)}{V(\lambda_{0,2})e^{K_{\Lambda}^2(\lambda-\lambda_{0,2})}}\leq 1+\delta\ .
\end{equation}
\end{theorem}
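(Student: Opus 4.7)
The plan is to adapt the approach of Theorem~\ref{surface_growth_lemma}: compute the MCF evolution of $L(\lambda)$ and $V(\lambda)$, control the right-hand sides through the Hamiltonian constraint \eqref{eq0} and the $\tilde G$-symmetry, and bootstrap against the $S_{\min}$-estimate just established.

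For the volume, I would use the identity $\frac{dV}{d\lambda} = \int_{\mathcal M_\lambda} K^2\, dV$. The upper bound follows immediately from Theorem~\ref{th:boundonmax}. For the matching lower bound, the constraint \eqref{eq0} combined with the WEC gives the pointwise inequality $K^2 \geq K_\Lambda^2 - \tfrac{3}{2}\,{}^{(3)}R + \tfrac{3}{2}\sigma^2$, so it suffices to show $\int_{\mathcal M_\lambda}{}^{(3)}R\, dV / V(\lambda) \to 0$. Using the warped-product expression ${}^{(3)}R = {}^{(2)}R - A_{\mu\nu}A^{\mu\nu} - H^2 - 2H'$, the 2D Gauss--Bonnet identity $\int_{\Sigma_z}{}^{(2)}R\, d\mathrm{area} = 4\pi\chi$ on each orbit, the relation $H = (\log S)'$ together with periodicity of $z$ (giving $\int H' S\, dz = -\int H^2 S\, dz$), and $A_{\mu\nu}A^{\mu\nu} \geq \tfrac12 H^2$, I obtain
\[
\int_{\mathcal M_\lambda}{}^{(3)}R\, dV \leq 4\pi \chi\, L(\lambda) + \tfrac12 \int_{\mathcal M_\lambda} H^2\, dV,
\]
which, together with Theorem~\ref{surface_growth_lemma} and the a priori bound $L(\lambda) \leq V(\lambda)/S_{\min}(\lambda)$, yields the desired estimate once $\int H^2\, dV$ is controlled.

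For the length, I would parametrize $\gamma$ by the orbit-labelling coordinate $z$ of the warped product \eqref{eq:spatial_metric} (preserved by the flow thanks to the isometric $\tilde G$-action), set $E := \partial_z/\sqrt{g_{zz}}$, and use $\partial_\lambda g_{zz} = 2 K K_{zz}$ to get
\[
\frac{dL}{d\lambda} = \int_\gamma K\cdot K(E,E)\, d\ell = \tfrac{1}{3}\int_\gamma K^2\, d\ell + \int_\gamma K\,\sigma(E,E)\, d\ell.
\]
By Theorem~\ref{th:boundonmax} the leading term is $(\tfrac13 K_\Lambda^2 + o(1))\, L$. For the error, $\tilde G$-invariance implies that $\sigma$ is constant along each orbit and its orbit-restriction is rigidly constrained (by $SO(2)$ isotropy in the $S^2$ and $H^2$ cases, making $\sigma$ proportional to $\mathrm{diag}(1,-\tfrac12,-\tfrac12)$ in an orbit-adapted frame; in the torus case by an explicit translation-invariant decomposition), so that $\sigma(E,E)^2 \leq c\, \sigma^2$ for a universal $c$. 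By Cauchy--Schwarz it therefore suffices to show $\int_\gamma \sigma^2\, d\ell = o(L)$, which follows from the pointwise bound $\sigma^2 \leq {}^{(3)}R + \tfrac{2}{3}(K^2 - K_\Lambda^2)$ obtained from \eqref{eq0}, the lower bound \eqref{pointwise_scalar_low}, and the $\int {}^{(3)}R\, dV$ estimate above (pulling out the orbit area $S$ to pass from $dV$ to $d\ell$).

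The bootstrap then closes because $V(\lambda) = \int_0^{L(\lambda)} S(z,\lambda)\, dz \geq L(\lambda)\, S_{\min}(\lambda)$ combined with the easy $V$-upper bound and Theorem~\ref{surface_growth_lemma} yields $L(\lambda) \lesssim e^{\frac13 K_\Lambda^2(\lambda - \lambda_{0,1})}$, making $4\pi\chi\, L/V$ asymptotically negligible. The main obstacle is controlling $\int H^2\, dV$, equivalently $S_{\max}/S_{\min}$: at a $z$-maximum of $S$ the relevant inequality flips sign ($H'\leq 0$ at a maximum gives ${}^{(3)}R \geq {}^{(2)}R - A_{\mu\nu}A^{\mu\nu}$), so one must combine \eqref{pointwise_scalar_low} and \eqref{eq0} with the available bound on $A_{\mu\nu}A^{\mu\nu}$ — automatic in the $S^2$ and $H^2$ cases via the $SO(2)$ isotropy forcing $A_{\mu\nu} = \tfrac{H}{2} h_{\mu\nu}$ and hence $A_{\mu\nu}A^{\mu\nu} = \tfrac12 H^2$, but requiring a more delicate argument in the torus case where the orbit-shear component of $A$ is not forced to vanish.
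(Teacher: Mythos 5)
Your length estimate begins as the paper does, from $L'(\lambda)=\int_\gamma K K_{zz}\,dz$, but you then try to control $\int_\gamma \sigma^2\,dz$ by first estimating the volume integral $\int_{\mathcal M_\lambda}{}^{(3)}\!R\,dV$ and ``pulling out the orbit area $S$''. That detour reintroduces exactly the obstacle you flag at the end: passing between the measure $dz$ along $\gamma$ and $dV=S(z)\,dz$ costs a factor of $S$, so you need control of $S_{\max}/S_{\min}$, equivalently of $\int H^2\,dV$. The paper avoids this completely by integrating the warped-product identity ${}^{(3)}\!R = {}^{(2)}\!R - A_{\mu\nu}A^{\mu\nu} - H^2 - 2H'$ along $\gamma$ directly \emph{with respect to $dz$}: then $\int_0^L H'\,dz$ vanishes exactly by periodicity (fundamental theorem of calculus, with no integration by parts and hence no $\int SH^2\,dz$ residue), $-A_{\mu\nu}A^{\mu\nu}-H^2\leq0$, and one gets $\int_\gamma {}^{(3)}\!R\,dz\leq\int_\gamma {}^{(2)}\!R\,dz \lesssim e^{-\frac23 K_\Lambda^2\lambda}L$ with no $H$-term left over. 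Combined with \eqref{dom_en}, \eqref{pointwise_scalar_low}, Theorem~\ref{th:boundonmax}, and Cauchy--Schwarz for $\int K|\sigma_{zz}|\,dz$ (note $\sigma_{zz}^2\leq\sigma^2$ is a trivial pointwise inequality for a symmetric tensor evaluated on a unit vector; no $\tilde G$-rigidity of $\sigma$ is needed), this gives $|L'-\tfrac13 K_\Lambda^2 L|\lesssim e^{-\frac13 K_\Lambda^2\lambda}L$, which integrates to the stated result.

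Your discussion of the remaining ``obstacle'' is also inverted. In $\int_{\mathcal M_\lambda} {}^{(3)}\!R\,dV = 4\pi\chi L - \int A_{\mu\nu}A^{\mu\nu}\,dV + \int H^2\,dV$ you want the $A$-term to absorb the $H^2$-term, so you should want $A_{\mu\nu}A^{\mu\nu}$ \emph{large}; the $SO(2)$ isotropy in the sphere and hyperbolic cases forces $A_{\mu\nu}=\tfrac{H}{2}h_{\mu\nu}$, i.e.\ $A_{\mu\nu}A^{\mu\nu}=\tfrac12 H^2$, which is the \emph{smallest} admissible value given $H$ --- those are the worst, not the automatic, cases. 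In fact no case distinction is needed and the gap closes uniformly: $V' = \int K^2\,dV \leq K_{\max}^2 V$ together with Theorem~\ref{th:boundonmax} gives $V(\lambda)\lesssim e^{K_\Lambda^2\lambda}$, and with $S_{\min}(\lambda)\gtrsim e^{\frac23 K_\Lambda^2\lambda}$ from Theorem~\ref{surface_growth_lemma} one gets a priori $L\leq V/S_{\min}\lesssim e^{\frac13 K_\Lambda^2\lambda}$; then from $\int_\gamma(A_{\mu\nu}A^{\mu\nu}+H^2)\,dz\lesssim e^{-\frac23 K_\Lambda^2\lambda}L$ (the $dz$-integral of the identity, using \eqref{pointwise_scalar_low}) and Cauchy--Schwarz, $\log(S_{\max}/S_{\min})\leq\int_\gamma|H|\,dz\leq\sqrt{L\int H^2\,dz}\lesssim e^{-\frac13 K_\Lambda^2\lambda}L\lesssim1$, so $S_{\max}/S_{\min}$ is uniformly bounded for all topologies, whence $\int H^2\,dV\leq S_{\max}\int H^2\,dz\lesssim e^{-\frac23 K_\Lambda^2\lambda}V$, with the torus case no harder than the others.
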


\begin{proof}
Re-arranging \eqref{basic_eq}, we obtain 
\begin{equation}\label{basic_eq_2}
{}^{(3)}\!R=-A_{\mu\nu}A^{\mu\nu}-H^2+{}^{(2)}\!R-2H'\ .
\end{equation}
Let us integrate \eqref{basic_eq_2} along all the $z$-directed geodesic. By the periodicity, the term in $H'$ does not contribute. Therefore, using (\ref{eq:R2ineq}), we obtain

\begin{equation}\label{first_L}
\int_0^{L(\lambda)} dz\;{}^{(3)}\!R \leq  \int_{0}^{L(\lambda)}dz\; {}^{(2)}\!R\leq   \int_{0}^{L(\lambda)} dz\;\frac{2\cdot 4\pi\chi_0}{S_{\min}(\lambda_{0,1})e^{\frac{2}{3}K_{\Lambda}^2(\lambda-\lambda_{0,1})}}=K_\Lambda^2 C_5e^{-\frac{2}{3}K_{\Lambda}^2\lambda}L(\lambda)
\end{equation}
where 
\be
C_5:=\frac{8\pi\chi_0}{K_\Lambda^2 S_{\min}(\lambda_{0,1})e^{-\frac{2}{3}K_{\Lambda}^2\lambda_{0,1}}} \ ,
\ee
and where we used Theorem~\ref{surface_growth_lemma}, since {$\lambda>\lambda_{0,1}$, given that for this Theorem we are assuming $\lambda> \lambda_{0,2}{\geq}\lambda_{0,1}$}.
In light of \eqref{dom_en}, we therefore have that
\begin{equation}\label{second_L}
\int_0^{L(\lambda)}dz\;\left( \frac{2}{3}\left(K_{\Lambda}^2-K^2\right) +\sigma^2 \right)\leq K_\Lambda^2 C_5 \; e^{-\frac{2}{3}K_{\Lambda}^2\lambda}L(\lambda)\ .
\end{equation}
This implies that, using Theorem~\ref{th:boundonmax}:
\be\label{eq:sigmabound1}
\int_0^{L(\lambda)}dz\;\sigma^2 \leq K_\Lambda^2 C_5 e^{-\frac{2}{3}K_{\Lambda}^2\lambda}L(\lambda)+\int_0^{L(\lambda)}dz\; \frac{2}{3}\left(K^2-K_\Lambda^2\right)\leq K_\Lambda^2\left(C_5+C_2\right) e^{-\frac{2}{3}K_{\Lambda}^2\lambda}L(\lambda)\ .
\ee
Using again Theorem~\ref{th:boundonmax}, we therefore get that
\begin{equation}\label{quant_eq}
\int_{0}^{L(\lambda)} dz\left(\frac{2}{3}|K_{\Lambda}^2-K^2|+ \sigma^2\right) \leq K_\Lambda^2 C_6 e^{-\frac{2}{3}K_{\Lambda}^2\lambda}L(\lambda)\ ,
\end{equation}
where $C_6=C_5+2C_2$.
{Computing
\begin{align}
L'(\lambda)&=\int_0^{L(\lambda)}dz\; KK_{zz}=\int_0^{L(\lambda)}dz\;\left(\frac{K^2}{3}+K\sigma_{zz}\right)=\\ \nonumber
&=\frac{K_{\Lambda}^2}{3}L(\lambda)+\int_0^{L(\lambda)}dz\;\left(\frac{1}{3}(K^2-K_{\Lambda}^2)+K\sigma_{zz} \right)\ ,
\end{align}
we see that 
\begin{align}\label{ODE_L}
|L'(\lambda)- \frac{K_{\Lambda}^2}{3}L(\lambda)|\leq E_L(\lambda)\ ,
\end{align}
for $E_L(\lambda)$ satisfying
\begin{equation}\label{E_L_est}
|E_L(\lambda)|\leq  K_\Lambda^2\frac{C_6}{2}e^{-\frac{2}{3}K_{\Lambda}^2\lambda}L(\lambda)+ K_\Lambda^2\left(1+C_1e^{-\frac{2}{3}K_{\Lambda}^2\lambda}\right) (C_5+C_2)^{1/2} e^{-\frac{1}{3}K_{\Lambda}^2\lambda} L(\lambda)\ . 
\end{equation}  
Here, we have used \eqref{quant_eq} and,  for the term $K\sigma_{zz}$, we have used Theorem~\ref{th:boundonmax}, the Cauchy-Schwartz inequality for {$\int dz \;| \sigma_{zz}|$}} {and eq.~\eqref{eq:sigmabound1}}. {Integrating the ordinary differential inequalities {\eqref{ODE_L} (keeping in mind \eqref{E_L_est}), similarly to what done in Theorem \ref{surface_growth_lemma}, and defining   $\lambda_{0,2}'= \max(\lambda_{0,1},\bar\lambda_{0,2}')$, with $\bar\lambda_{0,2}'$ such that 
\begin{align}\label{bound_int_inf2}
&\int_{\bar\lambda_{0,2}'}^{\infty}d\lambda\;\left( K_\Lambda^2\frac{C_6}{2}e^{-\frac{2}{3}K_{\Lambda}^2\lambda}+ K_\Lambda^2\left(1+C_1e^{-\frac{2}{3}K_{\Lambda}^2\lambda}\right) (C_5+C_2)^{1/2} e^{-\frac{1}{3}K_{\Lambda}^2\lambda} \right) \leq \log (1+\delta)\ ,
\end{align}
we obtain
\begin{equation}
(1+\delta)^{-1} \leq  \frac{L(\lambda)}{L(\lambda_{0,2}')e^{\frac{1}{3}K_{\Lambda}^2(\lambda-\lambda_{0,2}')}}\leq 1+\delta\ .
\end{equation}
}

One can work quite similarly for the volume. Explicitly, we can write
\begin{align}
V'(\lambda)&=\int_{{\cal M}_\lambda}dV\; K^2=K_{\Lambda}^2V(\lambda)+\int_{{\cal M}_\lambda}dV\;\left(K^2-K_{\Lambda}^2\right)\ .
\end{align}
Notice that equations \eqref{first_L}, \eqref{second_L}, \eqref{eq:sigmabound1}, \eqref{quant_eq}, and their derivation hold verbatim if we replace $L(\lambda)$ with $V(\lambda)$, and integrals over $\gamma$ with integrals over ${\cal M}_{\lambda}$. In particular,
\begin{equation}\label{eq:volumeksigmaestimate}
\int_{\mathcal{M}_{\lambda}}\left(\frac{2}{3}|K_{\Lambda}^2-K^2|+ \sigma^2\right) \leq K_\Lambda^2 C_6 e^{-\frac{2}{3}K_{\Lambda}^2\lambda}V(\lambda)\ .
\end{equation}
We therefore see that
\begin{align}\label{ODE_V}
|V'(\lambda)- K_{\Lambda}^2V(\lambda)|\leq E_V(\lambda)\ ,
\end{align}
with
\begin{equation}\label{E_V_est}
|E_V(\lambda)|\leq  \frac{3}{2}K_\Lambda^2 C_6 e^{-\frac{2}{3}K_{\Lambda}^2\lambda}V(\lambda) \ . 
\end{equation} 
Integrating the ordinary differential inequalities \eqref{ODE_V}, and defining  $\lambda_{0,2}''= \max(\lambda_{0,1},\bar\lambda_{0,2}'')$, with $\bar\lambda_{0,2}''$ such that
\begin{align}\label{bound_int_inf2}
&\int_{\bar\lambda_{0,2}''}^{\infty}d\lambda\;\left( \frac{3}{2}K_\Lambda^2 C_6 e^{-\frac{2}{3}K_{\Lambda}^2\lambda} \right) \leq \log (1+\delta)\ ,
\end{align}
we obtain
\begin{equation}
(1+\delta)^{-1} \leq  \frac{V(\lambda)}{V(\lambda_{0,2}'')e^{K_{\Lambda}^2(\lambda-\lambda_{0,2}'')}}\leq 1+\delta\ .
\end{equation}
Choosing $\lambda_{0,2}=\max(\lambda_{0,2}',\lambda_{0,2}'')$, we obtain the desired result.}
\end{proof}

\begin{lemma}
There exists some $C_7$ such that
\begin{equation}\label{C_7_control}
\int_{{\cal M}_{\lambda}} dV \left(|K_{\Lambda}^2-K^2|+ \sigma^2\right) \leq \frac{1}{K_\Lambda} C_7\;e^{\frac{1}{3}K_{\Lambda}^2(\lambda-\lambda_{0,2})}
\end{equation} 
\end{lemma}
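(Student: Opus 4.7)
The plan is to obtain \eqref{C_7_control} by combining two estimates that are already in hand from the proof of Theorem~\ref{length_growth_lemma}: the pointwise-integrated bound \eqref{eq:volumeksigmaestimate} on $\frac{2}{3}|K_\Lambda^2-K^2|+\sigma^2$ in terms of the volume, together with the upper FLRW-type volume growth estimate \eqref{vol_growth_temp}. Neither of these requires any new geometric input, so the proof should be a short direct manipulation.

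First I would multiply \eqref{eq:volumeksigmaestimate} by $\tfrac{3}{2}$ to absorb the factor $\tfrac{2}{3}$ in front of $|K_\Lambda^2-K^2|$, which gives
\begin{equation}
\int_{{\cal M}_\lambda} dV\,\bigl(|K_\Lambda^2-K^2|+\sigma^2\bigr)\;\leq\;\tfrac{3}{2}\,K_\Lambda^2\,C_6\,e^{-\frac{2}{3}K_\Lambda^2\lambda}\,V(\lambda).
\end{equation}
Next, I would apply Theorem~\ref{length_growth_lemma} (say with $\delta=1$, so that we may use a $\lambda_{0,2}$ already fixed by that statement) to bound $V(\lambda)\leq 2\,V(\lambda_{0,2})\,e^{K_\Lambda^2(\lambda-\lambda_{0,2})}$ for all $\lambda\geq\lambda_{0,2}$. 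Substituting, the product of exponentials simplifies because
\begin{equation}
e^{-\frac{2}{3}K_\Lambda^2\lambda}\,e^{K_\Lambda^2(\lambda-\lambda_{0,2})}\;=\;e^{-\frac{2}{3}K_\Lambda^2\lambda_{0,2}}\,e^{\frac{1}{3}K_\Lambda^2(\lambda-\lambda_{0,2})}.
\end{equation}
Collecting terms therefore yields \eqref{C_7_control} with
\begin{equation}
C_7\;=\;3\,K_\Lambda^3\,C_6\,V(\lambda_{0,2})\,e^{-\frac{2}{3}K_\Lambda^2\lambda_{0,2}},
\end{equation}
which is universal since $C_6$, $V(\lambda_{0,2})$, and $\lambda_{0,2}$ depend only on the initial data via the constants defined in the previous theorems.

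There is essentially no obstacle here: the lemma is a corollary packaging two earlier estimates into a single convenient form. The only minor bookkeeping point is to make sure the statement is only claimed for $\lambda\geq\lambda_{0,2}$, where the volume comparison of Theorem~\ref{length_growth_lemma} is valid; on the compact interval $[0,\lambda_{0,2}]$ the left hand side is bounded by a constant depending only on the initial data (via Theorem~\ref{th:boundonmax} and a trivial volume bound), so $C_7$ can then be enlarged, if desired, to make \eqref{C_7_control} hold for all $\lambda\geq 0$.
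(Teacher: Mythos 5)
Your proof is correct and follows essentially the same approach the paper uses: combine the $L^1$ estimate \eqref{eq:volumeksigmaestimate} on $\frac{2}{3}|K_\Lambda^2-K^2|+\sigma^2$ with the volume growth bound \eqref{vol_growth_temp}, and simplify the exponentials. Your constant $C_7 = 3K_\Lambda^3 C_6 V(\lambda_{0,2})e^{-\frac{2}{3}K_\Lambda^2\lambda_{0,2}}$ agrees with the paper's $\frac{3}{2}C_6(1+\delta)K_\Lambda^3 V(\lambda_{0,2})e^{-\frac{2}{3}K_\Lambda^2\lambda_{0,2}}$ once you fix $\delta=1$, and your remark about the range of validity ($\lambda\geq\lambda_{0,2}$) is the correct reading of the statement.
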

\begin{proof}
{The desired result is obtained by combining (\ref{eq:volumeksigmaestimate}) with  \eqref{vol_growth_temp}, with $C_7=\frac{3}{2}C_6 (1+\delta) K_\Lambda^3 V(\lambda_{0,2})e^{-\frac{2}{3}K_\Lambda^2\lambda_{0,2}}$.}
\end{proof}

\paragraph{Resetting of time:} Now, for ease of notation, let us re-define the initial time of the flow as to be $\lambda_{0,2}$, so from now on $\lambda_{0,2}=0$. Note that estimates \eqref{pointwise_scalar_low} and \eqref{eq:maxbound} still hold (in fact, with much better constants). \\

In particular, we have, for every $\lambda>0$

\begin{equation}\label{vol_est1}
\frac{1}{2}\leq\frac{V(\lambda)}{V(0)e^{K_{\Lambda}^2\lambda}}\leq 2\ ,
\end{equation}  
\begin{equation}\label{area_est1}
\frac{1}{2} \leq \frac{S_{\min}(\lambda)}{S_{\min}(0)e^{\frac{2}{3}K_{\Lambda}^2\lambda}}\leq 2\ ,
\end{equation}
and 
\begin{equation}\label{quant_est}
\int_{{\cal M}_{\lambda}} dV\;\left(|K_{\Lambda}^2-K^2|+ \sigma^2\right) \leq  \frac{1}{K_\Lambda}C_7\;e^{\frac{1}{3}K_{\Lambda}^2\lambda}\ .
\end{equation}

\section{Spatial closeness\label{sec: spatialprox}}
In this section we focus on the spatial part of the metric, {\it i.e.}~the induced metric on the {hypersurfaces $\mathcal{M}_{\lambda}$ at  fixed $\lambda$}. One can define a comparison metric
\begin{equation}\label{comp_metric}
\mathbf{g} := g(\lambda_0)e^{\frac{2}{3}K_{\Lambda}^2(\lambda-\lambda_0)}. 
\end{equation}
This corresponds to evolving in $\lambda$, starting from $\lambda_0$, the spatial metric of ${\cal M}_{\lambda_0}$, with {the} same rate as the flat slicing of de Sitter.  We are going to prove that the metric on the surfaces at constant $\lambda$ converges pointwise, for large $\lambda_0$, to this comparison metric. 
{At the end of this section, in \ref{prox_glob_flat_sec}, we will construct a \textit{genuinely}-flat spatial metric that expands in time as the flat slices of de Sitter space, which approximates $g$ over expanding balls.}

\subsection{\label{sec:zdep}Propagation of the metric along the level set}  

{In this section we are going to show that, as $\lambda$ becomes larger and larger, the spatial metric $g$ of ${\cal M}_\lambda$ becomes less and less dependent on the transverse direction~$z$.} The propagation of the metric along the level sets is given by the second fundamental form (extrinsic curvature):
\begin{equation}\label{metric_prop}
\mathcal{L}_{\partial_z} g_{\mu\nu}=2 A_{\mu\nu}\ .
\end{equation}
Now, using eq.~\eqref{basic_eq}, the pointwise bounds  ${}^{(2)}\!R$ given by (\ref{eq:R2ineq}) and \eqref{area_est1} and the one on ${}^{(3)}\!R$ in eq.~\eqref{pointwise_scalar_low}, we get 
\begin{equation}\label{less_eps}
H'+\frac{H^2+A_{\mu\nu}A^{\mu\nu}}{2} \leq \left( \frac{4\pi\chi_0 }{S_{\rm min}(0)}+ \frac{C_2 K_\Lambda^2}{2}\right)e^{-\frac{2}{3}K_{\Lambda}^2\lambda}:= C_8 { K_\Lambda^2}\,e^{-\frac{2}{3}K_{\Lambda}^2\lambda}\ .
\end{equation}
This implies the following pointwise bound of $|H|$:
\begin{claim}
\begin{equation}\label{H_bound}
|H|\leq \frac{2}{\sqrt{3}}\sqrt{C_8}{ K_\Lambda}e^{-\frac{1}{3}K_{\Lambda}^2\lambda}:=\eps_{\lambda}\ .
\end{equation}
\end{claim}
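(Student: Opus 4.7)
The plan is to exploit the symmetry reduction to turn \eqref{less_eps} into an ODE in $z$ alone on the closed transverse geodesic $\gamma$, and then to run a maximum-principle argument at the extrema of $H$.

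First, I would observe that by the symmetry setup of Section~\ref{symmetry_assumption_sec}, all intrinsic and extrinsic scalars on an orbit $\Sigma$ are constant along $\Sigma$; in particular, both $H$ and $A_{\mu\nu}A^{\mu\nu}$ depend only on $z$ when restricted to the transverse curve $\gamma \subset \mathcal{M}_\lambda$. Since $\gamma$ is a closed curve (the same one used in Theorem~\ref{length_growth_lemma}), $H(z)$ is a smooth periodic function on a compact $1$-manifold, hence attains both its maximum $H_{\max}$ and minimum $H_{\min}$, at which $\partial_z H = 0$.

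Next, I would invoke Cauchy--Schwarz on the $2$-dimensional orbit: $H^2 = (h^{\mu\nu}A_{\mu\nu})^2 \leq (h^{\mu\nu}h_{\mu\nu})(A^{\alpha\beta}A_{\alpha\beta}) = 2\,A_{\mu\nu}A^{\mu\nu}$, so $A_{\mu\nu}A^{\mu\nu} \geq \tfrac{1}{2}H^2$. Substituting this lower bound into \eqref{less_eps} yields the purely scalar differential inequality
\begin{equation*}
H' + \tfrac{3}{4}H^2 \;\leq\; C_8\, K_\Lambda^2\, e^{-\frac{2}{3}K_\Lambda^2 \lambda}.
\end{equation*}

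Finally, evaluating this inequality at the points of $\gamma$ realising $H_{\max}$ and $H_{\min}$, where $H' = 0$, gives
\begin{equation*}
\tfrac{3}{4}H_{\max}^2 \;\leq\; C_8\, K_\Lambda^2\, e^{-\frac{2}{3}K_\Lambda^2\lambda}, \qquad \tfrac{3}{4}H_{\min}^2 \;\leq\; C_8\, K_\Lambda^2\, e^{-\frac{2}{3}K_\Lambda^2\lambda},
\end{equation*}
hence $|H_{\max}|,\, |H_{\min}| \leq \tfrac{2}{\sqrt{3}}\sqrt{C_8}\, K_\Lambda\, e^{-\frac{1}{3}K_\Lambda^2\lambda} = \eps_\lambda$, which forces $|H(z)| \leq \eps_\lambda$ on all of $\gamma$. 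Because $\gamma$ passes through every orbit of $\mathcal{M}_\lambda$ and $H$ is orbit-constant, this yields the pointwise bound on all of $\mathcal{M}_\lambda$. I do not anticipate a real obstacle here; the only subtle ingredients are the symmetry reduction (which converts the transverse PDE into an ODE in $z$) and the compactness/closedness of $\gamma$ (which guarantees that interior critical points of $H$ exist and that the vanishing of $H'$ there is actually usable).
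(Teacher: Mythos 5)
Your proof is correct and follows essentially the same route as the paper: evaluate \eqref{less_eps} at the extrema of the periodic function $H(z)$ where $H'=0$, and observe that $A_{\mu\nu}A^{\mu\nu}\ge\tfrac12 H^2$ by Cauchy--Schwarz on the 2-dimensional orbit to get the constant $\tfrac{2}{\sqrt3}\sqrt{C_8}$. You have usefully made explicit the Cauchy--Schwarz step that the paper's one-line proof leaves implicit (without it one would only get the weaker $\sqrt{2C_8}$).
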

\begin{proof}
At the minimum and maximum points of $H$, $H'=0$ and \eqref{H_bound} follows from \eqref{less_eps} there. If  \eqref{H_bound} holds at the minimum and the maximum, it holds at any point.
\end{proof}

Integrating \eqref{less_eps}, and using the pointwise bound \eqref{H_bound}, we also get

\be
\int_{0}^{z}A_{\mu\nu}A^{\mu\nu} \leq \frac{3}{2}\eps^2_{\lambda}z+ 4 \eps_{\lambda}\ , 
\ee
which, using Cauchy-Schwartz, implies 
\begin{equation}\label{A_bound}
\int_{0}^{z}|A| \leq \sqrt{\frac{3}{2}\eps^2_{\lambda}z^2+ 4 \eps_{\lambda}z}\ . 
\end{equation}

Observe that \eqref{metric_prop} implies that, {taking any \textit{product} co-ordinate system on ${\cal M}_{\lambda}$ ({\it i.e.}, a co-ordinate of the form $(\alpha,\beta,z)$, where $z$ is as above and $\partial_\alpha,\partial_\beta$ are tangent to each surface orbit)}
\begin{equation}\label{metric_prop_cor1}
\partial_z g_{\alpha\alpha}=2A(\partial_{\alpha},\partial_{\alpha}) = 2A\left(\frac{\partial_{\alpha}}{||\partial_{\alpha}||},\frac{\partial_{\alpha}}{||\partial_{\alpha}||}\right)||\partial_{\alpha}||^2=2A\left(\frac{\partial_{\alpha}}{||\partial_{\alpha}||},\frac{\partial_{\alpha}}{||\partial_{\alpha}||}\right)\cdot g_{\alpha\alpha}\ .
\end{equation}
As $\frac{\partial_{\alpha}}{||\partial_{\alpha}||}$ is a unit vector 
\begin{equation}\label{A_bd_lie}
\left|A\left(\frac{\partial_{\alpha}}{||\partial_{\alpha}||},\frac{\partial_{\alpha}}{||\partial_{\alpha}||}\right)\right| \leq |A|\ ,
\end{equation}
so using this and \eqref{metric_prop_cor1},
\begin{equation}\label{metric_prop_cor}
|\partial_z g_{\alpha\alpha}| \leq 2|A|g_{\alpha\alpha}\ .
\end{equation}

Thus, for every {product} co-ordinate system on ${\cal M}_{\lambda}$, \eqref{metric_prop_cor} and \eqref{A_bound} imply that as long as 
\begin{equation}\label{small_d}
\eps_{\lambda}(z_2-z_1) \leq \frac{8}{3}
\end{equation}
 we have that
\be
\left| \log(g^{z_2}_{\alpha\alpha})-\log(g^{z_1}_{\alpha\alpha})\right|\leq 2\int_{z_1}^{z_2}|A| \leq 4\sqrt{2\eps_{\lambda}(z_2-z_1)}\ .
\ee
Exponentiating both sides, we obtain that for every tangent vector $W\in T_p\{z=z_2\}$ 
\begin{equation}\label{error_est}
e^{-4\sqrt{2\eps_{\lambda}(z_2-z_1)}} \leq \frac{g(W,W)}{g^{\mathrm{prod}}(W,W)} \leq e^{4\sqrt{2 \eps_{\lambda}(z_2-z_1)}}\ , 
\end{equation}
where $g^{\mathrm{prod}}$ is the product metric $dz^2+g^{z_1}$ under the standard flow lines. 
Thus, given some $\delta>0$, the two metrics remain a factor $(1+\delta)$ one from the other over a distance 
\begin{equation}\label{d_delta_def}
d^{\delta}_{\lambda}:=\frac{\log^2(1+\delta)}{32\eps_{\lambda}}={\frac{1}{K_\Lambda}}\frac{1}{64}\sqrt{\frac{3}{C_8}}\log^2(1+\delta)e^{\frac{1}{3}K_{\Lambda}^2\lambda}\ .
\end{equation}
Note that for $\delta$ sufficiently small, this is compatible with the assumption \eqref{small_d} which was previously employed.

\begin{claim}
For every $\delta$ there exists some $1\gg\rho=\rho(\delta)>0$ such that for every $\lambda>0$,  each strip ${\cal M}^{z_1,\delta}_{\lambda}:=\{x\in {\cal M}_{\lambda}\;|\;  z_1 \leq z(x) \leq z_1+d^{\delta}_{\lambda}\}$ satisfies
\begin{equation}
\frac{\mathrm{Vol}({\cal M}_{\lambda}^{z_1,\delta})}{V(\lambda)} \geq \rho\ . 
\end{equation} 
\end{claim}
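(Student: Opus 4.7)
The plan is to combine two uniform-in-$\lambda$ estimates: a Lipschitz-type bound on how the cross-sectional area $S(z,\lambda)$ varies in $z$, together with a uniform lower bound on the ratio $d^\delta_\lambda/L(\lambda)$.

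First, I would observe that by the symmetry assumption the cross-sectional area $S(z,\lambda)$ and the orbit mean curvature $H(z,\lambda)$ depend only on $z$ (all intrinsic and extrinsic scalars are constant along each orbit, as noted at the end of Section~\ref{symmetry_assumption_sec}). The first variation of area applied to the unit normal $\partial_z$ then gives $\partial_z S(z,\lambda) = H(z,\lambda)\, S(z,\lambda)$, so by the pointwise bound \eqref{H_bound},
\begin{equation*}
\left|\partial_z \log S(z,\lambda)\right| \;\leq\; \eps_\lambda,
\end{equation*}
and integration yields $S(z,\lambda)/S(z',\lambda) \leq e^{\eps_\lambda|z-z'|}$ for all heights $z,z'$ in the slice.

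The crucial observation is that $\eps_\lambda \cdot L(\lambda)$ is bounded uniformly in $\lambda$: by \eqref{H_bound} one has $\eps_\lambda \sim K_\Lambda e^{-K_\Lambda^2\lambda/3}$, while $L(\lambda)\leq 2L(0)e^{K_\Lambda^2\lambda/3}$ by Theorem~\ref{length_growth_lemma} (applied with the same time-reset used for \eqref{vol_est1}), so $\eps_\lambda L(\lambda) \leq C_9$ for a universal constant $C_9$ depending only on the initial data. Consequently, all cross-sectional areas at time $\lambda$ are uniformly comparable: $e^{-C_9}S(z_0,\lambda)\leq S(z,\lambda)\leq e^{C_9}S(z_0,\lambda)$ for any reference height $z_0$.

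To conclude, both $d^\delta_\lambda$ (by \eqref{d_delta_def}) and $L(\lambda)$ scale like $e^{K_\Lambda^2\lambda/3}$, so $d^\delta_\lambda/L(\lambda)\geq c(\delta) > 0$ uniformly in $\lambda$ for some $c(\delta)$ depending only on $\delta$ and the initial data; the edge case $d^\delta_\lambda \geq L(\lambda)$ is trivial since the strip then exhausts the slice and one can take $\rho=1$. Writing both $V(\lambda)$ and $\mathrm{Vol}(\mathcal{M}^{z_1,\delta}_\lambda)$ as $z$-integrals of $S(z,\lambda)$, with $z$ understood periodically (the $G$-orbits foliate $\mathcal{M}_\lambda$ without degeneration), and applying the uniform comparability of $S$ gives
\begin{equation*}
\frac{\mathrm{Vol}(\mathcal{M}^{z_1,\delta}_\lambda)}{V(\lambda)} \;\geq\; e^{-2C_9}\,\frac{d^\delta_\lambda}{L(\lambda)} \;\geq\; e^{-2C_9}\, c(\delta) \;=:\; \rho(\delta) > 0,
\end{equation*}
which is the claim. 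I do not anticipate a serious obstacle here: the argument is essentially balancing the exponential decay of $\eps_\lambda$ against the exponential growth of $L(\lambda)$ and $d^\delta_\lambda$, all of which are already in hand. The only points to verify carefully are the sign convention in $\partial_z S = H S$ (from the definition of $A_{\mu\nu}$ with $z$ a signed distance function) and that $z$ is indeed globally periodic, which follows from the symmetry assumptions of Section~\ref{symmetry_assumption_sec}.
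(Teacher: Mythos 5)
Your proof is correct, but it takes a different route from the paper's. The paper's argument is more direct: it writes $\mathrm{Vol}(\mathcal{M}^{z_1,\delta}_\lambda)=\int_{z_1}^{z_1+d^\delta_\lambda}\mathrm{Area}(\{z=s\})\,ds$, bounds each cross-sectional area from below by $\tfrac12 S_{\min}(0)e^{\frac{2}{3}K_\Lambda^2\lambda}$ via \eqref{area_est1}, and divides by the upper bound $V(\lambda)\leq 2V(0)e^{K_\Lambda^2\lambda}$ from \eqref{vol_est1}. The exponentials $e^{\frac{2}{3}K_\Lambda^2\lambda}\cdot e^{\frac{1}{3}K_\Lambda^2\lambda}$ (from the area bound and from $d^\delta_\lambda$) cancel against $e^{K_\Lambda^2\lambda}$ exactly, giving the explicit $\rho(\delta)$ in \eqref{rho_def}, with no need to compare areas at different heights $z$. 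Your approach instead establishes that all cross-sections of a given slice are \emph{mutually comparable} (via $\partial_z\log S=H$, the pointwise bound $|H|\leq\eps_\lambda$, and the observation that $\eps_\lambda L(\lambda)$ is uniformly bounded), and then reduces the claim to the ratio $d^\delta_\lambda/L(\lambda)$. This is a valid and slightly more robust observation — the uniform comparability of cross-sections is a nice structural fact in its own right — but it requires invoking the length estimate of Theorem~\ref{length_growth_lemma} in addition to the quantities the paper uses, and it introduces an extra multiplicative constant $e^{-2C_9}$ where the paper's bound is sharper. Your handling of the periodicity and of the edge case $d^\delta_\lambda\geq L(\lambda)$ is also fine, and your identification of $\partial_z S=HS$ (with the paper's sign conventions for $A_{\mu\nu}$ and $H$) is correct.
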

\begin{proof}
Using \eqref{area_est1} and \eqref{vol_est1}, we see that 
\be
\mathrm{Area}(\{z=s\})\geq \frac{1}{2}S_{\mathrm{min}}(0)e^{\frac{2}{3}K_{\Lambda}^2\lambda}\ ,
\ee
and
 \be
 V(\lambda) \leq 2V(0)e^{K_{\Lambda}^2\lambda}\ .
 \ee
Since,
\begin{equation}\label{co_area}
\mathrm{Vol}({{\cal M}_{\lambda}})= \int_0^{z_0} {ds}\;\mathrm{Area}(\{z=s\})
\end{equation}
the result holds with 
\begin{equation}\label{rho_def}
\rho(\delta):={\frac{1}{64} \sqrt{\frac{3}{C_8}}}\frac{\log^2(1+\delta)S_{\mathrm{min}}(0)}{4V(0) { K_\Lambda}}\ .
\end{equation}

\end{proof}

\subsection{\label{sec:L2cond}Conditional $L^2$ closeness to exponentially expanding slices}  

{For technical reasons, it will be important in the following} to define norms with respect to the comparison metric ${\mathbf g}$ defined in \eqref{comp_metric}, instead of the actual metric $g$. In this Section, we are going to deduce results under the condition the two metrics are a priori close to each other. We are going to relax this assumption in the following Section. To compare norms defined with respect to the two different metrics we will need the following lemma.

\begin{lemma}\label{lemma:metricnormbound} There exists a $1>\gamma_0>0$ with the following property. Suppose $0<\gamma\leq \gamma_0$ and 
\be\label{eq:neammetric}
||g-\mathbf{g}||_{\mathbf{g}} \leq \gamma\ . 
\ee 
Then, for any $2$-tensor $T$, there exists a universal constant $D_3$ such that 
\begin{equation}\label{Twithtwogs}
\left(1-D_3\gamma\right)||T||_{\mathbf{g}}  \leq ||T||_{g} \leq \left(1+D_3\gamma\right)||T||_{\mathbf{g}}\ .
\end{equation}
\begin{proof}
Choose coordinates at a point such that $\mathbf{g}_{ij}=\delta_{ij}$, {\it i.e.}~orthogonal at the point. Then the condition (\ref{eq:neammetric}) implies that $g_{ij}=\delta_{ij}+\eps_{ij}$ where $|\eps_{ij}|\leq \gamma$. By the inversion formula for $3\times 3$ matrices, $g^{ij}=\delta^{ij}+\tilde{\eps}^{ij}$ where $|\tilde{\eps}^{ij}|\leq D_1\gamma$ (for sufficiently small $\gamma$).
Thus, 
\begin{equation}\label{converstion_1}
||T||^2_{g}=g^{ij}g^{kl}T_{ik}T_{jl}=\delta^{ij}\delta^{kl}T_{ik}T_{jl}+\eps_{ijkl}T_{ik}T_{jl}\ ,
\end{equation}
where $|\eps_{ijkl}|\leq D_2\gamma$. Now, the first term in the right hand side is (by definition, and by our choice of co-ordinates) $||T||^2_{\mathbf{g}}$.
Moreover, again by definition 
\be
|T_{ik}|^2 \leq ||T||^2_{\mathbf{g}}\ , 
\ee
as this is one of the terms appearing in the sum (again, by our choice of co-ordinates). Thus 
\be
|\eps_{ijkl}T_{ik}T_{jl}| \leq D_3\gamma ||T||^2_{\mathbf{g}}\ ,
\ee 
from which, using \eqref{converstion_1},  we obtain \eqref{Twithtwogs}.
\end{proof}
\end{lemma}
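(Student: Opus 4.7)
The plan is to reduce everything to pointwise linear algebra by working in a frame where $\mathbf{g}$ is represented by the identity matrix. Pick an arbitrary point $p$ and coordinates near $p$ such that $\mathbf{g}_{ij}(p)=\delta_{ij}$; this is always possible by choosing an orthonormal basis for $\mathbf{g}$ at $p$. In these coordinates the hypothesis $\|g-\mathbf{g}\|_{\mathbf{g}}\leq\gamma$ translates into a pointwise bound $|g_{ij}-\delta_{ij}|\leq\gamma$ on every matrix entry, so I can write $g_{ij}=\delta_{ij}+\varepsilon_{ij}$ with $|\varepsilon_{ij}|\leq\gamma$.

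Next I would invert the matrix $g_{ij}$ to control $g^{ij}$. For $\gamma$ smaller than a universal threshold $\gamma_0$ (chosen so that $3\gamma_0<1$ say, making $I+\varepsilon$ safely invertible) I would use either the Neumann series $g^{-1}=\sum_{k\geq 0}(-\varepsilon)^k$ or Cramer's rule for $3\times 3$ matrices to conclude $g^{ij}=\delta^{ij}+\tilde\varepsilon^{ij}$ with $|\tilde\varepsilon^{ij}|\leq D_1\gamma$ for some universal constant $D_1$.

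Having this, I expand
\[
\|T\|_g^2 = g^{ij}g^{kl}T_{ik}T_{jl}
=(\delta^{ij}+\tilde\varepsilon^{ij})(\delta^{kl}+\tilde\varepsilon^{kl})T_{ik}T_{jl}
=\|T\|_{\mathbf g}^2 + E,
\]
where $E$ collects the three cross-terms involving at least one factor of $\tilde\varepsilon$. Each of these cross-terms is a finite sum of monomials of the form $(\text{entry of }\tilde\varepsilon)\cdot T_{ik}T_{jl}$. Because in our coordinates $\|T\|_{\mathbf{g}}^2=\sum_{i,k}T_{ik}^2$, each individual $|T_{ik}|^2$ is bounded by $\|T\|_{\mathbf{g}}^2$, and Cauchy--Schwarz gives $|T_{ik}T_{jl}|\leq\|T\|_{\mathbf{g}}^2$. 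Using the bound $|\tilde\varepsilon^{ij}|\leq D_1\gamma$ and summing the finitely many monomials, I obtain $|E|\leq D_2\gamma\,\|T\|_{\mathbf{g}}^2$ for some universal $D_2$.

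Finally, I take the square root: since $\sqrt{1+x}\leq 1+\tfrac12|x|$ and $\sqrt{1-x}\geq 1-|x|$ for $|x|\leq\tfrac12$, choosing $\gamma_0$ small enough that $D_2\gamma_0\leq\tfrac12$ yields
\[
(1-D_3\gamma)\|T\|_{\mathbf g} \leq \|T\|_g \leq (1+D_3\gamma)\|T\|_{\mathbf g}
\]
with $D_3$ a universal constant depending only on $D_2$. The argument holds pointwise and is independent of the chosen frame, so it passes to the stated global inequality. The only mild subtlety is keeping track of the combinatorial constants $D_1,D_2,D_3$ when inverting $g$; for $3$-dimensional slices this is entirely routine, so I expect no real obstacle beyond bookkeeping.
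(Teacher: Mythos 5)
Your proposal is correct and follows essentially the same route as the paper: choose $\mathbf{g}$-orthonormal coordinates at a point, use the matrix inversion to control $g^{ij}$, expand $\|T\|_g^2$ and bound the error term by $\|T\|_{\mathbf g}^2$. The only difference is that you spell out the final square-root step explicitly, which the paper leaves implicit.
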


\begin{lemma}
 Let $\gamma_0$ be as in Lemma~\ref{lemma:metricnormbound}, and let $0<\gamma\leq\gamma_0$. There exists a positive constant $C_9$ with the following significance: let  $\lambda_0>0$ be some time and $\mathbf{g}$ defined in eq.~\eqref{comp_metric}
and set, for each $\lambda>\lambda_0$
\begin{equation}\label{real_E_def}
E(\lambda):=\int_{{\cal M}_{\lambda}}||g(\lambda)-\mathbf{g}(\lambda)||^2_{\mathbf{g}(\lambda)}dV_{g(\lambda)}\ .
\end{equation}
Then, for every $\lambda$ such that $||g(\lambda)-\mathbf{g}(\lambda)||_{\mathbf{g}(\lambda)} \leq \gamma$, we have that 
\begin{equation}\label{DEfinal}
 E'(\lambda) \leq K_\Lambda^2 \left(1+ \frac32 C_2 e^{-\frac{2}{3}K_\Lambda^2 \lambda}\right)E(\lambda)+ C_9 K_\Lambda^{1/2}e^{\frac{1}{6}K_{\Lambda}^2\lambda}E(\lambda)^{1/2} \;.
\end{equation} 
\end{lemma}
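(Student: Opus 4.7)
The plan is to differentiate $E(\lambda)=\int_{{\cal M}_\lambda}\|T\|_{\mathbf{g}}^2\,dV_g$ under the integral sign, where $T_{ij}:=g_{ij}-\mathbf{g}_{ij}$. Three sources contribute: the volume form, via $\frac{d}{d\lambda}dV_g=K^2\,dV_g$; the two copies of $\mathbf{g}^{-1}$ inside $\|T\|_{\mathbf{g}}^2=\mathbf{g}^{ik}\mathbf{g}^{jl}T_{ij}T_{kl}$, each giving $-\tfrac{2}{3}K_\Lambda^2$ because $\frac{d\mathbf{g}_{ij}}{d\lambda}=\tfrac{2}{3}K_\Lambda^2\mathbf{g}_{ij}$ by definition \eqref{comp_metric}; and the tensor $T$ itself. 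Using the MCF evolution $\frac{dg_{ij}}{d\lambda}=2KK_{ij}=\tfrac{2}{3}K^2\,g_{ij}+2K\sigma_{ij}$ together with $K_{ij}=\tfrac{1}{3}Kg_{ij}+\sigma_{ij}$, one rewrites
\begin{equation*}
\frac{dT_{ij}}{d\lambda}=\tfrac{2}{3}K_\Lambda^2\,T_{ij}+\tfrac{2}{3}(K^2-K_\Lambda^2)\,g_{ij}+2K\sigma_{ij}.
\end{equation*}
Pairing the leading $\tfrac{2}{3}K_\Lambda^2 T$ piece against $2T$ in the $\mathbf{g}$-inner product produces $+\tfrac{4}{3}K_\Lambda^2\|T\|_{\mathbf{g}}^2$, which cancels exactly the $-\tfrac{4}{3}K_\Lambda^2\|T\|_{\mathbf{g}}^2$ arising from the two inverse metrics. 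This cancellation is the structural heart of the estimate.

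After the cancellation one is left with
\begin{equation*}
E'(\lambda)=\int_{{\cal M}_\lambda}\Bigl[K^2\|T\|_{\mathbf{g}}^2+\tfrac{4}{3}(K^2-K_\Lambda^2)\langle T,g\rangle_{\mathbf{g}}+4K\langle T,\sigma\rangle_{\mathbf{g}}\Bigr]\,dV_g.
\end{equation*}
The first integrand is handled by the pointwise bound $K^2\leq K_\Lambda^2\bigl(1+\tfrac{3}{2}C_2 e^{-\frac{2}{3}K_\Lambda^2\lambda}\bigr)$ coming from Theorem~\ref{th:boundonmax}, and produces directly the first summand on the right of \eqref{DEfinal}. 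For the two cross terms I would apply Cauchy--Schwarz, first using the pointwise bounds $|K^2-K_\Lambda^2|\lesssim K_\Lambda^2$ and $K\lesssim K_\Lambda$ (universal, again from Theorem~\ref{th:boundonmax} together with $K>0$) to upgrade the $L^1$ information \eqref{quant_est} for $|K^2-K_\Lambda^2|$ and $\sigma^2$ into the $L^2$ bounds $\int(K^2-K_\Lambda^2)^2\,dV_g,\ \int K^2\sigma^2\,dV_g\lesssim C_7 K_\Lambda e^{\frac{1}{3}K_\Lambda^2\lambda}$; Cauchy--Schwarz against $\|T\|_{\mathbf{g}}$ then produces a contribution of the required form $C_9 K_\Lambda^{1/2}e^{\frac{1}{6}K_\Lambda^2\lambda}E(\lambda)^{1/2}$.

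To convert between the intrinsic norm $\|\sigma\|_g=\sigma$ and $\|\sigma\|_{\mathbf{g}}$ and to bound $|\langle T,g\rangle_{\mathbf{g}}|$ by a universal multiple of $\|T\|_{\mathbf{g}}$ in the Cauchy--Schwarz step, one invokes Lemma~\ref{lemma:metricnormbound} together with the identity $\|\mathbf{g}\|_{\mathbf{g}}^2=3$ valid in three dimensions; both steps rely on the standing hypothesis $\|g-\mathbf{g}\|_{\mathbf{g}}\leq\gamma\leq\gamma_0$, which is precisely why this lemma gives only \emph{conditional} closeness. The main difficulty is not any single inequality but the careful bookkeeping of constants --- and above all the requirement that the $\pm\tfrac{4}{3}K_\Lambda^2\|T\|_{\mathbf{g}}^2$ cancellation be exact: any residual coefficient of this form would force the Gr\"onwall exponent for $E$ to exceed the volume growth rate $K_\Lambda^2$, and the bootstrap argument for spatial closeness of $g$ to $\mathbf{g}$ in the following subsections would then collapse.
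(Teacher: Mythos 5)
Your proposal is correct and takes essentially the same route as the paper. The key observation — that differentiating the volume form produces $+K^2\|T\|_{\mathbf{g}}^2$, the two inverse metrics produce $-\frac{4}{3}K_\Lambda^2 E$, and the $\frac{2}{3}K_\Lambda^2 T$ piece of $\dot T$ cancels the latter exactly — is precisely the structural skeleton of the paper's computation (there the paper reaches the same cancellation by expanding $\langle T, K^2 g - K_\Lambda^2\mathbf{g}\rangle_{\mathbf{g}}$ rather than by writing $\dot T$ first, but the arithmetic is identical); your treatment of the two cross terms via pointwise bounds from Theorem~\ref{th:boundonmax} to upgrade the $L^1$ estimate~\eqref{quant_est} to $L^2$, followed by Cauchy--Schwarz and Lemma~\ref{lemma:metricnormbound}, also matches the paper's argument (the paper uses $(K-K_\Lambda)^2\le |K^2-K_\Lambda^2|$ where you use $|K^2-K_\Lambda^2|^2\lesssim K_\Lambda^2|K^2-K_\Lambda^2|$, an immaterial difference).
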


\begin{proof}
We have (we suppress the dependence on $\lambda$ in $g$ and $\mathbf{g}$),
\begin{align}
 E'(\lambda)=&\int_{\cal M_\lambda} K^2||g-\mathbf{g}||^2_{\mathbf{g}} dV_g+2\int_{\cal{M}_\lambda}\langle g-\mathbf{g}, 2KK_{ij}-\frac{2}{3}K_{\Lambda}^2\mathbf{g} \rangle_{\mathbf{g}} dV_g  -\frac{4}{3} K_\Lambda^2 E(\lambda)\ .
\end{align}
To get to the final inequality, we bound the first two terms on the RHS separately~\footnote{Notice that if we had defined the norms with respect to the metric $g$ instead of $\mathbf{g}$, this equation would contain terms involving the $\lambda$-derivative of $g$ that would be difficult to control. This is the reason of choosing norms with respect to $\mathbf{g}$.}. For the first one, we use Theorem~\ref{th:boundonmax} ($C_2$ is defined in \eqref{pointwise_scalar_low}): 
\begin{equation}\label{DEfirst}
 \int_{\cal{M}_\lambda} dV_g\; K^2||g-\mathbf{g}||^2_\mathbf{g}\leq  K_{\Lambda}^2 \left(1+ \frac32 C_2 e^{-\frac{2}{3}K_{\Lambda}^2\lambda}\right)E(\lambda)\ .
\end{equation}
For the second one, writing $K_{ij}=\frac{K}{3}g_{ij}+\sigma_{ij}$, we first write
\be\label{eq:longterm}
2\int_{\cal{M}_\lambda} \langle g-\mathbf{g}, 2KK_{ij}-\frac{2}{3}K_{\Lambda}^2\mathbf{g} \rangle_\mathbf{g} dV_g \leq  \frac{4}{3}\int_{\cal{M}_\lambda} \langle g-\mathbf{g}, K^2g-K_{\Lambda}^2\mathbf{g} \rangle_\mathbf{g} dV_g+4\left|\int_{\cal{M}_\lambda} \langle g-\mathbf{g}, K\sigma_{ij} \rangle_\mathbf{g} dV_g\right|\ .
\ee
The first term on the right-hand side of (\ref{eq:longterm}) can be bounded as  
\begin{align}\label{DEsecond}
& \frac{4}{3}\int_{\cal{M}_\lambda} \langle g-\mathbf{g}, K^2g-K_{\Lambda}^2\mathbf{g} \rangle_\mathbf{g} dV_g\leq \frac{4}{3} K_{\Lambda}^2E(\lambda) + \frac{4}{3} \int_{\cal{M}_\lambda} |\langle g-\mathbf{g}, (K^2-K_{\Lambda}^2)g\rangle_\mathbf{g}| dV_g \\ \nonumber
&\leq \frac{4}{3} K_{\Lambda}^2E(\lambda) + \frac{8}{3} K_{\Lambda}\left(1+\frac{C_1}{2}e^{-\frac{2}{3}K_{\Lambda}^2 \lambda}\right)\int_{\cal{M}_\lambda} |K-K_{\Lambda}|\cdot|\langle g-\mathbf{g}, g\rangle_\mathbf{g}| dV_g \\\nonumber
&\leq \frac{4}{3} K_{\Lambda}^2E(\lambda) + \frac{8}{3}K_{\Lambda}\left(1+\frac{C_1}{2}e^{-\frac{2}{3}K_{\Lambda}^2 \lambda}\right)\sqrt{3(1+D_3 \gamma)}E(\lambda)^{1/2}\left(\int_{\cal{M}_\lambda}|K-K_{\Lambda}|^2\right)^{1/2} \\\nonumber
&\leq \frac{4}{3} K_{\Lambda}^2E(\lambda) + \frac{8}{3}K^{1/2}_{\Lambda}\left(1+\frac{C_1}{2}\right)\sqrt{3(1+D_3 \gamma)}C_7^{1/2}E(\lambda)^{1/2}e^{\frac{1}{6}K_{\Lambda}^2\lambda }\ ,
\end{align}
where in the second step we used the bound \eqref{eq:maxbound}, in the third the Cauchy-Schwartz inequality (both on the integral and on the scalar product with respect to the metric $\mathbf{g}$) and Lemma~\ref{lemma:metricnormbound}; in the last step we used the inequality $(K-K_\Lambda)^2 \le |K^2-K_\Lambda^2|$, for $K\ge 0$, and the bound~\eqref{quant_est}.
The second term on the right-hand side of \eqref{eq:longterm} is bounded by  
\begin{align}\label{DEthird}
 4\left|\int_{\cal{M}_\lambda} \langle g-\mathbf{g}, K\sigma_{ij} \rangle_\mathbf{g} dV_g\right|   &\leq 4 K_{\Lambda}(1+C_1e^{-\frac{2}{3}K_{\Lambda}^2\lambda})E(\lambda)^{1/2}||\sigma_{ij}||_{L^2_\mathbf{g}}\\ \nonumber &\leq 4 C_7^{1/2} (1 + D_3 \gamma) K_{\Lambda}^{1/2} (1+C_1 )e^{\frac{1}{6}K_{\Lambda}^2\lambda }E(\lambda)^{1/2},
\end{align}
where we used the bound \eqref{eq:maxbound}, the Cauchy-Schwartz inequality, the bound \eqref{quant_est} and the inequality $||\sigma_{ij}||_{L^2_\mathbf{g}}\leq \left(1+D_3\gamma\right)||\sigma_{ij}||_{L^2_{g}}$ by Lemma~\ref{lemma:metricnormbound}. 

Assuming that $\lambda\geq \lambda_0$ and $||g(\lambda)-\mathbf{g}(\lambda)||_{\mathbf{g}(\lambda)}\leq \gamma$ one can put together \eqref{DEfirst}, \eqref{DEsecond} and \eqref{DEthird} to get the final inequality \eqref{DEfinal} with a suitable constant $C_9$ that can be expressed in terms of the constant that appear \eqref{DEsecond} and \eqref{DEthird}.
\end{proof}

\begin{lemma}\label{E_grow_lemma}
Let $\gamma_0$ be as in Lemma~\ref{lemma:metricnormbound}, and let $0<\gamma\leq\gamma_0$. There exists a {universal} constant $C_{10}<\infty$ with the following significance: let  $\lambda_0>0$ be some time. Define $\mathbf{g}$ and $E$ as in equations \eqref{comp_metric}, \eqref{real_E_def}.  Let $\lambda>\lambda_0$ be such  that for every $\lambda' \in [\lambda_0,\lambda]$ we have that $||g(\lambda')-\mathbf{g}(\lambda')||_{\mathbf{g}(\lambda')}<\gamma$. Then
\begin{equation}\label{E_growth}
E\leq { \frac{C_{10}}{K_\Lambda^3}} e^{-\frac{2}{3}K_{\Lambda}^2\lambda_0}e^{K_{\Lambda}^2\lambda}\ .
\end{equation}
\end{lemma}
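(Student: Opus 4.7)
The plan is to turn the differential inequality \eqref{DEfinal} into a linear first-order ODE inequality for $F(\lambda):=\sqrt{E(\lambda)}$, integrate it with an explicit integrating factor, and exploit the initial condition $E(\lambda_0)=0$, which follows immediately from the definition \eqref{comp_metric} of $\mathbf{g}$ (at $\lambda=\lambda_0$ we have $\mathbf{g}(\lambda_0)=g(\lambda_0)$).

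Concretely, writing $a(\lambda):=K_\Lambda^2\bigl(1+\tfrac{3}{2}C_2 e^{-\frac{2}{3}K_\Lambda^2\lambda}\bigr)$ and $b(\lambda):=C_9 K_\Lambda^{1/2}e^{\frac{1}{6}K_\Lambda^2\lambda}$, inequality \eqref{DEfinal} reads $E'\leq a E+b\sqrt{E}$ on the interval $[\lambda_0,\lambda]$ where the a-priori closeness $\|g-\mathbf{g}\|_{\mathbf{g}}<\gamma$ holds. Formally dividing by $2\sqrt{E}$ gives $F'\leq \tfrac{a}{2}F+\tfrac{b}{2}$; to handle the singularity at $\lambda_0$ (where $F=0$), I would first apply the argument to the regularized quantity $F_\delta:=\sqrt{E+\delta^2}$, which satisfies $F_\delta'\leq \tfrac{a}{2}F_\delta+\tfrac{b}{2}$ with $F_\delta(\lambda_0)=\delta$, and send $\delta\to 0$ at the end.

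Next, I would introduce the integrating factor $e^{-A(\lambda)}$ with $A(\lambda):=\tfrac{1}{2}\int_{\lambda_0}^{\lambda}a(s)\,ds$. The crucial observation is that the ``extra'' piece of $a$ coming from $C_2 e^{-\frac{2}{3}K_\Lambda^2 s}$ integrates to a universal bounded constant (it decays), so that
\begin{equation}
\tfrac{K_\Lambda^2}{2}(\lambda-\lambda_0)\;\leq\;A(\lambda)\;\leq\;\tfrac{K_\Lambda^2}{2}(\lambda-\lambda_0)+\tfrac{9}{8}C_2.
\end{equation}
Multiplying through and integrating yields
\begin{equation}
F(\lambda)\;\leq\;e^{A(\lambda)}\int_{\lambda_0}^{\lambda}\tfrac{b(s)}{2}\,e^{-A(s)}\,ds.
\end{equation}
Using the lower bound on $A$ inside the integrand and the upper bound on $A(\lambda)$ in the prefactor, the integrand is bounded by a constant multiple of $K_\Lambda^{1/2}e^{\frac{K_\Lambda^2\lambda_0}{2}}e^{-\frac{1}{3}K_\Lambda^2 s}$, which integrates to $O(K_\Lambda^{-3/2})e^{\frac{1}{6}K_\Lambda^2\lambda_0}$; the prefactor contributes $e^{\frac{K_\Lambda^2}{2}(\lambda-\lambda_0)}$ up to a universal constant. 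Squaring $F$ and combining exponents yields exactly $E(\lambda)\leq C_{10}K_\Lambda^{-3}e^{-\frac{2}{3}K_\Lambda^2\lambda_0}e^{K_\Lambda^2\lambda}$, as desired.

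The one potential obstacle is the degeneracy at $\lambda_0$, but this is purely cosmetic and is dispatched by the $\delta$-regularization above: the linear inequality for $F_\delta$ has no singular coefficients, Gr\"onwall-type integration is straightforward, and all the constants obtained are independent of $\delta$ so the passage $\delta\to 0$ preserves the bound. The crucial fact that makes the whole calculation work is the balance between the rate $K_\Lambda^2/2$ coming from the linear term and the growth rate $K_\Lambda^2/6$ of $b$: the difference is strictly positive, so the integral of the source converges at $+\infty$ and gets evaluated at the lower endpoint $\lambda_0$, producing precisely the factor $e^{-\frac{2}{3}K_\Lambda^2\lambda_0}$ that is needed to match the statement.
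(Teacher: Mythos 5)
Your proof is correct, and it takes a genuinely different route from the paper's. The paper works with the rescaled quantity $\tilde{E}(\lambda)=e^{\frac{2}{3}K_\Lambda^2\lambda_0}e^{-K_\Lambda^2\lambda}E(\lambda)$ and then handles the troublesome term $\tilde E^{1/2}$ by an AM--GM step ($\sqrt{ab}\le(a+b)/2$), which converts the inequality into a purely linear one for $\tilde E$ plus a constant shift; this shifted quantity is then integrated via logarithms. You instead treat the inequality as a Bernoulli-type ODE and make the natural substitution $F=\sqrt{E}$, which linearizes it directly, then integrate with an explicit integrating factor $e^{-A(\lambda)}$ built from $a(\lambda)$. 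Your approach is arguably more systematic (the Bernoulli reduction is the canonical move for $E'\le aE+b\sqrt{E}$), and it makes the dependence on $\lambda_0$ very transparent: the balance between the linear rate $K_\Lambda^2/2$ and the source growth $K_\Lambda^2/6$ is exactly what produces the $e^{-\frac{1}{3}K_\Lambda^2\lambda_0}$ factor in $F$ (hence $e^{-\frac{2}{3}K_\Lambda^2\lambda_0}$ in $E$). The trade-off is the small regularization you need at $\lambda=\lambda_0$ where $E=0$ and $\sqrt{E}$ is not differentiable; your $F_\delta=\sqrt{E+\delta^2}$ trick handles this cleanly and uniformly in $\delta$, whereas the paper's AM--GM trick sidesteps the singularity altogether at the cost of a less transparent constant. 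Both yield the same exponential structure with a universal $C_{10}$.
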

\begin{proof}
Making the substitution $\tilde{E}(\lambda)=e^{\frac{2}{3}K_{\Lambda}^2\lambda_0}e^{-K_{\Lambda}^2\lambda}E(\lambda)$, the inequality \eqref{DEfinal} becomes
\begin{align}\label{E_upper}
&\tilde E'(\lambda) \leq \frac32 C_2 K_\Lambda^2 e^{-\frac{2}{3}K_\Lambda^2 \lambda} \tilde E(\lambda)+ C_9  K_\Lambda^{1/2} e^{-\frac{1}{3}K_{\Lambda}^2\lambda}e^{\frac{1}{3}K_{\Lambda}^2\lambda_0}\tilde E(\lambda)^{1/2} \\ \nonumber
&\qquad\leq { C_{10}' K_\Lambda^2 e^{-\frac{1}{3}K_{\Lambda}^2\lambda}e^{\frac{1}{3}K_{\Lambda}^2\lambda_0}\left(\tilde E(\lambda)+\frac{C_9}{2 C_{10}'K_\Lambda^3}\right)} \ ,
\end{align}
{where for simplicity in the second step we assumed $C_9 >0$ (\footnote{If $C_9 =0$, the first inequality of eq.~\eqref{E_upper} implies $\tilde E(\lambda) =0$ so that the Lemma holds with $C_{10}=0$.}) and we used that $\sqrt{a b} \le (a+b)/2$, with $a>0, b>0$, and where $C_{10}' :=\frac32 C_2 +\frac{1}{2}C_9$.}

Thus, 
\begin{equation}
\log\left(\tilde{E}+\frac{C_9 }{2C_{10}'K_\Lambda^3}\right)' \leq  K_\Lambda^2 C'_{10} e^{-\frac{1}{3}K_{\Lambda}^2\lambda}e^{\frac{1}{3}K_{\Lambda}^2\lambda_0}\ ,
\end{equation}
which, together with $\tilde{E}(\lambda_0)=0$,  integrates to {
\begin{equation}
\log\left(\tilde{E}+\frac{C_9 }{2 C_{10}'K_\Lambda^3}\right) \leq 3C'_{10} -\log\left(2 C_{10}' K_\Lambda^3/C_9\right)
\end{equation}}
for all $\lambda$. Thus
\begin{equation}
\tilde{E}(\lambda) \leq {\frac{C_{10}}{K_\Lambda^3}}:={ \frac{{C_9}e^{3C'_{10}}}{2 C_{10}' K_\Lambda^3}}
\end{equation}
which is equivalent to \eqref{E_growth}.

\end{proof}

\subsection{Unconditional pointwise closeness to {exponentially expanding} slices}\label{uncond_point} 
In this Section we put together the results on the $z$-dependence of the spatial metric obtained in Section \ref{sec:zdep} with the results on the $L^2$-closeness of Section \ref{sec:L2cond} in order to prove the pointwise convergence of the metric to the exponentially expanding comparison metric \eqref{comp_metric}. 

\begin{theorem}\label{main_spatial_thm}
There exists some $\lambda_{\ast}<\infty$ and a universal constant $C_{11}<\infty$ such that for every $\lambda_0>\lambda_{\ast}$, defining, as before,  $\mathbf{g}(\lambda)$ as in  \eqref{comp_metric},
we have
\begin{equation}\label{finalprox}
||g(\lambda)-\mathbf{g}(\lambda)||_{ \mathbf{g}} \leq C_{11}e^{-\frac{1}{6}K_{\Lambda}^2\lambda_0}\ ,
\end{equation}
pointwise for every $\lambda>\lambda_0$.
\end{theorem}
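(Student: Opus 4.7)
The plan is to combine the (conditional) $L^2$-bound of Lemma~\ref{E_grow_lemma} with the $z$-slow-variation results of Section~\ref{sec:zdep}, using a continuity/bootstrap argument to remove the a priori smallness assumption on $\|g-\mathbf{g}\|_{\mathbf{g}}$.

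\textbf{Step 1 (bootstrap setup).} For $\lambda_0>0$ to be chosen large, define
\[
\bar\lambda := \sup\Big\{\,\lambda\ge\lambda_0 : \sup_{\lambda'\in[\lambda_0,\lambda]} \|g(\lambda')-\mathbf{g}(\lambda')\|_{\mathbf{g}(\lambda')} \le \gamma_0\,\Big\},
\]
with $\gamma_0$ as in Lemma~\ref{lemma:metricnormbound}. Since $g(\lambda_0)=\mathbf{g}(\lambda_0)$, the set is nonempty. We will show that for $\lambda_0>\lambda_\ast$, the pointwise bound \eqref{finalprox} holds on $[\lambda_0,\bar\lambda)$ with a constant strictly smaller than $\gamma_0$, so $\bar\lambda=\infty$ by continuity.

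\textbf{Step 2 ($L^2$-bound on the bootstrap interval).} On $[\lambda_0,\bar\lambda)$, Lemma~\ref{E_grow_lemma} applies and, together with the volume bound \eqref{vol_est1}, yields
\[
\frac{E(\lambda)}{V(\lambda)} \;\le\; \frac{2C_{10}}{K_\Lambda^3\,V(0)}\,e^{-\tfrac{2}{3}K_\Lambda^2\lambda_0}.
\]

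\textbf{Step 3 ($L^2$-to-pointwise via symmetry and slow $z$-variation).} By the symmetry assumption, $f(p):=\|g-\mathbf{g}\|_{\mathbf{g}}^2(p)$ depends only on the $z$-coordinate of $p$. Given $\delta\in(0,\gamma_0)$, Section~\ref{sec:zdep} shows that the metric $g(\lambda)$ components vary by a factor at most $(1+\delta)$ over any $z$-strip of width $d^\delta_\lambda$. Running the same analysis at $\lambda_0$ and transporting under the MCF (which preserves $G$-orbits by symmetry + uniqueness), the comparison metric $\mathbf{g}(\lambda)$ exhibits the same $(1+\delta)$-variation over $z$-strips of width comparable to $d^\delta_\lambda$ at $\lambda$. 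A triangle inequality then gives, for any two points $p_1,p_2$ in a common strip,
\[
\big|\sqrt{f(p_2)}-\sqrt{f(p_1)}\big|\;\le\; D\,\delta,
\]
for some universal $D$. Now fix a point $p\in\mathcal{M}_\lambda$ and let $U_p$ be the $z$-strip of width $d^\delta_\lambda$ around $p$. By the bound \eqref{rho_def}, $\mathrm{Vol}(U_p)\ge\rho(\delta)V(\lambda)$ with $\rho(\delta)\gtrsim\delta^2$. Suppose $\sqrt{f(p)}\ge 2D\delta$; then $\sqrt{f(q)}\ge\tfrac12\sqrt{f(p)}$ for every $q\in U_p$, whence
\[
\tfrac{1}{4}f(p)\,\rho(\delta)\,V(\lambda)\;\le\;\int_{U_p}f\,dV_g\;\le\; E(\lambda),
\]
so $f(p)\le 4E(\lambda)/(\rho(\delta)V(\lambda))$. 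In either case,
\[
\sqrt{f(p)} \;\le\; \max\!\Big(2D\delta,\; 2\sqrt{E(\lambda)/(\rho(\delta)V(\lambda))}\Big).
\]

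\textbf{Step 4 (optimization and closing the bootstrap).} Balancing the two terms by choosing $\delta \sim (E/V)^{1/4}$ and using Step~2 yields
\[
\sqrt{f(p)}\;\le\; C\,(E(\lambda)/V(\lambda))^{1/4}\;\le\; C_{11}\,e^{-\tfrac{1}{6}K_\Lambda^2\lambda_0},
\]
which is \eqref{finalprox}. Choose $\lambda_\ast$ so that the right-hand side is $<\gamma_0/2$ for all $\lambda_0>\lambda_\ast$. Then $\|g-\mathbf{g}\|_{\mathbf{g}}<\gamma_0/2$ strictly on $[\lambda_0,\bar\lambda)$, and continuity forces $\bar\lambda=\infty$, completing the proof.

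\textbf{Expected main obstacle.} The delicate step is Step~3, specifically checking that the $(1+\delta)$-control on $z$-variation proved for $g(\lambda)$ in Section~\ref{sec:zdep} really does transfer to $\mathbf{g}(\lambda)$ at the same $z$-scale $d^\delta_\lambda$. This requires carefully matching the $z$-parametrizations at times $\lambda$ and $\lambda_0$ through the MCF (using both the symmetry-induced preservation of orbits and Theorem~\ref{length_growth_lemma} to relate lengths along transverse geodesics), and keeping track of the fact that variations in $g$ and in $\mathbf{g}$ must be combined additively (not multiplicatively) to control $g-\mathbf{g}$. Once this bookkeeping is correctly carried out, the optimization in $\delta$ is what delivers the sharp exponent $\tfrac16$ in the final estimate.
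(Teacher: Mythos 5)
Your proposal is correct and follows essentially the same route as the paper: combine the conditional $L^2$ bound of Lemma~\ref{E_grow_lemma} with the slow $z$-variation estimate~\eqref{error_est} and the strip-volume lower bound $\rho(\delta)$, then close by continuity. The paper picks $\delta$ implicitly via $\delta^2\rho(\delta)\sim E/V$ and argues by contradiction, whereas you set up a bootstrap and optimize $\delta$ explicitly, but both yield $\delta\sim e^{-K_\Lambda^2\lambda_0/6}$; the subtlety you flag in your ``expected main obstacle'' (matching the $z$-scales for $g(\lambda)$ and $\mathbf{g}(\lambda)=e^{\frac23 K_\Lambda^2(\lambda-\lambda_0)}g(\lambda_0)$ under the MCF identification) is precisely what the paper handles by applying~\eqref{error_est} at both flow times $\lambda_0$ and $\lambda$.
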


\begin{proof} 
Let $\delta_\ast=\gamma/4$, and define $\lambda_\ast$ so that
\begin{equation}\label{lambdastar_choice}
2 { \frac{C_{10}}{K_\Lambda^3}} e^{-\frac23 K_{\Lambda}^2\lambda_\ast} =\frac{\delta_\ast^2\rho(\delta_\ast) }{4} V(0)\ ,
\end{equation}
where $C_{10}$ is the constant that appears in eq.~\eqref{E_growth} and $\rho(\delta)$ is in \eqref{rho_def}.
Let $\lambda_0>\lambda_\ast$, and define $\delta$ to be the solution of
\begin{equation}\label{lambda0_choice}
2 { \frac{C_{10}}{K_\Lambda^3}} e^{-\frac23 K_{\Lambda}^2\lambda_0} =\frac{\delta^2\rho(\delta) }{4} V(0)\ .
\end{equation}
Notice that since $\delta$ is a monotonically decreasing function of $\lambda_0$, we automatically have that for $\lambda_0>\lambda_\ast$, $\delta<\gamma/4$.

Eq.~\eqref{E_growth} and \eqref{vol_est1} imply that 

\begin{equation}\label{E_small_vol}
E(\lambda) \leq \frac{\delta^2\rho(\delta) }{4} V(\lambda)\ ,
\end{equation}
as long as 

\begin{equation}\label{real_small_ass}
||g(\lambda')-\mathbf{g}(\lambda')||_{\mathbf{g}(\lambda')}\leq \gamma \;\textrm{ for every\;} \lambda'\in [\lambda_0,\lambda]\ .
\end{equation}
Recall also that for $\lambda'=\lambda_0$, $||g(\lambda')-\mathbf{g}(\lambda')||_{\mathbf{g}(\lambda')}=0$, and note moreover that this norm is a continuous function of $\lambda'$.

Now, suppose for the sake of contradiction that there exists some $\lambda''>\lambda_0$ such that
\be
\max\left(||g(\lambda'')-\mathbf{g}(\lambda'')||_{\mathbf{g}(\lambda'')}\right)\geq  4\delta. 
\ee  
 Let $\lambda$ be the infimum of the $\lambda''$, and notice that $\lambda>\lambda_0$. Let $z_{\rm bad}$ be a point where this maximum is obtained at $\lambda$, so that at $z_{\rm bad}$ the following holds
\be
||g(\lambda)-\mathbf{g}(\lambda)||_{\mathbf{g}(\lambda)}= 4\delta\ . 
\ee  
In particular, we have that $||g(\lambda)-\mathbf{g}(\lambda)||_{\mathbf{g}(\lambda)}\leq 4\delta < \gamma$ for every $z$. Note that \eqref{real_small_ass} is satisfied up to time $\lambda$, so in particular, \eqref{E_small_vol} is valid at time $\lambda$. 
Now, applying $\eqref{error_est}$ twice, at flow times $\lambda_0$ and $\lambda$, starting at $z_{\mathrm{bad}}$, we see that $||g(\lambda)-\mathbf{g}(\lambda)||_{\mathbf{g}(\lambda)}\geq \delta$ for every $z$ in ${\cal M}_{\lambda}^{z_{\mathrm{bad}}, \delta}$. Thus
 
 \begin{equation}
 E(\lambda) \geq \int \displaylimits_{{\cal M}_{\lambda}^{z_{\mathrm{bad}},\delta}} ||g(\lambda)-\mathbf{g}(\lambda)||^2_{\mathbf{g}(\lambda)} \geq  \delta^2 \mathrm{Vol}({\cal M}_{\lambda}^{z_{\mathrm{bad}},\delta})\geq \delta^2\rho( \delta)V(\lambda)\ ,
 \end{equation}
 which contradicts \eqref{E_small_vol}.  Therefore, there cannot exist such a $\lambda$, and so $||g(\lambda)-\mathbf{g}(\lambda)||_{\mathbf{g}(\lambda)}<4\delta$ always. The dependence of $\delta$ on $\lambda_0$ can be read from \eqref{lambda0_choice} and \eqref{rho_def} by Taylor expansion 
\begin{equation}\label{depend_delta_lambda}
\delta \sim e^{-\frac16 K_{\Lambda}^2\lambda_0} \;.
\end{equation}
This gives the final result \eqref{finalprox}.
\end{proof}
{In the following we are going to often assume that the distance of eq.~\eqref{finalprox} is small, say $< \frac{1}{100}$, by imposing $\lambda_0 > 6 K_\Lambda^{-2} \log(100 C_{11})$.}

{
\subsection{Closeness to de Sitter slices over exponentially expanding balls}\label{prox_glob_flat_sec}
{In this section we want to prove the pointwise convergence over expanding balls of the spatial metric to the spatial metric of de Sitter space in flat slicing:
\begin{equation}\label{slice_ds_def}
\mathbf{g}_{\mathrm{dS}}(\lambda) :=e^{\frac{2}{3}K_{\Lambda}^2(\lambda-\lambda_0)} g_{\mathrm{Euc}}\ ,
\end{equation}
with $g_{\mathrm{Euc}}$ the flat Eucliden 3d metric. {The idea is to prove that the spatial metric $g(\lambda_0)$ for large $\lambda_0$ becomes approximately flat since the surface orbits have larger and larger area and at the same time the metric becomes independent of the orthogonal direction $z$. The exponential growth factor in $\lambda$ is then fixed using the results in the previous section.}
}

Let $\rho_1$ be the shortest length of a non-contractible loop in one of the surface orbits contained in $\mathcal{M}_{\lambda_{\ast}}$ (set $\rho_1=1{ /K_\Lambda}$ if $\Sigma$ is a sphere). By Theorem \ref{main_spatial_thm} {at $\lambda_*$ the metric $g$ and $\mathbf{g}$ differ by $< \gamma/4 < 1/4$},}  so that every curve in a surface orbit in $\mathcal{M}_{\lambda_0}$, $\lambda_0 > \lambda_*$, of length ${<} \frac{1}{2}\rho_1e^{\frac{1}{3}K_{\Lambda}^2(\lambda_0-\lambda_{\ast})}$, is contractible. Recalling that $\chi$ is the Euler characteristic of the orbit surfaces, from~(\ref{area_est1}) we further know that each orbit surface has sectional curvature 
\begin{equation}\label{eq:sec1}
|\mathrm{Sec}| \leq C_{\mathrm{Sec}}^2{K_{\Lambda}^2} e^{-\frac{2}{3}K_{\Lambda}^2\lambda_0}\ ,
\end{equation}  
where
\begin{equation}
C_{\mathrm{Sec}}:=\sqrt{\frac{4\pi |\chi|}{S_{\min}(0)K_{\Lambda}^2}}\ .
\end{equation}

{ If we consider the standard forms of the metric in polar coordinates for 2-sphere, 2-plane and 2-hyperboloid (for the sphere, for instance, this is $dr^2+{\frac{\sin^2(\mathrm{Sec}^{1/2}\cdot r)}{\mathrm{Sec}}}d\theta^2$), it is useful to notice that, for $K_{\Lambda} r \leq 2e^{\frac{1}{12}K_{\Lambda}^2\lambda_0}$ and choosing $\lambda_0\geq\frac{4}{K_\Lambda^2}\log(2 C_{\mathrm{Sec}})$, we can use that, for $0\leq t\leq 1$, $\left|\frac{\sin^2(t)}{t^2}-1\right|\leq  t^2$, and $\left|\frac{\sinh^2(t)}{t^2}-1\right|\leq t^2$, to write:} 
\begin{equation}\label{coeff_com}
\left|\frac{\sin^2\left(C_{\mathrm{Sec}}e^{-\frac{1}{3}K_{\Lambda}^2\lambda_0}K_{\Lambda}r\right)}{C_{\mathrm{Sec}}^2e^{-\frac{2}{3}K_{\Lambda}^2\lambda_0}K_{\Lambda}^2r^2}-1\right| {\leq}  {4} C_{\mathrm{Sec}}^2 e^{-\frac{1}{{{2}}}K_{\Lambda}^2\lambda_0}\ .  
\end{equation} 
The same bound holds if we replace $\sin$ by $\sinh$. {It is now useful to impose $\lambda_0 \geq \lambda'_{\ast\ast}$, where $\lambda'_{\ast\ast}$ is given by}
\begin{equation}\label{lambda_ast_ast_def}
\lambda_{\ast\ast}'=\max\left({\lambda_*}, { \frac{4}{K_\Lambda^2}\log(2 C_{\mathrm{Sec}})},\frac{4\log\left(\frac{{ 20}}{{ 3}\rho_1K_{\Lambda}}\right)+\frac{4}{3}K_{\Lambda}^2\lambda_{\ast}}{K_{\Lambda}^2}\right)\ .
\end{equation}
{The first term on the r.h.s. was imposed above (\ref{eq:sec1}) to set a maximum length for the contractible curves; the second term on the r.h.s. was imposed above~(\ref{coeff_com}) to ensure that the argument of the Sine on the l.h.s of~(\ref{coeff_com}) is at most equal to one; the third condition ensures that
for every $\lambda_0 \geq \lambda'_{\ast\ast}$, and every point $p\in \mathcal{M}_{\lambda_0}$, if $p\in \Sigma$ for some orbit surface $\Sigma$, then $\exp_p^{\Sigma}$ maps the {2-dimensional} Euclidean ball $B_{ 2}(0,\frac{5}{3K_{\Lambda}}e^{\frac{1}{12}K_{\Lambda}^2\lambda_0})$ diffeomorphically to the intrinsic ball in $\Sigma$,  $B^{\Sigma}(p,\frac{5}{3K_{\Lambda}}e^{\frac{1}{12}K_{\Lambda}^2\lambda_0})$ (in fact, the first term on the r.h.s.~of~(\ref{lambda_ast_ast_def}) ensures that the diameter of this ball is shorter that the shortest non-contractible curve in $\Sigma$).} Moreover, setting 
\begin{equation}
g^{\Sigma}_{\mathrm{Euc}}=\left(\exp^{\Sigma}_{p}\right)_\ast(\langle\cdot , \cdot\rangle)\ ,  
\end{equation}
\eqref{coeff_com} implies that for every tangent vector $W$ to $\Sigma$,
\begin{equation}\label{comp_euc}
 1-{4} C_{\mathrm{Sec}}^2 e^{-\frac{1}{{{2}}}K_{\Lambda}^2\lambda_0}\leq \frac{g(W,W)}{g^{\Sigma}_{\mathrm{Euc}}(W,W)} \leq  1+{4} C_{\mathrm{Sec}}^2 e^{-\frac{1}{{{2}}}K_{\Lambda}^2\lambda_0}\ .
\end{equation}
{Let us set 
\be\label{deltachoice}
\delta={\frac{1}{10}}e^{-\frac{1}{12}K_{\Lambda}^2\lambda_0} \ .
\ee
 Then for $\lambda_0> \frac{12}{5{K_\Lambda^2}}\log\left({ 40} C_{\mathrm{Sec}}^2\right)$, we can ensure that ${4} C_{\mathrm{Sec}}^2 e^{-\frac{1}{{{2}}}K_{\Lambda}^2\lambda_0}<\delta$, so that:
\begin{equation}\label{comp_euc}
 1-\delta\leq \frac{g(W,W)}{g^{\Sigma}_{\mathrm{Euc}}(W,W)} \leq  1+\delta\ .
\end{equation}
} Now, let $g$ be the true metric on $\mathcal{M}_{\Sigma}$ and   $g_{\mathrm{Euc}}$ be the Euclidean product metric $g_{\mathrm{Euc}}:=dz^2+g^{\Sigma}_{\mathrm{Euc}}$. Then \eqref{comp_euc}, \eqref{error_est}, \eqref{d_delta_def} and \eqref{lambda_ast_ast_def} imply that $g$ and $g_{\mathrm{Euc}}$ are a factor of $(1+\delta)$ from one another over an interval (in the $z$-direction) of length $d_{\lambda_0}^{\frac{\delta}{2}}$. { Therefore, taking
\begin{equation}\label{eq:lambda''}
\lambda''_{\ast\ast}=\max\left(\lambda'_{\ast\ast},{\frac{12}{5{K_\Lambda^2}}\log\left({40}C_{\mathrm{Sec}}^2\right),} {\frac{12}{5 K_\Lambda^2}\log\left({64 \cdot \frac{5}{3}}\cdot 36\cdot 10^2 \sqrt{\frac{C_8}{3}}\right)}\right)\ ,
\end{equation}
we get that for every $\lambda_0 {>} \lambda_{\ast\ast}''$, for every tangent vector $W\in T_{q}{\mathcal{M}}_{\lambda_0}$  at a point $q\in{\cal M}_{\lambda_0}\cap B^{(\mathcal{M}_{\lambda},g(\lambda))}(p,\frac{5}{3K_{\Lambda}}e^{\frac{1}{12}K_{\Lambda}^2\lambda_0})$, we have
\begin{equation}\label{comp_euc_3}
1 -\delta \leq \frac{g|_q(W,W)}{g_{\mathrm{Euc}}|_q(W,W)} \leq { 1+\delta} \ .
\end{equation}
{The second factor on the r.h.s. in (\ref{eq:lambda''}) was imposed just above (\ref{comp_euc});  the last factor in (\ref{eq:lambda''}) comes from imposing that the distance $d_{\lambda_0}^{\frac{\delta}{2}}$ in \eqref{d_delta_def} is larger than the radius of the ball above: $\frac{5}{3K_{\Lambda}}e^{\frac{1}{12}K_{\Lambda}^2\lambda_0}$.}

We can now prove the convergence to the de Sitter metric \eqref{slice_ds_def}. We define
\begin{equation}
\lambda_{\ast\ast}=\max\left(\lambda''_{\ast\ast},\frac{12\log(10 C_{11})}{K_{\Lambda}^2}\right)\ ,
\end{equation}}
(the second condition guarantees that the error of Theorem~\ref{main_spatial_thm}, $C_{11} \exp(-1/6 \cdot K_\Lambda^2 \lambda_0)$ is smaller than the $\delta$ defined in eq.~\eqref{deltachoice}) we have
\begin{theorem}\label{main_spatial_thm_2}
For every $\lambda_0>\lambda_{\ast\ast}$, we have
\begin{equation}\label{finalprox_2}
||g(\lambda)-\mathbf{g}_{\mathrm{dS}}(\lambda)||_{g(\lambda)} {<} { 16} e^{-\frac{1}{12}K_{\Lambda}^2\lambda_0}\ ,
\end{equation}
pointwise for every $\lambda>\lambda_0$ on $B^{(\mathcal{M}_{\lambda},g(\lambda))}\left(p_{\lambda}, \frac{1}{K_{\Lambda}}e^{\frac{1}{12}K_{\Lambda}^2\lambda_0}\cdot e^{\frac{1}{3}K_{\Lambda}^2(\lambda-\lambda_0)}\right)$.  Here $p_{\lambda}$ results from following $p$ along the flow.
\end{theorem}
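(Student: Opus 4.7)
The proof plan is to invoke the triangle inequality against the intermediate comparison metric $\mathbf{g}(\lambda) = e^{\frac{2}{3}K_\Lambda^2(\lambda-\lambda_0)} g(\lambda_0)$ from \eqref{comp_metric}, writing pointwise
\begin{equation}
\|g(\lambda)-\mathbf{g}_{\mathrm{dS}}(\lambda)\|_{g(\lambda)} \leq \|g(\lambda)-\mathbf{g}(\lambda)\|_{g(\lambda)} + \|\mathbf{g}(\lambda)-\mathbf{g}_{\mathrm{dS}}(\lambda)\|_{g(\lambda)}\ .
\end{equation}
The first summand is an $L^{\infty}$-in-$\lambda$ estimate already provided (in $\mathbf{g}$-norm) by Theorem \ref{main_spatial_thm}. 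The second summand is, morally, a frozen-time object: both $\mathbf{g}$ and $\mathbf{g}_{\mathrm{dS}}$ differ from one another by the same uniform $\lambda$-dependent conformal factor, so their difference is controlled entirely by the initial-slice deviation $g(\lambda_0)-g_{\mathrm{Euc}}$, which is bounded by the near-Euclidean pointwise estimate \eqref{comp_euc_3}.

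For the first summand, I would convert the $\mathbf{g}$-norm bound of Theorem \ref{main_spatial_thm} to a $g(\lambda)$-norm bound via Lemma \ref{lemma:metricnormbound}; the hypothesis of that lemma holds by Theorem \ref{main_spatial_thm} itself once $\lambda_0$ is large, and the conversion loses only a bounded factor. The choice $\lambda_{\ast\ast} \geq 12 \log(10 C_{11})/K_\Lambda^2$ is then designed to make this summand strictly smaller than $\tfrac{1}{5}e^{-\frac{1}{12}K_\Lambda^2\lambda_0}$.

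For the second summand, I would use $\mathbf{g}(\lambda)-\mathbf{g}_{\mathrm{dS}}(\lambda) = e^{\frac{2}{3}K_\Lambda^2(\lambda-\lambda_0)}\bigl(g(\lambda_0)-g_{\mathrm{Euc}}\bigr)$, where $g(\lambda_0)$ and $g_{\mathrm{Euc}}$ are evaluated at the initial-slice point corresponding, via the MCF identification, to the given point of $\mathcal{M}_\lambda$. The key observation is that the norm of a $(0,2)$-tensor scales as the inverse of a uniform conformal rescaling, which yields the identity $\|\mathbf{g}(\lambda)-\mathbf{g}_{\mathrm{dS}}(\lambda)\|_{\mathbf{g}(\lambda)} = \|g(\lambda_0)-g_{\mathrm{Euc}}\|_{g(\lambda_0)}$. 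The bound \eqref{comp_euc_3} controls the right-hand side by $O(\delta)$ with $\delta=\tfrac{1}{10}e^{-\frac{1}{12}K_\Lambda^2\lambda_0}$; converting to the $g(\lambda)$-norm with Lemma \ref{lemma:metricnormbound} again loses only a bounded factor.

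The remaining task is to verify the ball correspondence: the asserted $g(\lambda)$-ball of radius $\tfrac{1}{K_\Lambda}e^{\frac{1}{12}K_\Lambda^2\lambda_0}e^{\frac{1}{3}K_\Lambda^2(\lambda-\lambda_0)}$ around $p_\lambda$ must pull back, under the MCF point identification, into the $g(\lambda_0)$-ball $B\bigl(p,\tfrac{5}{3K_\Lambda}e^{\frac{1}{12}K_\Lambda^2\lambda_0}\bigr)$ on which \eqref{comp_euc_3} is known. Distances in $\mathbf{g}(\lambda)$ are exactly $e^{\frac{1}{3}K_\Lambda^2(\lambda-\lambda_0)}$ times distances in $g(\lambda_0)$, and Theorem \ref{main_spatial_thm} ensures that $g(\lambda)$-distances differ from $\mathbf{g}(\lambda)$-distances by no more than a factor $(1+O(e^{-\frac{1}{6}K_\Lambda^2\lambda_0}))^{1/2}$, so the ratio $1$ versus $5/3$ in the two radii leaves plenty of slack. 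The main obstacle I foresee is not conceptual but combinatorial: one must track constants carefully through the two norm conversions, the scaling identity, and the asymptotic suppression factor of Theorem \ref{main_spatial_thm}, and verify that the cumulative constant in front of $e^{-\frac{1}{12}K_\Lambda^2\lambda_0}$ is below $16$. This is exactly what the explicit choice of $\lambda_{\ast\ast}$ is designed to guarantee.
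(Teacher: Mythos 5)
Your proposal matches the paper's proof essentially step for step: triangle inequality through the intermediate comparison metric $\mathbf{g}$, the scaling identity that makes $\|\mathbf{g}(\lambda)-\mathbf{g}_{\mathrm{dS}}(\lambda)\|_{\mathbf{g}(\lambda)}$ time-independent and equal to the $\lambda_0$-slice deviation bounded by \eqref{comp_euc_3}, norm conversions via Lemma~\ref{lemma:metricnormbound} (Theorem~\ref{main_spatial_thm}), and pulling the ball back to time $\lambda_0$ using the conformal scaling and the $g$--$\mathbf{g}$ bi-Lipschitz comparison. This is the same argument the paper gives, including the role of $\lambda_{\ast\ast}$ in dominating the Theorem~\ref{main_spatial_thm} error by $\delta$ and the slack in the $1$ versus $5/3$ radii.
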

{
\begin{proof} 
Remember that at $\lambda_0$, $\mathbf{g}(\lambda_0)=g(\lambda_0)$. Therefore,  (\ref{comp_euc_3}) gives (suboptimally as usual)
\be\label{eq:modeldslambda0}
||{\mathbf{g}(\lambda_0)}-\mathbf{g}_{\mathrm{dS}}(\lambda_0)||_{\mathbf{g}(\lambda_0)}{ <} 4 \delta\
\ee
on $B^{(\mathcal{M}_{\lambda_0},\mathbf{g}(\lambda_0))}\left(p_{\lambda}, \frac{1}{K_{\Lambda}}e^{\frac{1}{12}K_{\Lambda}^2\lambda_0}\right)$. (Notice that we took a ball of radius smaller than above.) Since both metrics evolve with time with the same rescaling, $\mathbf{g}(\lambda)=e^{\frac{2}{3}K_\Lambda^2(\lambda-\lambda_0)}\mathbf{g}(\lambda_0)$ and $\mathbf{g}_{\mathrm{dS}}(\lambda)=e^{\frac{2}{3}K_\Lambda^2(\lambda-\lambda_0)}\mathbf{g}_{\mathrm{dS}}(\lambda_0)$, (\ref{eq:modeldslambda0}) is true at all times:
\be\label{eq:modeldslambda}
||{\mathbf{g}(\lambda)}-\mathbf{g}_{\mathrm{dS}}(\lambda)||_{\mathbf{g}(\lambda)}< 4\delta 
\ee
on $B^{(\mathcal{M}_{\lambda_0},\mathbf{g}(\lambda))}\left(p_{\lambda}, \frac{1}{K_{\Lambda}}e^{\frac{1}{12}K_{\Lambda}^2\lambda_0}\cdot e^{\frac{1}{3}K_{\Lambda}^2(\lambda-\lambda_0)}\right)$. But, from Theorem~\ref{main_spatial_thm}, we have that, at all times $\lambda\geq\lambda_0$
\be
||g(\lambda)-{\mathbf{g}}(\lambda)||_{{g}(\lambda)}<2||g(\lambda)-{\mathbf{g}}(\lambda)||_{{\mathbf{g}}(\lambda)}<8 \delta\ .
\ee
Therefore, we have, on $B^{(\mathcal{M}_{\lambda},g(\lambda))}\left(p_{\lambda}, \frac{1}{K_{\Lambda}}e^{\frac{1}{12}K_{\Lambda}^2\lambda_0}\cdot e^{\frac{1}{3}K_{\Lambda}^2(\lambda-\lambda_0)}\right)$ that
\be
||g(\lambda)-{\mathbf{g}}_{\mathrm{dS}}(\lambda)||_{{g}(\lambda)}<2||g(\lambda)-{\mathbf{g}}_{\mathrm{dS}}(\lambda)||_{{\mathbf{g}}(\lambda)}<4||{\mathbf{g}}(\lambda)-{\mathbf{g}}_{\mathrm{dS}}(\lambda)||_{{\mathbf{g}}(\lambda)}<16 \delta\ ,
\ee
as we wished to show.
\end{proof}}}

\section{Space-time closeness}\label{space_time_sec}

We are now ready to show that, asymptotically, the spacetime becomes close to de Sitter space, in the sense that {the length of any future-oriented, timelike or null curve between two spacetime points} approaches the one evaluated  between the same points using the de Sitter metric, once both points are taken at late enough times.

To achieve our purpose, we need to gain some additional control on the extrinsic curvature, which we do first~\footnote{ By making stronger assumptions on the geometry of $M^{(3+1)}$, it is possible to obtain a stronger conclusion on this aspect, which however does not alter the physical equivalence to de Sitter space that we discuss in the last section. It will be discussed in a future publication~\cite{inprogressshort}.}. Let us start by {noticing that \eqref{eq:MCF} and  SEC imply:}
 \be\label{K_ev}
\frac{d K}{d \lambda} -\Delta K + \frac{1}{3} K\left(K^2-K_\Lambda^2\right) + \sigma^2 K \leq 0 \;.
\ee
 Let let us now observe that we can bound $ \int_{{\cal M}_{\lambda}} dV\; |\nabla K|^2$ if this is integrated over a finite flow-time interval. Specifically, we have
\begin{lemma}\label{lemma:gradK}
For every $\lambda>0$,
\begin{equation}
\int_{\lambda}^{\lambda+\frac{1}{K_\Lambda^2}}d\lambda' \int_{{\cal M}_{\lambda'}} dV\; |\nabla K|^2 \leq \frac{C'_{12}}{K_\Lambda} \;e^{\frac{1}{3}K_{\Lambda}^2\lambda}.
\end{equation}
\end{lemma}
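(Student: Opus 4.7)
The plan is to derive a differential inequality for $\int_{\mathcal{M}_\lambda}|\nabla K|^2\,dV$ by testing \eqref{K_ev} against a non-negative multiplier, and then integrate it over the time window $[\lambda,\lambda+1/K_\Lambda^2]$. First I would multiply \eqref{K_ev} by $2K$ (which preserves the inequality since $K>0$), integrate over the compact slice $\mathcal{M}_\lambda$, and use integration by parts to replace $-\int 2K\,\Delta K\,dV$ by $2\int|\nabla K|^2\,dV$. Combining this with the volume-form evolution $\frac{d}{d\lambda}(dV)=K^2\,dV$, which allows one to rewrite $\int 2K\frac{dK}{d\lambda}\,dV = \frac{d}{d\lambda}\int K^2\,dV - \int K^4\,dV$, and dropping the non-positive $-2\int\sigma^2 K^2\,dV$ contribution, leads after rearrangement to
\begin{equation*}
2\int |\nabla K|^2\,dV \;\leq\; -\frac{d}{d\lambda}\int K^2\,dV + \frac{1}{3}\int K^4\,dV + \frac{2}{3}K_\Lambda^2\int K^2\,dV.
\end{equation*}

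The main obstacle is that each term on the right-hand side is individually of order $K_\Lambda^4 V(\lambda)\sim e^{K_\Lambda^2\lambda}$, which far exceeds the target bound of order $K_\Lambda^{-1} e^{\frac{1}{3}K_\Lambda^2\lambda}$. The essential point is that the de Sitter constant mode $K\equiv K_\Lambda$ (with $\frac{dV}{d\lambda}=K_\Lambda^2 V$) is an exact zero of the right-hand side, so a nontrivial algebraic cancellation must take place. To exhibit it, I would set $u:=K^2-K_\Lambda^2$ and substitute $\int K^2\,dV = K_\Lambda^2 V + \int u\,dV$, $\int K^4\,dV = K_\Lambda^4 V + 2K_\Lambda^2\int u\,dV + \int u^2\,dV$, and $\frac{d}{d\lambda}\int K^2\,dV = K_\Lambda^4 V + K_\Lambda^2\int u\,dV + \frac{d}{d\lambda}\int u\,dV$ (using $\frac{dV}{d\lambda}=\int K^2\,dV$). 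The $K_\Lambda^4 V$ terms cancel, as do the $K_\Lambda^2\int u\,dV$ terms, leaving the clean estimate
\begin{equation*}
2\int |\nabla K|^2\,dV \;\leq\; -\frac{d}{d\lambda}\int u\,dV + \frac{1}{3}K_\Lambda^2\int u\,dV + \frac{1}{3}\int u^2\,dV.
\end{equation*}

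The last step is to integrate this inequality over $[\lambda,\lambda+1/K_\Lambda^2]$. The $-\frac{d}{d\lambda}\int u\,dV$ term telescopes into two boundary contributions, each bounded by $|\int u\,dV|\leq \int|K^2-K_\Lambda^2|\,dV \leq C_7 K_\Lambda^{-1} e^{\frac{1}{3}K_\Lambda^2\lambda'}$ via \eqref{quant_est}. The same estimate, together with the factor $1/K_\Lambda^2$ from the window length, controls the time-integrated $\frac{1}{3}K_\Lambda^2 \int u\,dV$ term. For the $\int u^2\,dV$ piece I would use the uniform pointwise bound $|u|=|K^2-K_\Lambda^2|\leq \tilde C\,K_\Lambda^2$ furnished by Theorem~\ref{th:boundonmax} to reduce $\int u^2\,dV$ to $\tilde C K_\Lambda^2\int|u|\,dV$, which is again handled by \eqref{quant_est}. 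All three contributions produce the same exponent $K_\Lambda^{-1}e^{\frac{1}{3}K_\Lambda^2\lambda}$, so summing yields the desired inequality with a universal constant $C'_{12}$ expressed in terms of $C_7$ and $C_1$.
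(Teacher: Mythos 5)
Your proposal is correct and is essentially the same argument as the paper's: the paper computes $\frac{d}{d\lambda}\int(K^2-K_\Lambda^2)\,dV$ directly (so the $K_\Lambda^4 V$-type cancellations you exhibit by substituting $u=K^2-K_\Lambda^2$ happen automatically in the very first line), tests the evolution inequality \eqref{K_ev} against $2K$, integrates by parts, drops the $-2\int\sigma^2K^2$ term, and bounds the remaining $\tfrac13\int K^2(K^2-K_\Lambda^2)\,dV$ (which is precisely your $\tfrac13 K_\Lambda^2\int u + \tfrac13\int u^2$, since $K^2 u = K_\Lambda^2 u + u^2$) via \eqref{quant_est} together with the pointwise bound $K^2\le 2K_\Lambda^2$ from Theorem~\ref{th:boundonmax}. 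The time integration over the window $[\lambda,\lambda+1/K_\Lambda^2]$ and the boundary-term estimates at the end are also identical to the paper's.
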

\begin{proof}
Using, first  \eqref{K_ev},  and then  \eqref{eq:maxbound} {and \eqref{quant_est}}, we compute 
\begin{align}\label{eq:gradKproof}
\frac{d}{d\lambda}&\int_{{\cal M}_{\lambda}}dV\;\left(K^2-K_{\Lambda}^2\right)=\int_{{\cal M}_{\lambda}}dV\; \left((K^2-K_{\Lambda}^2)K^2+ 2K\frac{dK}{d\lambda}\right)  \\ \nonumber
&\leq \int_{{\cal M}_{\lambda}}dV\left[(K^2-K_{\Lambda}^2)K^2+ 2K\left(\Delta K+\frac{1}{3}K(K_{\Lambda}^2-K^2)-\sigma^2K \right)\right]\\ \nonumber
& \leq  \frac{2}{3} C_7\, K_\Lambda\;e^{\frac{1}{3}K_{\Lambda}^2\lambda}-2\int_{{\cal M}_{\lambda}} |\nabla K|^2dV\ ,
\end{align}
where we used that $K^2\leq 2 K_\Lambda^2$ for the flow times that we are considering {(which follows from (\ref{eq:C4bound}) or (\ref{eq:closedbound}) and our redefinition of the zero flow time).}
Integrating this over $[\lambda,\lambda+1/K_\Lambda^2]$ gives the desired result with {$C'_{12}=\left( \left(e^{1/3}-1\right)+ \frac 12\left(e^{1/3}+1\right) \right)C_7$}.
\end{proof}

Observe that Lemma \ref{lemma:gradK} and \eqref{area_est1} imply that 

\begin{equation}\label{z_grad}
\int_{\lambda}^{\lambda+\frac{1}{K_\Lambda^2}}d\lambda' \int_0^{L(\lambda)} dz\; |\nabla K|^2 \leq C_{12}K_{\Lambda} \;e^{-\frac{1}{3}K_{\Lambda}^2\lambda}\ ,
\end{equation}
where $C_{12}=\frac{2}{S_{\min}(0)K^2_\Lambda}C'_{12}$.

Similarly, \eqref{quant_est} and \eqref{area_est1} give

\begin{equation}\label{z_quant_est}
\int_{0}^{L(\lambda)}dz \; |K^2-K_{\Lambda}|^2 \leq C_{13}K_{\Lambda}e^{-\frac{1}{3}K_{\Lambda}^2\lambda}\ ,
\end{equation}
where $C_{13}=\frac{2}{S_{\min}(0)K_{\Lambda}^2}C_7$.

We are now going to show that this result allows us to say that $K$ is pointwise close to $K_\Lambda$ at most of the late-enough flow  times. In fact, \eqref{z_grad} guarantees that, at most flow-times, {$\int_{0}^{L(\lambda)} dz\; |\nabla K|^2 $} is small, but there can still be a small set of flow times where this quantity is badly behaved. For each integer $i\geq 0$ let us therefore identify the set of flow times, $B_i$, within the interval {$[\frac{i}{K_\Lambda^2},\frac{i+1}{K_\Lambda^2})$}  when there is no good gradient bound:
\begin{equation}
B_i:=\{K_\Lambda^2\lambda \in [i,i+1)\;|\;  \int dz\; |\nabla K(\lambda,z)|^2 \geq { C_{12}\,K_\Lambda^{3}} e^{-{ \frac{2}{9}}i}\}\ .
\end{equation}
Because of \eqref{z_grad}, $B_i$ has measure (length) satisfying the estimate
\begin{equation}
|B_i| \leq { \frac{1}{K_\Lambda^2}} e^{-\frac{{ 1}}{9}i}\ . 
\end{equation}
Given an integer $i_0\geq 0$, denote by ${\mathcal{B}_{i_0}=\bigcup_{i={ i_0}}^{\infty}B_i}$, we get that the overall measure of the regions with bad gradient bounds from some flow time $\lambda_0:=\frac{i_0}{K_{\Lambda}^2}$ onward is bounded by an arbitrarily small number as $\lambda_0\to\infty$: 
\begin{equation}\label{bad_est}
{|\mathcal{B}_{i_0}| \leq \frac{1}{K_\Lambda^2} {\frac{e^{1/9}}{e^{1/9}-1}}e^{-\frac{i_0}{9}}}\ . 
\end{equation}
Denoting by $G_i$ the complement of the  $B_i$, {\it {\it i.e.}}~the set of flow times with good gradient bounds: $G_i:={\frac{1}{K_\Lambda^2}}[i,i+1)-B_i$, we have that 
\begin{equation}\label{good_bound}
\int dz\; |\nabla K(\lambda,z)|^2 <   C_{12}\,K_\Lambda^{ 3} e^{-{\frac{2}{9}}i},\;\;\;\;\textrm{for every } \lambda\in G_i\ .
\end{equation}
Let us denote the total set of flow times with good gradient bound as ${\mathcal{G}:=\bigcup_{i={ 0}}^{\infty} G_i}$. 

We now claim the following Lemma about the spatial uniformity of $K$ at times when the gradient bounds are good:
\begin{lemma}\label{lemma:Kuniform}
 It exists a flow time $\lambda_{0,3}$ such that, for $\lambda> \lambda_{0,3}\geq 0$, in those flow times with good gradient bounds, $K$ is close to $K_\Lambda$, {\it {\it i.e.}}:
\begin{equation}\label{lower_K_bd}
{|K -K_{\Lambda}|\leq 2K_{\Lambda} { \sqrt{C_{12}}}\,e^{-\frac{1}{9}i}},\;\;\; \textrm{for every } \lambda \in  G_{ i}\cap\{\lambda|\lambda> \lambda_{0,3}\}\ . 
 \end{equation} 
 \end{lemma}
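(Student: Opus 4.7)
The plan is to combine the transverse $L^2$ bound on $K-K_\Lambda$ coming from \eqref{z_quant_est} with the transverse $L^2$ gradient bound \eqref{good_bound} available on good flow times, using a one-dimensional Sobolev-type interpolation along the closed transverse geodesic of length $L(\lambda)$. Set $f := K - K_\Lambda$. Since MCF preserves $K \geq 0$ we have $(K-K_\Lambda)^2 \leq |K-K_\Lambda|(K+K_\Lambda) = |K^2-K_\Lambda^2|$, so \eqref{z_quant_est} yields
\begin{equation}
\int_0^{L(\lambda)} f(z)^2\, dz \,\leq\, A(\lambda) := C_{13} K_\Lambda\, e^{-\frac{1}{3} K_\Lambda^2 \lambda},
\end{equation}
while for $\lambda \in G_i$ the gradient bound \eqref{good_bound} gives $\int_0^{L(\lambda)}(f'(z))^2\, dz \leq B(i) := C_{12} K_\Lambda^3 e^{-\frac{2}{9}i}$. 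Theorem~\ref{length_growth_lemma} further guarantees $L(\lambda) \gtrsim L_0\, e^{\frac{1}{3}K_\Lambda^2\lambda}$ once $\lambda$ is past the time chosen to reset to zero.

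The key step is the 1D interpolation. Pick $z_0$ on the closed transverse loop at which $f^2$ attains its minimum; then $f(z_0)^2 \leq A(\lambda)/L(\lambda)$. For any other $z$, the fundamental theorem of calculus gives $f(z)^2 - f(z_0)^2 = 2 \int_{z_0}^{z} f\, f'\, ds$, so by Cauchy--Schwarz
\begin{equation}
f(z)^2 \,\leq\, \frac{A(\lambda)}{L(\lambda)} \,+\, 2\sqrt{A(\lambda)\, B(i)}.
\end{equation}

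To conclude I would compare decay rates on $G_i$. Using $K_\Lambda^2 \lambda \geq i$ for $\lambda \in G_i$, one obtains $A(\lambda)/L(\lambda) \lesssim K_\Lambda^2\, e^{-2i/3}$ and $2\sqrt{A(\lambda) B(i)} \lesssim K_\Lambda^2\, e^{-5i/18}$. Both rates are strictly faster than the target $e^{-2i/9}$, so choosing $\lambda_{0,3}$ large enough for the universal constants to be absorbed yields $f(z)^2 \leq 4\, C_{12}\, K_\Lambda^2\, e^{-2i/9}$ pointwise, i.e.~$|K - K_\Lambda| \leq 2 K_\Lambda \sqrt{C_{12}}\, e^{-i/9}$ for every $\lambda \in G_i$ with $\lambda > \lambda_{0,3}$.

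The main conceptual subtlety (rather than a real obstacle) is that \emph{neither} of the two transverse $L^2$ estimates alone would suffice: a bare gradient estimate $\|f\|_\infty^2 \lesssim L(\lambda) B(i)$ actually \emph{grows} like $e^{i/9}$ because the transverse length is itself expanding, while a bare $L^2$ estimate produces no pointwise control. It is the mixed term $\sqrt{A B}$ arising from the interpolation that decays, precisely because $A$ and $B$ are both small on good flow times.
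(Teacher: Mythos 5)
Your proof is correct and rests on the same two ingredients as the paper's: the transverse $L^2$ bound on $K-K_\Lambda$ coming from \eqref{z_quant_est} and the transverse gradient bound \eqref{good_bound} on good flow times, combined via Cauchy--Schwarz. The only cosmetic difference is that the paper phrases the argument as a contradiction (a bad point would force $|K^2-K_\Lambda^2|$ to stay large over a fixed interval of length $1/K_\Lambda$, contradicting \eqref{z_quant_est}), while you use the direct 1D interpolation $\|f\|_\infty^2 \leq \|f\|_{L^2}^2/L + 2\|f\|_{L^2}\|f'\|_{L^2}$ anchored at the minimum of $f^2$; both mechanisms are equivalent, and your invocation of the length growth $L(\lambda)\gtrsim e^{\frac{1}{3}K_\Lambda^2\lambda}$ is slightly more than needed — the crude bound $L(\lambda)\geq 1/K_\Lambda$ already makes the $A/L$ term subdominant.
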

 \begin{proof}
By Cauchy-Schwartz, 
\begin{align}
&|K(z_2)-K(z_1)| \leq \left| \int_{z_1}^{z_2} |\partial_zK|dz\right| \leq (z_2-z_1)^{1/2}\left(\int dz\, |\nabla K|^2\right)^{1/2}\\\nonumber
&\qquad < (z_2-z_1)^{1/2} { \left(C_{12}\,K_\Lambda^{ 3}\right)^{1/2}} e^{-{\frac{1}{9}}i}\ , 
\end{align}
where in the last inequality, we have used that $\lambda \in G_i$. 
Thus, if there is a point, $z_1$, where \eqref{lower_K_bd} is violated, then 
\begin{equation}
|K - K_{\Lambda}| \geq K_{\Lambda}{ \sqrt{C_{12}}}\,e^{-\frac{1}{9}i}\ ,
\end{equation}
on the interval $[z_1,z_1+1{ /K_\Lambda}]$. {By using that $K+K_\Lambda\geq K_\Lambda$, this gives 
\begin{equation}
\int_{0}^{L(\lambda)}  dz\;  |K_{\Lambda}^2-K^{2}(\lambda,z)| \geq \sqrt{C_{12}}K_{\Lambda}e^{-\frac{1}{9}i}\ ,
\end{equation}
which, together with \eqref{z_quant_est} and the fact that ${K_\Lambda^2}\lambda\in [i,i+1)$ imply
\begin{equation}
\sqrt{C_{12}}K_{\Lambda}e^{-\frac{1}{9}i} \leq  C_{13}K_{\Lambda}e^{-\frac{1}{3}i}\ ,
\end{equation}
yielding the inequality $i\leq \frac{9}{2}\log\left(\frac{C_{13}}{\sqrt{C_{12}}}\right)$. Thus, taking

\be\label{eq:lambdazerothreemin}
\lambda_{0,3}=\frac{9}{2K_\Lambda^2}\log\left(\frac{C_{13}}{\sqrt{C_{12}}}\right)+\frac{1}{K_{\Lambda}^2}\ ,
\ee
there cannot be such $z_1$ for $\lambda>\lambda_{0,3}$, establishing  \eqref{lower_K_bd}.}
\end{proof}

As mentioned, our strategy now is to {study the spacetime metric using the MCF foliation}. Given a point $p\in \mathcal{M}_{\lambda}$, the metric of the four-dimensional spacetime at $p$ is given by
\be\label{eq:metric4fromMCF}
ds_4^2=g^{(4)}_{\mu\nu} dx^\mu dx^\nu=-K^2d\lambda^2+g_{ij} dx^i dx^j\ ,
\ee
where we remind that $g_{ij}$ is the 3-metric of the MCF {hypersurfaces}. This parametrization of $g^{(4)}$ is useful as long as the MCF foliates a large region of spacetime. This is indeed the case, as we are going to show next.

In the subset of $M^{(3+1)}$ that is foliated by the flow we have chosen a time function such that the lapse is $N=1$ (see the discussion below Definition \ref{def:crushing} and \cite{Creminelli:2019pdh}). Let $t_{\min}(\lambda)$ be the smallest value of $t$ in $\mathcal{M}_{\lambda}$. By Lemma \ref{Ham_trick}, $t_{\min}(\lambda)$ is  a locally Lipschitz function, hence differentiable almost everywhere, and at such point of differentiability 
\begin{equation}
\frac{d}{d\lambda}t_{\min}(\lambda)=\frac{\partial t}{\partial \lambda}(x_{\lambda},\lambda)\ ,
\end{equation}   
where $x_{\lambda}\in \mathcal{M}_{\lambda}$ is a point where $t_{\min}(\lambda)$ is attained. Note that, by minimality, $\nabla t \perp T_{x_{\lambda}}\mathcal{M}_{\lambda}$, and so
\begin{equation}
\frac{\partial t}{\partial \lambda}(x_{\lambda},\lambda)=g^{(4)}\left(\nabla t, \frac{dx_{\lambda}}{d\lambda}\right) =g^{(4)}\left(\nabla t, -K(x_{\lambda}) \nabla t \right)=K(x_{\lambda})\ .
\end{equation}
Note that, from (\ref{bad_est}) and (\ref{lower_K_bd}), we can choose $\lambda_{0,4}=\max\left(\lambda_{0,3},\frac{9}{K_\Lambda^2}\log\left(\frac{e^{1/9}}{e^{1/9}-1}\right), \frac{9}{K_\Lambda^2} \log(4\sqrt{C_{12}})\right)+\frac{1}{K_{\Lambda}^2}$, such that, for $\lambda\geq\lambda_{0,4}$, ${|\mathcal{B}_{i_{0,4}}|}\leq 1/K_\Lambda^2$  and, when $\lambda\in {\mathcal{G}}\cap [\lambda_{0,4},+\infty)$, $K\geq K_\Lambda/2$. {Here $i_{0,4}$ is the integer in the interval $[K_{\Lambda}^2\lambda_{0,4}-1,K_{\Lambda}^2\lambda_{0,4})$}. For $\lambda\geq\lambda_{0,4}$, we can therefore write
\bea\label{eq:tmin}
&&t_{\min}(\lambda)=t_{\min}(\lambda_{0,4})+\int_{[\lambda_{0,4},\lambda]} d\lambda' \;K(x_{\lambda'},\lambda')\geq\\ \nonumber
&&\quad\qquad\geq t_{\min}(\lambda_{0,4})+\int_{[\lambda_{0,4},\lambda] \cap {\mathcal{G}}} d\lambda'\;K(x_{\lambda'},\lambda')\geq  t_{\min}(\lambda_{0,4})+\frac{K_\Lambda}{2}\left(\lambda-\lambda_{0,4}-\frac{1}{K_\Lambda^2}\right)\ .
\eea
We therefore conclude that the flow reaches arbitrary large $t$ as $\lambda\to +\infty$, and therefore, since the time function has lapse equal to 1, it foliates arbitrarily large regions of the spacetime. This guarantees that the spacetime metric we constructed from (\ref{eq:metric4fromMCF}) is valid in such regions. 

{Additionally, this implies that $M^{(3+1)}$ has no crushing singularities. Indeed, if there were such a singularity, there exists a $c > c_0$ such that the flow never reaches $S_c$ as in Definition~\ref{def:crushing}.  Choose $c_1>c > c_0\geq 0$ in our time function as defined below  Definition \ref{def:crushing}. Let $p\in S_c$, certainly $t(p)<\infty$. Connecting $p$ to $M_0$ by a timelike curve, $t$ must grow monotonically on this curve and bounded above by $t(p)$. But ${\cal M}_\lambda$ intersects this curve for arbitrarily large $\lambda$'s since the flow does not reach $p$, contradicting that the minimum time on the flow slices grows arbitrarily large, $(\ref{eq:tmin})$. Since we are assuming that $M^{(3+1)}$ has only potential singularities of the crushing kind, this implies that $M^{(3+1)}$ has no singularities, and is therefore future-directed time-like and null geodesically complete.}

Now, let {$\gamma:[\lambda_0,\lambda_1]\rightarrow M^{(3+1)}$ be a smooth curve in $M^{(3+1)}$, with $\gamma(\lambda)\in \mathcal{M}_{\lambda}$, where $\lambda_0\geq \max\left(\lambda_{0,4},\lambda_\ast\right)$. Here $\lambda_{\ast}$ is from Theorem \ref{main_spatial_thm} and $\lambda_{0,4}$ is from the paragraph above. We are interested in comparing the metric in \eqref{eq:metric4fromMCF} with the model metric
\be\label{eq:metric4fromMCFdS}
ds^2_{\mathbf g^{(4)}} = {\mathbf g}^{(4)}_{\mu\nu} dx^\mu dx^\nu=-K_{\Lambda}^2d\lambda^2+\mathbf{g}_{ij} dx^i dx^j\ ,
\ee
where $\mathbf{g}$ is defined by \eqref{comp_metric}.}
We can estimate the difference in length of the curve $\gamma$ as measured with the {actual metric \eqref{eq:metric4fromMCF} and with the reference metric \eqref{eq:metric4fromMCFdS}}. Because of the Lorentzian nature of the spacetime, we will separately bound the difference of the evaluation of the contraction of the tangent vector with the $\lambda$-direction, and with the spatial direction. For the time direction, letting $i_0$ be the integer in the interval $[K_{\Lambda}^2\lambda_0-1,K_{\Lambda}^2\lambda_0)$, we can write 
\begin{align}\label{time_part_diff}
&\int_{{\lambda_0}}^{{\lambda_1}} d\lambda\; {\sqrt{|g^{(4)}_{\lambda\lambda}-{\mathbf g}^{(4)}_{\lambda\lambda}|}} {=} \int_{\lambda_0}^{\lambda_1}d\lambda\; {\sqrt{|K_{\Lambda}^2-K^2(\gamma(\lambda))|}}\\
&=\int_{{\mathcal{B}_{i_0}}\cap [\lambda_0,\lambda_1]} d\lambda\; {\sqrt{|K_{\Lambda}^2-K^2({ \gamma(\lambda)})|}}+\int_{{\mathcal{G}}\cap [\lambda_0,\lambda_1]} d\lambda\; {\sqrt{|K_{\Lambda}^2-K^2({ \gamma(\lambda)})|}} \\
&\leq\label{ineq_1} { K_{\Lambda} } |{ \mathcal{B}_{i_0}}|+\sum_{i={ i_0}}^{\infty}\int_{G_i\cap [\lambda_0,\lambda_1]}d\lambda\; {\sqrt{|K_{\Lambda}^2-K^2({ \gamma(\lambda)})|}} \\
& \leq\label{ineq_2}  {\frac{e^{1/9}}{e^{1/9}-1}{\frac{1}{K_\Lambda}}e^{-\frac{K_{\Lambda}^2}{9}{\lambda_0}}+4\sqrt[4]{C_{12}}{ \frac{1}{K_\Lambda}}\sum_{i={ i_0}}^{\infty}e^{-\frac{{ i}}{18}}  }\ .
\end{align}
Here, at \eqref{ineq_1}, we have used the bound $|K_{\Lambda}^2-K^2| \leq K_{\Lambda}^2$ to bound the first integrand, and, in~\eqref{ineq_2}, we used the estimate~\eqref{bad_est}  to bound $|{\mathcal{B}_{i_0}|}$, and \eqref{lower_K_bd} to bound the second integral.  
Thus
\begin{equation}\label{time_part_diff}
\int_{{\lambda_0}}^{{\lambda_1}} d\lambda\; {\sqrt{|g^{(4)}_{\lambda\lambda}-{\mathbf g}^{(4)}_{\lambda\lambda}|}} \leq \frac{C_{13}}{K_{\Lambda}} e^{-\frac{K_{\Lambda}^2}{18}\lambda_0}
\end{equation}
where $C_{13}={ {\frac{e^{1/18}}{e^{1/18}-1}}}\left(1+4\sqrt[4]{C_{12}}\right)$.

{Considering the integral of the projection of the tangent vector on ${\cal M}_\lambda$, Theorem \ref{main_spatial_thm} directly implies
\begin{align}\label{eq:curveonsurface}
&\int_{\lambda_0}^{\lambda_1}d\lambda\; {\sqrt{|{g}(\dot{\gamma},\dot{\gamma})-\mathbf{g}(\dot{\gamma},\dot{\gamma})|}} \leq  {\sqrt{C_{11}}}e^{-\frac{1}{{ 12}}K_{\Lambda}^2\lambda_0}\int_{\lambda_0}^{\lambda_1}d\lambda\;{\sqrt{\mathbf{g}(\dot{\gamma},\dot{\gamma})}}\ .
\end{align}

We can now obtain an expression of the length discrepancy of such a curve, when computed w.r.t to the true metric and the comparison one. Namely, combining \eqref{time_part_diff} and \eqref{eq:curveonsurface}, we get

\bea\label{len_comp}
&&\left|L^{ds_4^2}[\gamma]-L^{ds^2_{4,{\mathbf g}}}[\gamma]\right| \leq \int_{\lambda_0}^{\lambda_1}d\lambda\; \left|\sqrt{|K^2-g(\dot \gamma,\dot \gamma)|}-\sqrt{|K_\Lambda^2-\mathbf{g}(\dot \gamma,\dot \gamma)|}\right|=\\ \nonumber
&&=\int_{\lambda_0}^{\lambda_1}d\lambda\;\left|\sqrt{|(K^2-K_\Lambda^2)-(g(\dot \gamma,\dot \gamma)-\mathbf{g}(\dot \gamma,\dot \gamma))+(K_\Lambda^2-\mathbf{g}(\dot \gamma,\dot \gamma))|}-\sqrt{|K_\Lambda^2-\mathbf{g}(\dot \gamma,\dot \gamma)|}\right|\leq\\ \nonumber
&&\leq \int_{\lambda_0}^{\lambda_1}d\lambda\;\sqrt{\left|(K^2-K_\Lambda^2)-(g(\dot \gamma,\dot \gamma)-\mathbf{g}(\dot \gamma,\dot \gamma))\right|}\leq\\ \nonumber
&&\leq \int_{\lambda_0}^{\lambda_1}d\lambda\;\left(\sqrt{\left|(K^2-K_\Lambda^2)\right|}+\sqrt{\left|(g(\dot \gamma,\dot \gamma)-\mathbf{g}(\dot \gamma,\dot \gamma))\right|}\right)\leq \\ \nonumber
&&\leq \frac{C_{13}}{K_{\Lambda}} e^{-\frac{1}{18}K_{\Lambda}^2\lambda_0}+  {\sqrt{C_{11}}}e^{-\frac{1}{{ 12}}K_{\Lambda}^2\lambda_0}\int_{\lambda_0}^{\lambda_1}d\lambda\;{\sqrt{\mathbf{g}(\dot{\gamma},\dot{\gamma})}}\ ,
\eea
where for the second and third inequalities  we have used the triangle inequality and the inequality $\sqrt{a+b} \leq \sqrt{a}+\sqrt{b}$ for $a,b\geq 0$.

 \begin{figure}
\begin{center}
\includegraphics[width=10cm,draft=false]{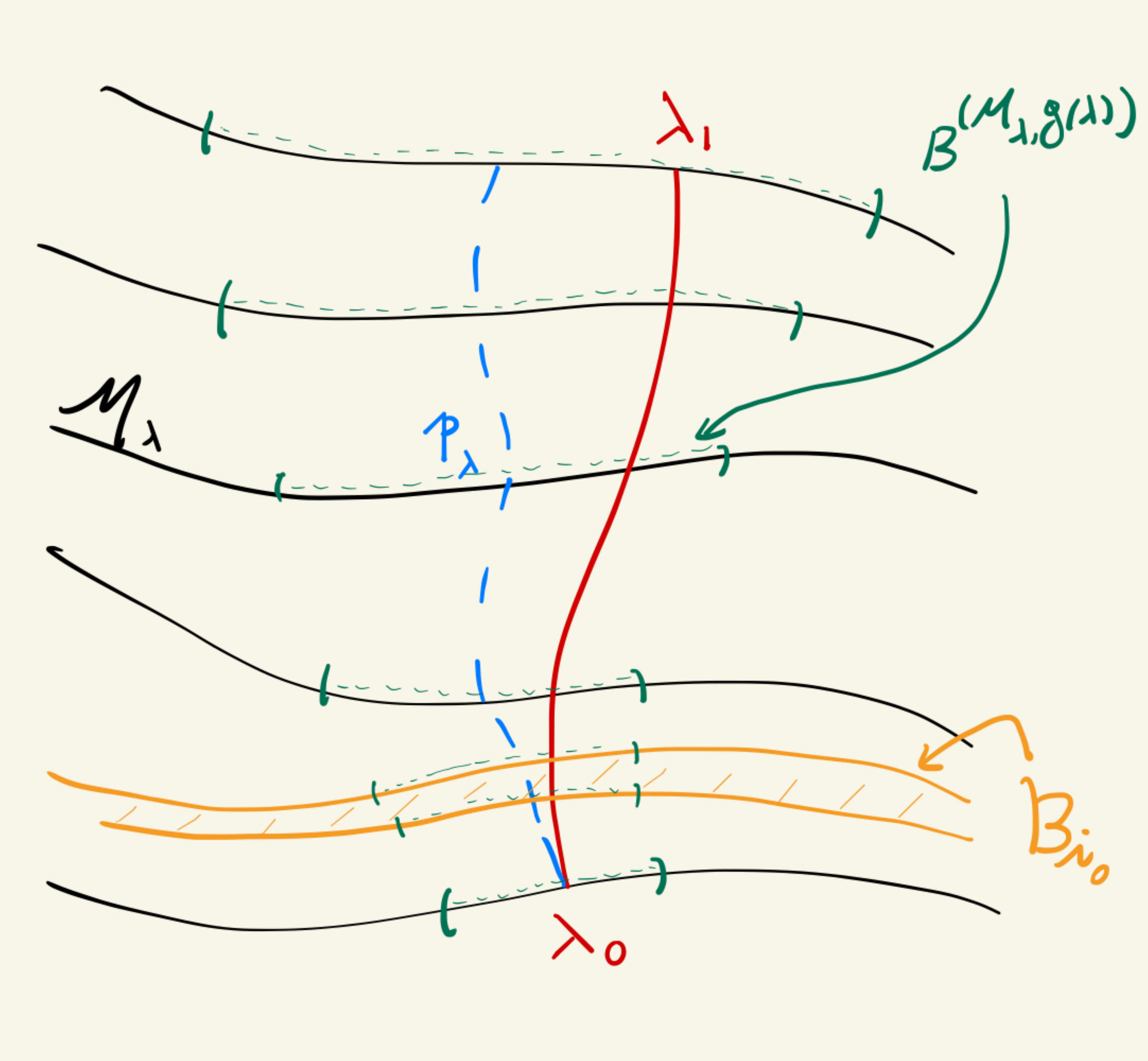}
\end{center}
\caption{\small
{Some of the geometric quantities that are defined in Sec.~\ref{space_time_sec}.} \label{fig:quantities_sec_7} }
\end{figure}

This means that for any curve $\gamma$ such that  $\gamma(\lambda)\in \mathcal{M}_{\lambda}$,  its length w.r.t.~the spacetime metric converges exponentially, as we take $\lambda_0$ larger and larger, to the respective quantity evaluated on the comparison metric ${\mathbf g}^{(4)}$.  { Note further that if such a curve is future-pointing timelike or null \textit{w.r.t the true spacetime metric $ds_4^2$}, then  $g(\dot{\gamma},\dot{\gamma}) \le  2K_{\Lambda}^2$, as below eq.~\eqref{eq:gradKproof}, so, by Theorem \ref{main_spatial_thm} {(provided $\lambda_0$ is large enough, say, as before, $\lambda_0 > \frac{6\log(100C_{11})}{K_{\Lambda}^2})$} 
\be\label{time_like_spat_speed}
\sqrt{\mathbf{g}(\dot{\gamma},\dot{\gamma})} {<} \sqrt{2} \sqrt{g(\dot{\gamma},\dot{\gamma})} \le 2K_{\Lambda}\ .
\ee 
This and \eqref{len_comp} yield
\begin{equation}\label{len_comp_conc1}
\left|L^{ds_4^2}[\gamma]-L^{ds^2_{{\mathbf g}^{(4)}}}[\gamma]\right| \leq \frac{C_{13}}{K_{\Lambda}} e^{-\frac{1}{18}K_{\Lambda}^2\lambda_0}+ { 2}\sqrt{C_{11}}K_\Lambda e^{-\frac{1}{12}K_{\Lambda}^2\lambda_0} \left(\lambda_1-\lambda_0\right)\ .
\end{equation}

In fact,  any future-pointing timelike or null curve {\textit{w.r.t the true spacetime metric $ds_4^2$},  with non-vanishing velocity}, can be re-parametrized so that $\gamma(\lambda)\in \mathcal{M}_{\lambda}$. Such a re-parametrization is possible since {if $u_0$ is a critical point of the function $\lambda(\gamma(u))$ then  $\dot{\gamma}(u_0)\in \mathcal{M}_{\lambda}$}, so the tangent to $\gamma$ is spacelike at such a point.
As the length of curves is invariant under re-parametrization,  the length of all future-pointing timelike or null curves becomes very close to the length computed with the model metric \eqref{eq:metric4fromMCFdS}. \\}

{ 
To compare lengths with the exact de Sitter space, we need {to use Theorem \ref{main_spatial_thm_2}. To do so we need to prove that time-like and null curves remain inside the ball where the theorem applies. This is given by} the following simple lemma {(see Fig.~\ref{fig:quantities_sec_7})}.

\begin{lemma}\label{lemma:gammaball}
Let $\gamma:[\lambda_0,\lambda_1]\rightarrow M^{(3+1)}$ be a smooth curve in $M^{(3+1)}$ where
\be\label{ineqballs}
\lambda_0\geq \max\left(\lambda_{\ast},\frac{6\log(100C_{11})}{K_{\Lambda}^2}\right),
\ee
 and such that $\gamma(\lambda)\in \mathcal{M}_{\lambda}$.  Assume further that $\gamma$ is timelike or null, and denote $p=\gamma(\lambda_0)$ and $p_{\lambda}$ the evolution of $p$ along the flow. Then, for each $\lambda$, 
\begin{equation}
\gamma(\lambda)\in B^{(\mathcal{M}_{\lambda},g(\lambda))}\left(p_{\lambda}, \frac{{ 12}}{K_{\Lambda}}e^{\frac{1}{3}K_{\Lambda}^2(\lambda-\lambda_0)}\right).
\end{equation}
\end{lemma}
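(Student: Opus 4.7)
The plan is to exhibit an explicit spacelike curve $\eta$ in $\mathcal{M}_{\lambda}$ from $p_{\lambda}$ to $\gamma(\lambda)$ whose length is controlled by $\frac{12}{K_{\Lambda}}e^{\frac{1}{3}K_{\Lambda}^2(\lambda-\lambda_0)}$, and to use it to upper-bound the intrinsic distance. The natural choice is to transport $\gamma$ to the slice $\mathcal{M}_{\lambda}$ along the MCF flow lines: parameterize $\gamma$ by $\lambda$ (as justified in the paragraph preceding the lemma) and write $\gamma(s)=(s,x(s))$ in the adapted coordinates $(\lambda,x^i)$ of \eqref{eq:metric4fromMCF}; then set $\eta(s)=(\lambda,x(s))\in\mathcal{M}_{\lambda}$. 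By construction $\eta(\lambda_0)=p_{\lambda}$, $\eta(\lambda)=\gamma(\lambda)$, and $d_{g(\lambda)}(p_{\lambda},\gamma(\lambda))\leq L^{g(\lambda)}[\eta]=\int_{\lambda_0}^{\lambda}\sqrt{g_{ij}(\lambda,x(s))\,\dot x^i\dot x^j}\,ds$.

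The two ingredients for the length estimate are then: (i) a pointwise bound on the spatial speed of $\gamma$ at time $s$, and (ii) a way to convert this bound into an estimate of the integrand evaluated at time $\lambda$. For (i), the causal condition $g^{(4)}(\dot\gamma,\dot\gamma)\leq 0$ combined with \eqref{eq:metric4fromMCF} yields $g_{ij}(s)\dot x^i\dot x^j\leq K(\gamma(s))^2\leq 2K_{\Lambda}^2$ (the last step being the bound on $K$ used just below \eqref{eq:gradKproof}), so $|v(s)|_{g(s)}\leq\sqrt{2}\,K_{\Lambda}$. For (ii), I would pass through the comparison metric $\mathbf{g}$ of \eqref{comp_metric}, which satisfies $\mathbf{g}(\lambda)=e^{\frac{2}{3}K_{\Lambda}^2(\lambda-s)}\mathbf{g}(s)$ as a conformal rescaling on the same spatial coordinate chart. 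Hypothesis \eqref{ineqballs} guarantees $C_{11}e^{-K_{\Lambda}^2\lambda_0/6}\leq \tfrac{1}{100}$, so Theorem~\ref{main_spatial_thm} together with Lemma~\ref{lemma:metricnormbound} supplies a pointwise equivalence $\sqrt{g(\tau)(W,W)}\leq(1+\epsilon)\sqrt{\mathbf{g}(\tau)(W,W)}$ (and its reverse) for every $\tau\geq\lambda_0$, every spatial $W$, with some small $\epsilon$. Chaining this equivalence at $\tau=\lambda$ and $\tau=s$ with the conformal rescaling produces
\begin{equation}
\sqrt{g_{ij}(\lambda,x(s))\,\dot x^i\dot x^j}\leq (1+\epsilon)^2\,e^{\frac{1}{3}K_{\Lambda}^2(\lambda-s)}\,|v(s)|_{g(s)}.
\end{equation}

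Integrating and using the spatial speed bound gives
\begin{equation}
L^{g(\lambda)}[\eta]\leq (1+\epsilon)^2\sqrt{2}\,K_{\Lambda}\int_{\lambda_0}^{\lambda} e^{\frac{1}{3}K_{\Lambda}^2(\lambda-s)}\,ds \leq \frac{3\sqrt{2}\,(1+\epsilon)^2}{K_{\Lambda}}\,e^{\frac{1}{3}K_{\Lambda}^2(\lambda-\lambda_0)},
\end{equation}
and since $3\sqrt{2}<9$ the constant $12$ is attained with comfortable slack for any modest $\epsilon$. Overall the argument is essentially mechanical once the MCF parameterization is identified; the only step requiring genuine care is ensuring that the $(1+\epsilon)$-closeness between $g$ and $\mathbf{g}$ holds \emph{uniformly} along the entire curve $\eta$, which is precisely the content of the pointwise estimate furnished by Theorem~\ref{main_spatial_thm}.
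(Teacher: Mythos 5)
Your proposal is correct and follows essentially the same approach as the paper: both transport $\gamma$ along the MCF flow lines into the target slice, bound the spatial speed by causality together with the bound $K^2\leq 2K_\Lambda^2$, and then exchange the actual metric for the conformally rescaled comparison metric $\mathbf{g}$ via Theorem~\ref{main_spatial_thm} to control the resulting length by an exponential in $\lambda-\lambda_0$. The paper routes the bookkeeping through the bottom slice $\mathcal{M}_{\lambda_0}$ (computing $L^{\mathbf{g}}[\mu]\leq 6/K_\Lambda$ there and then rescaling), whereas you convert $g(\lambda)$ to $g(s)$ directly through $\mathbf{g}$, but this is a cosmetic rearrangement of the same estimates.
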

\begin{proof}
Consider $\mu \in \mathcal{M}_{\lambda_0}$ be the curve obtained by, for each $\lambda$, following $\gamma(\lambda)$ by MCF back to $\mathcal{M}_{\lambda_0}$.  Then 
\begin{equation}\label{eq:firstcurve}
L^{\mathbf{g}}[\mu]=\int_{\lambda_0}^{\lambda_1}d\lambda\;\sqrt{\mathbf{g}(\lambda_0)(\dot{\mu},\dot{\mu})}=\int_{\lambda_0}^{\lambda_1}d\lambda \;e^{-\frac{1}{3}K_{\Lambda}^2(\lambda-\lambda_0)}\,\sqrt{\mathbf{g}(\lambda)(\dot{\gamma},\dot{\gamma})}\leq \frac{{6}}{K_{\Lambda}}\ , 
\end{equation} 
where, in the last inequality, we have used \eqref{time_like_spat_speed} {(which requires both inequalities \eqref{ineqballs})}. Therefore, for every $\lambda'\in [\lambda_0,\lambda_1]$, letting $\mu^{\lambda'}$ be the curve obtained by following $\gamma(\lambda)$ by MCF forward to $\mathcal{M}_{\lambda'}$  for each $\lambda\in [\lambda_0,\lambda']$, we get 
\begin{equation}\label{eq:secondcurve}
L^{\mathbf{g}}[\mu^{\lambda'}]=\int_{\lambda_0}^{\lambda'}d\lambda\sqrt{\mathbf{g}(\lambda')(\dot{\mu}^{\lambda'},\dot{\mu}^{\lambda'})}\leq e^{\frac{1}{3}K_{\Lambda}^2(\lambda'-\lambda_0)}\int_{\lambda_0}^{\lambda_1}d\lambda\sqrt{\mathbf{g}(\lambda_0)(\dot{\mu},\dot{\mu})}\leq  \frac{{ 6}}{K_{\Lambda}}e^{\frac{1}{3}K_{\Lambda}^2(\lambda'-\lambda_0)} \ ,
\end{equation} 
{where in the last passage we used~(\ref{eq:firstcurve}).}
Since on a given flow time slice, $\mathbf{g}$ and $g$ lengths are close to each other (by Theorem \ref{main_spatial_thm}{, which requires both inequalities \eqref{ineqballs}}), we get that   
\be
L^{g}[\mu^{\lambda'}] \leq  \frac{{ 12 }}{K_{\Lambda}}e^{\frac{1}{3}K_{\Lambda}^2(\lambda'-\lambda_0)}.
\ee
Since $\mu^{\lambda'}$ is a curve in $\mathcal{M}_{\lambda'}$ connecting $p_{\lambda'}$ with $\gamma(\lambda')$ of length $\le \frac{{ 12}}{K_{\Lambda}}e^{\frac{1}{3}K_{\Lambda}^2(\lambda'-\lambda_0)}$, the result follows.
\end{proof}
Now, let $\gamma$ and $\lambda_0$ be as in the above lemma, and assume further that
\be
\lambda_0 \geq \max\left(\lambda_{\ast\ast},\lambda_{0,4},\frac{12\log({ 12})}{K_{\Lambda}^2}\right)\ ,
\ee
where $\lambda_{\ast\ast}$ is from Theorem \ref{main_spatial_thm_2}, {$\lambda_0 > \lambda_{0,4}$ guarantees the validity of the metric \eqref{eq:metric4fromMCF} over large spacetime regions, see \eqref{eq:tmin}, and} $\lambda_0\geq \frac{12\log({ 12})}{K_{\Lambda}^2}$ ensures that the balls of Lemma~\ref{lemma:gammaball} are contained in the balls of applicability of Theorem \ref{main_spatial_thm_2}.  Therefore, the  Lemma above and Theorem \ref{main_spatial_thm_2}
 imply that 
 \begin{equation}
 ||g{(\lambda)}-\mathbf{g}_{\mathrm{dS}}{(\lambda)}||_{ g(\lambda)}\leq {16} e^{-\frac{1}{12}K_{\Lambda}^2\lambda_{0}}
 \end{equation}
 along $\gamma$, where $\mathbf{g}_{\mathrm{dS}}$ is given by \eqref{slice_ds_def}, defined using the point $p=\gamma(\lambda_0)$. Setting the space-time \textit{exact} de Sitter metric,
\begin{equation}
ds^2_{\rm dS}:= \mathbf{g}_{\mathrm{dS}}^{(4)}:= -K_{\Lambda}^2d\lambda^2+(\mathbf{g}_{\mathrm{dS}})_{ij}dx^idx^j.
\end{equation}
Arguing as in \eqref{len_comp}, \eqref{time_like_spat_speed} and \eqref{len_comp_conc1}, we get that for every future-pointing timelike or null curve $\gamma:[a,b]\rightarrow M^{(3+1)}$ with $\lambda_0:=\lambda(\gamma(a))\geq \max\left(\lambda_{0,4},\lambda_{\ast\ast},\frac{12\log({ 12})}{K_{\Lambda}^2}\right)$, setting $\lambda_1=\lambda(\gamma(b))$ we get 
\begin{equation}
\left|L^{ds_4^2}[\gamma]-L^{ds^2_{\rm dS}}[\gamma]\right| \leq \frac{C_{13}}{K_{\Lambda}} e^{-\frac{1}{18}K_{\Lambda}^2\lambda_0}+ { 8}K_\Lambda e^{-\frac{1}{24}K_{\Lambda}^2\lambda_0} \left(\lambda_1-\lambda_0\right).
\end{equation}}}

We therefore conclude that the length of any future-oriented, timelike or null curve between two points converges exponentially fast to the same quantity evaluated with the de Sitter metric, as we take the lowest time of the two points, $\lambda_0$, larger and larger.

\section{Dilution of Matter}\label{dilution_sec}

We now show that the stress tensor goes to zero almost everywhere. We can bound  the integral over $z$ of $|T_{\mu\nu}n^\mu n^\nu|$. One can use eq.~(\ref{eq0}) and the WEC to write
\bea
&&16\pi G_N\int dz\; |T_{\mu\nu}n^\mu n^\nu|=16\pi G_N \int dz\; T_{\mu\nu}n^\mu n^\nu=\\ \nonumber
&&\qquad=\int dz\; \left({^{(3)}\!R} + \frac{2}{3} \left(K^2-K_\Lambda^2\right) - \sigma^2 \right)\leq C_{14} K_\Lambda e^{-\frac{1}{3}  K_\Lambda^2\lambda}\ ,
\eea
where in the last step we used the bounds (\ref{first_L}) and (\ref{quant_eq}) together with Theorem~\ref{length_growth_lemma}.
We defined $C_{14}:=(C_5+C_6) (1+\delta)K_\Lambda L(0)$.

Because of the DEC, $T_{\mu\nu} n^\mu n^\nu$ is at least as large as the absolute value of any other component of the stress tensor in an orthonormal frame where $n^\mu$ is the timelike vector~\footnote{This is actually an equivalent definition of the DEC \cite{Hawking:1973uf} as it is straightforward to verify.}. We therefore define a vierbein $e_\mu{}^{a}$, such that $g_{\mu\nu}^{(4)}=e_\mu{}^{a}e_\nu{}^{b}\eta_{ab}$, with $\eta_{ab}$ being the Minkowski metric. We choose $e_\mu{}^0=n_\mu$. By DEC, we have
 \be\label{eq:Tmunubound}
  16\pi G_N\int dz\;  \left| T_{\mu\nu} e^{\mu a}e^{\nu b}\right|\leq 16\pi G_N \int dz\; T_{\mu\nu}n^\mu n^\nu \leq  C_{14} K_\Lambda e^{-\frac{1}{3}  K_\Lambda^2\lambda}\ .
 \ee
Since, by the symmetries of the problem, $T_{\mu\nu}$ is uniform on the slices at constant $z$, we see that in almost-all of the ever-growing $z$-direction, $ G_NT_{\mu\nu}$ has to be at most of order~$K_\Lambda^2 \cdot {\cal {O}}(e^{-\frac{2}{3}K_\Lambda^2\lambda})\to 0$, while it can be of order $K_\Lambda^2$ only on a shell of $z$-thickness that shrinks as $e^{-\frac{1}{3}K_\Lambda^2\lambda}$ (or even faster if $T_{\mu\nu}$ gets larger) and therefore this shell is just a fraction of order $e^{-\frac{2}{3}K_\Lambda^2\lambda}$ of the extension of the $z$ direction. 

Notice that, by Einstein's equations, this means that a similar bound applies to $R_{\mu\nu}$.  In fact, we can take the Einstein equations and contract them with $e^{\mu a}e^{\nu b}$ 
  \bea
&&  {R}_{\mu\nu} e^{\mu a}e^{\nu b} =\left[8\pi G_N \left(T_{\mu\nu}- \frac{g_{\mu\nu}}{2} T\right)+\frac{1}{3}K_\Lambda^2 g_{\mu\nu}\right] e^{\mu a}e^{\nu b}\ .
 \eea
 Let us write $R_{\mu\nu}$ as $R_{\mu\nu}=R_{dS,\mu\nu}+\delta R_{\mu\nu}$, where $R_{dS,\mu\nu}=\frac{1}{3} K_\Lambda^2 g_{\mu\nu}$  is the Ricci tensor of de Sitter space with cosmological constant $\Lambda$. We obtain
 \bea
&&  {\delta R}_{\mu\nu}e^{\mu a}e^{\nu b} = \;8\pi G_N \left(T_{\mu\nu}- \frac{g_{\mu\nu}}{2} T\right)e^{\mu a}e^{\nu b} \ .
 \eea
We can now use the bound (\ref{eq:Tmunubound}) to write
  \bea
&& \int dz \;\left|  {\delta R}_{\mu\nu}e^{\mu a}e^{\nu b}\right| =\int dz \;8\pi G_N \,\left| T_{\mu\nu}e^{\mu a}e^{\nu b}-T \eta^{ab}\right|\leq \frac{3}{2} C_{14} K_\Lambda e^{-\frac{1}{3}  K_\Lambda^2\lambda}  \ .
 \eea

It is hard to imagine that  one can achieve a control on $T_{\mu\nu}$ which is better than this, without additional assumptions on the stress tensor and using arguments similar to the ones  presented in~\cite{Creminelli:2019pdh}. In particular one cannot hope for a pointwise convergence of the stress tensor (and thus of the Ricci tensor), since it is easy to come up with counterexamples. Indeed, one can imagine an alien population  living in spaceships and whose main purpose in life is to prevent pointwise convergence to de Sitter space.  While, by the symmetries of the problem, these aliens are constrained to be uniformly distributed on expanding surfaces, nothing prevents them from squeezing their spaceships fast enough in the $z$-direction, in order to keep the energy density constant in their surface-like ships. Therefore the stress tensor and the Ricci tensor do not need to go to zero everywhere. Furthermore, no physical law seems to prevent these aliens from splitting each of their spaceships into smaller ones at each Hubble time, $1/K_\Lambda$, creating thinner spaceships but keeping constant their energy density. In doing so and distributing the spaceships in the $z$-direction one can always have one spaceship in each region in the $z$-direction of size $\sim 1/K_\Lambda$. Thus one in general does not have pointwise convergence in any large portion of space.

The fact that $T_{\mu\nu}$ does not converge pointwise is not in contradiction with the pointwise convergence of the spatial metric~\footnote{This is peculiar of the setup we are discussing, where $T_{\mu\nu}$ can only depend on $z$. In a generic case without symmetries, a point-like localised mass, no matter how small, will affect the metric if one goes sufficiently close to it.}. For instance, if one considers an infinitesimally thin layer of matter localised at a certain value of $z$, the solution of the Einstein equations across this thin wall gives the so-called Israel junction conditions~\cite{Israel:1966rt}. The metric of this 2+1 dimensional surface is continuous across the wall and the jump in the extrinsic curvature of the wall, $K_{\alpha\beta}^+ - K_{\alpha\beta}^-$, is fixed by the surface stress tensor $S_{\alpha\beta}$ (the stress tensor integrated over a small interval in $z$ across the wall):
\be\label{Israel}
K_{\alpha\beta}^+ - K_{\alpha\beta}^- = 8 \pi G_N \left(S_{\alpha\beta}- \frac{g_{\alpha\beta}}{2} g^{\gamma\delta}S_{\gamma\delta}\right) \;.
\ee
(The indices $\alpha, \beta, \ldots$ span the (2+1)-dimensional space at fixed $z$ and $g_{\alpha\beta}$ is the induced metric on this space.)
The expansion of the thin wall in the directions orthogonal to $z$ will make the surface stress tensor go to zero, so that also the jump in the extrinsic curvature vanishes asymptotically, in agreement with the pointwise bound \eqref{H_bound}, which applies to the components of the extrinsic curvature on ${\cal M}_\lambda$. In particular one can check that when the thin wall saturates the SEC,  so that its surface stress tensor goes to zero as slowly as possible within our assumptions, the bound \eqref{H_bound} is also saturated, as expected~\footnote{An isotropic surface stress tensor, $S_{ij} = diag(\sigma, \Pi, \Pi)$, saturates the SEC if $\Pi = -1/2 \cdot \sigma$. This can be understood starting from an object with a finite extension in the $z$ direction. One can prove, using the conservation of the stress energy tensor (see for instance \cite{Hui:2010dn}), that $\int dz \,T_{zz} =0$, independently of the internal dynamics of the wall.  In $3+1$ dimensions for a diagonal stress tensor the SEC implies $\rho +p_i \ge 0$ and $\rho + \sum_i p_i \ge 0$, where $p_i$ are the pressures in the three spatial directions. If we now apply this to the integral over $z$ of the stress tensor we obtain the limit the saturates the SEC.
In de Sitter space the surface energy density dilutes as a consequence of the conservation of the stress tensor: $\dot\sigma = - 2 \cdot \frac{K_\Lambda}{3} (\sigma+ \Pi) = -\frac13 K_\Lambda \sigma$, when SEC is saturated.  This gives $\sigma \propto \exp(-\frac13 K_\Lambda^2 \lambda)$. Using \eqref{Israel}, this is indeed the same behaviour as the pointwise bound \eqref{H_bound}.}.

\section{Summary and Physical Equivalence to de Sitter}\label{equiv_de_sit_sec}

\paragraph{Summary:} We have considered 3+1 dimensional cosmologies satisfying the Einstein equations with a positive cosmological constant and matter satisfying the dominant and the strong energy conditions. We have assumed that the  only potential singularities are of the crushing kind, and that the spatial slices have homogeneous but {potentially} anisotropic 2-surfaces. 
{We used the mean curvature flow to probe the geometry: spacetime is foliated by the mean curvature flow surfaces and the flow parameter runs orthogonal to them. We proved that} the spatial part of the resulting metric converges pointwise to the one of de Sitter space in flat slicing {on balls whose radius becomes arbitrarily large, growing as $e^{\frac{1}{3}K_\Lambda\lambda}$, as the flow time $\lambda$ goes arbitrarily large.} The lapse function converges to the one of de Sitter almost everywhere. The gradient of the lapse function converges to zero almost everywhere only once averaged over an arbitrarily small, but non-vanishing, time. We have then shown that these results imply that the length of any future-oriented, timelike or null curve  between two points at late enough time converges exponentially to the same quantity computed with the de Sitter metric. We have also shown that all components of the stress tensor go to zero almost everywhere.   Let us now explain in which sense our findings imply physical equivalence to de Sitter space at late enough times.

\paragraph{Physical Equivalence to de Sitter Space:}  Let us start by discussing the role of the residual matter, which, by~(\ref{eq:Tmunubound}), does not necessarily go to zero pointwise. However, the fact that {future-oriented} null geodesics, at late enough times, behave as in de Sitter space tells us that at late times there is a cosmological horizon approaching the one of de Sitter space. Therefore, fixing a late enough time $\lambda_2$, an observer will  be able to gather  information in the future only from points that, at $\lambda_2$, are contained in a ball, $B_c(\lambda_2) \subset {\cal M}_{\lambda_2}$, of radius $4 \cdot 3/K_\Lambda$; the de Sitter horizon is $3/K_\Lambda$. (The extra factor of 4 is included to account for the difference between the actual size of the horizon and the one of de Sitter space and also for the motion of the observer. These corrections decay exponentially in $\lambda_2$, and we are taking $\lambda_2$ late enough.) At any time $\lambda\geq\lambda_2$, the integral on ${\cal M}_\lambda\cap y_\lambda(y^{-1}_{\lambda_2}(B_c({\lambda_2})))$ of any component of the stress tensor in an orthonormal frame, is bounded by
\bea\nonumber
&& 16\pi G_N\int_{{\cal M}_\lambda\cap y_\lambda(y^{-1}_{\lambda_2}(B_c({\lambda_2})))}\!\!\! |T_{\mu\nu}e^{\mu a} n^{\nu b}|\leq 16\pi G_N\int_{{\cal M}_\lambda\cap y_\lambda(y^{-1}_{\lambda_2}(B_c({\lambda_2})))}\!\!\! T_{\mu\nu}n^\mu n^\nu\leq\\  
 &&\qquad \leq\frac{\pi (12)^2  C_{14}}{K_\Lambda}  e^{-\frac{1}{3}  K_\Lambda^2\lambda}\leq\frac{\pi(12)^2  C_{14}}{K_\Lambda}  e^{-\frac{1}{3}  K_\Lambda^2\lambda_2} \ ,
\eea
where we used (\ref{eq:Tmunubound}) at time $\lambda$. We therefore see that the overall energy and momentum contained at any time $\lambda\geq\lambda_2$ in the ball of points that are causally connected to the center goes to zero as we send $\lambda_2\to +\infty$. Since any experiment has some finite energy or momentum threshold below which no measurement can be done, we conclude that the residual matter content is equivalent to vacuum for all physical purposes.

Let us now discuss in what sense our results show that the geometry is physically the same as the one of de Sitter space. We have shown that {future-oriented timelike and null} geodesics converge to the ones of de Sitter. The equivalence principle states that free-falling particles follow geodesics {of this kind}, so that from this point of view the spacetime is effectively asymptotically de Sitter. However the equivalence principle is only a low energy approximation: particles can be directly coupled to the Riemann tensor (consider for instance a coupling of a scalar field $\phi$ of the form $\int d^4x\sqrt{-g^{(4)}} \;R^{\mu\nu\rho\sigma}\partial_\mu\partial_\rho\phi\partial_\nu\partial_\sigma\phi/\Lambda_{\rm HE}^{4}$ with $\Lambda_{\rm HE}$ being some high-energy scale) and we do not have control of the Riemann tensor. This kind of effects are suppressed at low energy by powers of the ratio of the energy scale of the experiment over $\Lambda_{\rm HE}$: at long enough distances they can be neglected. Therefore the equivalence with de Sitter space holds in the low-energy regime, when the effects that violate the equivalence principle can be neglected. {On top of this, on extremely large distances, larger than a ball whose radius grows as $e^{\frac{1}{3}K_\Lambda^2\lambda}$, with $\lambda$ arbitrarily large, the geometry is indeed not the one of de Sitter, but, since there is a cosmological horizon, these are causally disconnected regions and a local observer cannot experience this departure from de Sitter~\footnote{If, instead of a cosmological constant, we had an inflationary field, the approximately de Sitter phase would end at some time, and sufficiently long time after that moment, these long distance regions would become observable again.}}.

\paragraph{Outlook:} We have offered a proof of a de Sitter no-hair theorem in 3+1 dimensions for the case where the spacetime manifold has spatial slices that can be foliated by {2-dimensional} surfaces that are the closed orbits of a symmetry group.  Concerning the inflationary  `initial patch problem', these results, together with the ones that we discussed in the introduction, and in particular the numerical ones, substantially resolve it: one does not need quasi homogeneous initial conditions on a volume whose linear size is of the order of the Hubble radius of the  inflationary solution for inflation to start. 

Clearly, it would be nice to get rid of some of the symmetry assumptions we made here,  to consider initial surfaces that are not expanding everywhere, and {to include in the setup a dynamical inflaton}. Work is in progress in these directions \cite{inprogress}. 

\section*{Acknowledgements}

We thank Matt Kleban, Shamit Kachru, Brett Kotschwar, Jonathan Luk, Richard Schoen for discussions. OH has been partially supported by a Koret Foundation early career scholar award. LS is partially supported by Simons Foundation Origins of the Universe program (Modern Inflationary Cosmology collaboration) and LS by NSF award 1720397. AV is partially supported by NSF award DMS-1664683. PC would like to thank the Stanford Institute for Theoretical Physics for hospitality and support during part of this work. LS and AV would like to thank the International Center for Theoretical Physics for hospitality and support during part of this work.

\begin{small}
 
 %\bibliographystyle{JHEP}
%\bibliography{SEFToI}{}
%\vfill
%\bibliographystyle{JHEP}
\bibliography{references}

%\bibliographystyle{JHEP}
%\bibliography{ref}

\end{small}

\end{document}